\documentclass[11pt,a4paper]{article}

\usepackage[a4paper,margin=1in]{geometry}
\usepackage{setspace}
\usepackage{amsmath,amssymb,amsthm,amsfonts,mathtools,bm}
\usepackage{mathrsfs}
\usepackage{bbm}
\usepackage{enumitem}
\usepackage{booktabs}
\usepackage{graphicx}
\usepackage{xcolor}
\usepackage[colorlinks=true]{hyperref}
\usepackage{subcaption}
\usepackage{authblk}
\usepackage{array}
\usepackage{multirow}
\usepackage{algorithm}
\usepackage{algpseudocode}
\usepackage{natbib}
\usepackage{amsthm}
\newtheorem{theorem}{Theorem}[section]
\newtheorem{proposition}[theorem]{Proposition}
\newtheorem{lemma}[theorem]{Lemma}
\newtheorem{corollary}[theorem]{Corollary}
\theoremstyle{definition}
\newtheorem{definition}[theorem]{Definition}
\newtheorem{remark}[theorem]{Remark}
\newtheorem{assumption}[theorem]{Assumption}

\newcommand{\E}{\mathbb{E}}
\newcommand{\Pbb}{\mathbb{P}}
\newcommand{\R}{\mathbb{R}}

\newcommand{\1}{\mathbbm{1}}
\newcommand{\Var}{\mathrm{Var}}

\newcommand{\argmax}{\operatorname*{arg\,max}}

\newcommand{\iid}{\stackrel{\mathrm{iid}}{\sim}}
\usepackage{authblk}
\title{A Hierarchical Bayesian Dynamic Game for Competitive Inventory and Pricing under Incomplete Information: Learning, Credible Risk, and Equilibrium}

\author[1]{Debashis Chatterjee\thanks{Corresponding author: \texttt{debashis.chatterjee@visva-bharati.ac.in}}}

\affil[1]{Department of Statistics, Visva-Bharati University, Santiniketan, India}
\affil[2]{S.\ N.\ Bose National Centre for Basic Sciences, Kolkata, India}

\begin{document}
\maketitle

\begin{abstract}
We develop a hierarchical Bayesian dynamic game for competitive inventory and pricing under incomplete information. Two firms repeatedly choose order quantities and prices while facing two layers of uncertainty: unknown market demand and private rival characteristics. The framework combines Bayesian learning about demand and substitution with strategic belief updating about rival types. To make decisions robust to posterior uncertainty, we introduce a credible-risk criterion that rewards expected future profit while penalizing posterior predictive dispersion. This yields a conservative equilibrium concept in which firms learn, compete, and adapt simultaneously.

The paper provides the model formulation, information structure, posterior updating mechanism, equilibrium definition, and a computational strategy based on belief-state dynamic programming. A simulation study shows that Bayesian learning is crucial for strong performance and that the credible-risk rule is especially effective as an operational regularizer under uncertainty. A real-data illustration on a high-dimensional protein-expression dataset demonstrates that the same uncertainty-aware Bayesian principle can produce biologically interpretable subgroup and latent-state findings. The proposed framework offers a unified bridge between Bayesian game theory and operations research, with practical relevance for competitive decision-making in uncertain and information-limited environments.
\end{abstract}

\noindent\textbf{Keywords:} Bayesian games; dynamic competition; incomplete information; Bayesian learning; competitive inventory; pricing under uncertainty; credible-risk decision rule; equilibrium analysis; operations research; uncertainty-aware inference

\section{Introduction}

Game theory studies strategic interaction among rational decision makers, whereas Bayesian analysis provides a coherent mechanism for representing and updating uncertainty. Their synthesis is natural when players must act strategically without complete knowledge of the environment or of each other. The foundational insight is due to \citet{Harsanyi1967I,Harsanyi1968III}, who represented incomplete information by introducing \emph{types} drawn by Nature and by replacing the original strategic problem with a game of imperfect information. This led to the modern notion of a Bayesian game and Bayesian Nash equilibrium. Subsequent work formalized hierarchies of beliefs and type spaces \citep{MertensZamir1985,AumannHeifetz2002}, while Bayesian learning in static and repeated games was investigated by \citet{Jordan1991,Jordan1995}.

In operations research, many competitive decision problems are intrinsically game-theoretic. Inventory and pricing problems become games whenever the action of one firm affects the demand or profit of another. This is particularly true when products are substitutes, stockouts generate demand spillovers, or customers switch across firms. Important OR contributions include stochastic inventory games under substitution \citep{AvsarBaykalGursoy2002}, competitive newsvendor models \citep{Serin2007}, and Bayesian inventory models under unknown demand distributions \citep{Azoury1985,KamathPakkala2002,ZhangChen2006,ZhangZhangHua2016,ZhouLuo2023}. Yet, the existing literature still leaves substantial room for a unified framework that simultaneously captures:
\begin{enumerate}[label=(\roman*)]
    \item strategic competition among multiple firms,
    \item private information about rivals' operational types,
    \item sequential Bayesian learning of latent demand and substitution,
    \item explicit risk penalization induced by posterior uncertainty.
\end{enumerate}

This paper proposes such a framework. We focus on a repeated duopolistic market in which two firms sell substitutable products over a finite or infinite horizon. Each firm knows its own operational type---for example, a private marginal procurement cost, service efficiency, or salvage capability---but does not know the rival's true type. At the same time, neither firm fully knows the common demand environment: market size, cross-price sensitivity, and stockout substitution intensity are partially unknown. Firms observe realized sales and stockouts over time, update beliefs by Bayes' theorem, and then choose subsequent order quantities and prices strategically.

The paper sits at the intersection of four literatures.

First, the general theory of Bayesian games begins with \citet{Harsanyi1967I,Harsanyi1968III}, who formulated games with incomplete information via type distributions. The epistemic and formal belief-space treatment was later sharpened by \citet{MertensZamir1985} and surveyed in \citet{AumannHeifetz2002}.

Second, Bayesian learning in games studies how beliefs evolve through repeated strategic interaction. Important contributions include \citet{Jordan1991} for normal-form games and \citet{Jordan1995} for repeated games.

Third, competitive inventory and substitution games in OR include \citet{AvsarBaykalGursoy2002}, who analyze inventory control under substitutable demand as a stochastic game, and \citet{Serin2007}, who study competitive newsvendor behavior.

Fourth, Bayesian demand learning in inventory and pricing appears in \citet{Azoury1985}, \citet{KamathPakkala2002}, \citet{ZhangChen2006}, and more recent dynamic updating papers such as \citet{ZhangZhangHua2016,ZhouLuo2023}. These papers are highly relevant but generally focus either on a single decision maker or on competition without a full Harsanyi-type private-information structure.

Our work fills the gap by synthesizing these strands into a single dynamic equilibrium framework.

\subsection{Research Objective \& Novelty}

The main objective of this paper is to develop a coherent Bayesian game-theoretic framework for repeated competitive inventory and pricing decisions when firms must learn about the market while simultaneously reasoning about strategic rivals. More specifically, the paper aims to unify four ingredients within a single model: incomplete information about rival characteristics, sequential learning about market demand and substitution, state-dependent strategic decision-making, and conservative action selection under posterior uncertainty.

The goal is not merely to append Bayesian updating to a standard inventory problem. Rather, the purpose is to build a dynamic equilibrium framework in which beliefs themselves become part of the evolving state, so that strategic behavior depends on both operational variables and accumulated information. In this sense, the paper seeks to place Bayesian learning inside the game rather than outside it.

A second objective is methodological. We aim to show that uncertainty should not only be estimated but also used normatively in decision-making. The proposed credible-risk principle therefore converts posterior uncertainty into disciplined strategic behavior. This allows the framework to speak both to equilibrium theory and to practical decision support in operations settings where aggressive actions under weak information can be costly.

\paragraph{Novelty of the Proposed Paper}

The novelty of this work lies in the way it connects Bayesian game theory, sequential learning, and operations research within a single dynamic framework.

First, the paper formulates competitive inventory and pricing as a Bayesian dynamic game with both market uncertainty and strategic uncertainty. The firms do not only learn demand conditions; they also update beliefs about rival characteristics over time. This joint treatment is more structurally integrated than approaches that study learning and competition separately.

Second, the paper introduces a credible-risk decision principle that explicitly penalizes posterior uncertainty. This gives the model a normative uncertainty-aware component: firms do not merely estimate the future, but act conservatively when posterior uncertainty remains substantial. This feature is especially relevant in repeated operational settings where overconfident actions may lead to avoidable losses.

Third, the model is expressed through an augmented belief-state representation, allowing learning and competition to be studied in a unified dynamic programming framework. This produces a clean connection between Bayesian filtering, strategic choice, and equilibrium analysis.

The paper supports the theory with both simulation-based evidence in the intended competitive setting and a real-data illustration that demonstrates the broader usefulness of the underlying credible-risk Bayesian principle in complex high-dimensional data analysis.


\section{Model Formulation}

\subsection{Players, Horizon, and Actions}

Consider two firms, indexed by $i \in \{1,2\}$, competing over periods $t=1,\dots,T$. Firm $i$ chooses at each period:
\[
a_{it} = (q_{it}, p_{it}),
\]
where
\begin{itemize}
    \item $q_{it} \in \mathcal{Q}_i \subset \R_+$ is the order quantity,
    \item $p_{it} \in \mathcal{P}_i \subset \R_+$ is the selling price.
\end{itemize}
Define $a_t=(a_{1t},a_{2t})$.

Let $I_{it}$ denote the beginning-of-period inventory of firm $i$. Then available inventory after ordering is
\[
S_{it}=I_{it}+q_{it}.
\]

\subsection{Private Types}

Each firm has a private type
\[
\tau_i = (c_i,h_i,s_i) \in \mathcal{T}_i,
\]
where:
\begin{itemize}
    \item $c_i$ = marginal procurement cost,
    \item $h_i$ = holding cost,
    \item $s_i$ = salvage or disposal value.
\end{itemize}
Nature draws $(\tau_1,\tau_2)$ from a common prior
\[
\mu_0(\tau_1,\tau_2).
\]
Firm $i$ observes only $\tau_i$, not $\tau_j$ for $j\neq i$.

This is the first Bayesian layer: incomplete information about rivals' operational characteristics.

\subsection{Latent Demand State}

Let the common market state be
\[
\theta = (\alpha_0,\alpha_1,\alpha_2,\beta,\lambda,\sigma^2),
\]
where:
\begin{itemize}
    \item $\alpha_0$ is the baseline market size,
    \item $\alpha_i$ is a product-specific preference parameter,
    \item $\beta >0$ captures cross-price substitution,
    \item $\lambda \ge 0$ captures stockout-driven spillover demand,
    \item $\sigma^2$ is demand-noise variance.
\end{itemize}
The vector $\theta$ is not fully known to the firms. It is assigned a prior distribution
\[
\pi_0(\theta).
\]
This is the second Bayesian layer: statistical uncertainty about market primitives.

\subsection{Demand System}

We propose the following structural demand equation:
\begin{equation}
D_{it}^{\ast}
=
\alpha_0 + \alpha_i - \eta p_{it} + \beta p_{jt} + \lambda Z_{jt} + \varepsilon_{it},
\qquad i\neq j,
\label{eq:latentdemand}
\end{equation}
where:
\begin{itemize}
    \item $\eta>0$ is own-price sensitivity,
    \item $Z_{jt}=\1\{D_{jt}^{\ast}>S_{jt}\}$ indicates rival stockout,
    \item $\varepsilon_{it}\iid N(0,\sigma^2)$ conditionally on $\theta$.
\end{itemize}

Observed sales are censored by available stock:
\begin{equation}
Y_{it} = \min\{D_{it}^{\ast},\, S_{it}\}.
\label{eq:sales}
\end{equation}
End-of-period inventory evolves as
\begin{equation}
I_{i,t+1} = S_{it}-Y_{it}.
\label{eq:inventory_transition}
\end{equation}

\begin{remark}
Equation \eqref{eq:latentdemand} captures both \emph{price competition} through $\beta p_{jt}$ and \emph{substitution induced by stockouts} through $\lambda Z_{jt}$. This is richer than the standard single-period newsvendor model because strategic substitution arises through both prices and inventory availability.
\end{remark}

\section{Information Structure and Belief States}

\subsection{Histories}

Let the public history at time $t$ be
\[
H_t^c = \{(p_{1\ell},p_{2\ell},Y_{1\ell},Y_{2\ell},Z_{1\ell},Z_{2\ell}) : \ell=1,\dots,t-1\}.
\]
Firm $i$'s private information at time $t$ is
\[
H_{it}= (H_t^c,\tau_i, I_{it}).
\]
Hence, the strategic state includes both public observations and private type information.

\subsection{Posterior Beliefs}

At each period, firms maintain:
\begin{enumerate}[label=(\alph*)]
    \item a posterior on the common market state,
    \[
    \pi_t(\theta)=\Pbb(\theta\mid H_t^c),
    \]
    \item a posterior on the rival's type,
    \[
    \mu_{it}(\tau_j)=\Pbb(\tau_j\mid H_t^c,\tau_i),
    \qquad j\neq i.
    \]
\end{enumerate}

Thus, the augmented state relevant for optimal play is
\[
X_{it}=(I_{it},\tau_i,\pi_t,\mu_{it}).
\]

\begin{definition}[Belief-state Markov representation]
The dynamic game is said to admit a belief-state Markov representation if, conditional on $X_t=(X_{1t},X_{2t})$ and current actions $a_t$, the distribution of $(X_{1,t+1},X_{2,t+1})$ is independent of earlier histories.
\end{definition}

This representation is fundamental because it allows the use of dynamic programming over a state space enlarged by posterior hyperparameters.

\section{Hierarchical Bayesian Specification}

\subsection{Prior Structure}

For concreteness, suppose
\[
\bm{\psi}:=(\alpha_0,\alpha_1,\alpha_2,\eta,\beta,\lambda)^\top
\sim N_6(m_0,\Sigma_0),
\qquad
\sigma^2 \sim \mathrm{IG}(a_0,b_0),
\]
with prior independence between $\bm{\psi}$ and $\sigma^2$.

For private costs, suppose
\[
c_i \in \{c_L,c_H\}, \qquad \Pbb(c_i=c_H)=\rho_i,
\]
and similarly for other type components if discretized. This yields computational tractability for interim beliefs on rival types.

\subsection{Likelihood}

Given actions and stockout indicators, the realized uncensored demand equation can be written as
\[
D_{it}^{\ast} = x_{it}^\top \bm{\psi} + \varepsilon_{it},
\]
with covariate vector
\[
x_{it} = (1,\1\{i=1\},\1\{i=2\},-p_{it},p_{jt},Z_{jt})^\top.
\]
If uncensored demand were observed, posterior updating would follow standard normal-inverse-gamma recursions. Under censored sales, one may use data augmentation by treating the latent $D_{it}^{\ast}$ as missing when stockout occurs.

\subsection{Posterior Updating with Data Augmentation}

If $Y_{it}<S_{it}$, then no censoring occurs and
\[
D_{it}^{\ast}=Y_{it}.
\]
If $Y_{it}=S_{it}$, then
\[
D_{it}^{\ast}\ge S_{it},
\]
so the latent demand is right-truncated normal:
\[
D_{it}^{\ast}\mid \bm{\psi},\sigma^2,H_t^c
\sim N(x_{it}^\top\bm{\psi},\sigma^2)\;\text{truncated to } [S_{it},\infty).
\]
Therefore a Gibbs-type posterior recursion becomes possible:
\begin{enumerate}[label=(\roman*)]
    \item sample latent $D_{it}^{\ast}$ for censored observations;
    \item update $\bm{\psi}\mid \sigma^2, D^\ast$ from a multivariate normal law;
    \item update $\sigma^2\mid \bm{\psi}, D^\ast$ from an inverse-gamma law.
\end{enumerate}

\begin{remark}
This Bayesian filtering step allows demand learning to be integrated directly into the game state. The posterior hyperparameters, rather than the full raw history, become sufficient state descriptors.
\end{remark}

\section{Profit Structure}

The one-period profit of firm $i$ is
\begin{equation}
\Pi_i(a_t,\theta,\tau_i)
=
p_{it}Y_{it}
-c_i q_{it}
-h_i(I_{i,t+1})^{+}
+s_i(I_{i,t+1})^{-}_{\text{disp}},
\label{eq:profit}
\end{equation}
where the last term represents salvage/disposal gains or costs depending on sign convention.

For a standard inventory interpretation, we may simply write
\[
\Pi_i
=
p_{it}Y_{it}
-c_i q_{it}
-h_i(I_{i,t+1})
+s_i I_{i,t+1},
\]
if $I_{i,t+1}\ge 0$ always. The exact bookkeeping convention is flexible.

\section{Credible-Risk Objective}

A central contribution of this paper is to define a \emph{credible-risk} objective that explicitly accounts for posterior predictive uncertainty. Given current belief-state $X_{it}$ and strategy profile $\sigma=(\sigma_1,\sigma_2)$, define the continuation value
\[
V_i^{\sigma}(X_{it})
=
\E^{\sigma}\Bigg[\sum_{u=t}^{T}\delta^{u-t}\Pi_i(a_u,\theta,\tau_i)\,\Bigg|\,X_{it}\Bigg].
\]
Classical Bayesian expected-utility maximization uses $\E[V_i^\sigma \mid X_{it}]$. We instead define
\begin{equation}
\mathcal{J}_i^{\sigma}(X_{it})
=
\E\!\left[V_i^\sigma(X_{it})\mid X_{it}\right]
-
\kappa_i
\sqrt{\Var\!\left(V_i^\sigma(X_{it})\mid X_{it}\right)},
\label{eq:credible_objective}
\end{equation}
where $\kappa_i\ge 0$ is a firm-specific uncertainty-aversion parameter.

\begin{remark}
The penalty in \eqref{eq:credible_objective} converts posterior uncertainty into conservative action. It is not ambiguity aversion in the max--min sense; rather, it is a Bayesian posterior predictive risk adjustment. When $\kappa_i=0$, we recover the usual expected Bayesian payoff.
\end{remark}

\section{Strategy Spaces and Equilibrium}

\subsection{Behavioral Strategies}

A (Markov) behavioral strategy of firm $i$ is a measurable map
\[
\sigma_i:\mathcal{X}_i \to \Delta(\mathcal{A}_i),
\]
where $\mathcal{X}_i$ is the belief-state space and $\Delta(\mathcal{A}_i)$ is the set of probability measures on the action space $\mathcal{A}_i=\mathcal{Q}_i\times\mathcal{P}_i$.

In deterministic settings one may write
\[
a_{it} = \sigma_i(X_{it}).
\]

\subsection{Interim Best Response}

Fix rival strategy $\sigma_j$. The interim best response of player $i$ solves
\begin{equation}
\sigma_i^\ast
\in
\argmax_{\sigma_i}
\mathcal{J}_i^{(\sigma_i,\sigma_j)}(X_{it})
\qquad \text{for all admissible } X_{it}.
\label{eq:bestresponse}
\end{equation}

\begin{definition}[Credible-risk Markov Perfect Bayesian Nash Equilibrium]
A strategy profile $\sigma^\ast=(\sigma_1^\ast,\sigma_2^\ast)$ is a \emph{credible-risk Markov perfect Bayesian Nash equilibrium} (CR-MPBNE) if:
\begin{enumerate}[label=(\alph*)]
    \item beliefs are updated according to Bayes' rule whenever possible;
    \item each $\sigma_i^\ast$ is measurable with respect to the current belief state $X_{it}$;
    \item for every firm $i$ and every admissible state $X_{it}$,
    \[
    \mathcal{J}_i^{(\sigma_i^\ast,\sigma_j^\ast)}(X_{it})
    \ge
    \mathcal{J}_i^{(\sigma_i,\sigma_j^\ast)}(X_{it})
    \quad \forall \sigma_i.
    \]
\end{enumerate}
\end{definition}

\section{Dynamic Programming Characterization}

Let $W_i(X_{it})$ denote firm $i$'s equilibrium value function. Then the Bellman system is
\begin{align}
W_i(X_{it})
=
\max_{a_{it}\in\mathcal{A}_i}
\Bigg\{
&\E\!\left[
\Pi_i(a_{it},a_{jt},\theta,\tau_i)\mid X_{it}
\right]
-
\kappa_i
\sqrt{
\Var\!\left[
\Pi_i(a_{it},a_{jt},\theta,\tau_i)\mid X_{it}
\right]
}
\nonumber\\
&\quad +\delta\,
\E\!\left[
W_i(X_{i,t+1})\mid X_{it},a_{it},a_{jt}
\right]
\Bigg\},
\label{eq:bellman}
\end{align}
where $a_{jt}$ is induced by $\sigma_j^\ast(X_{jt})$.

Because $X_{i,t+1}$ includes posterior updates, the continuation term incorporates both physical inventory evolution and information accumulation. This is precisely where Bayesian learning and game theory become inseparable. Under strengthened compactness and continuity conditions, the corresponding Bellman operator is well defined on bounded continuous value functions and admits the fixed-point properties used in Appendix~\ref{app:proofs}.

\section{Structural Assumptions}

To ensure existence and tractability, we impose the following assumptions.

\begin{assumption}[Compact action spaces]
For each $i$,
\[
\mathcal{Q}_i=[0,\bar q_i], \qquad \mathcal{P}_i=[\underline p_i,\bar p_i]
\]
for some finite constants $0<\underline p_i<\bar p_i<\infty$ and $\bar q_i<\infty$.
\end{assumption}

\begin{assumption}[Regularity of demand]
For every admissible state and action pair, conditional demand moments are finite and continuous in actions.
\end{assumption}

\begin{assumption}[Concavity]
For fixed rival action and current beliefs, the posterior mean of the one-step reward plus discounted continuation value is concave in $(q_{it},p_{it})$.
\end{assumption}

\begin{assumption}[Bounded credible penalty]
The posterior predictive variance term in \eqref{eq:credible_objective} is finite and continuous in actions.
\end{assumption}
\medskip

For the rigorous existence proof reported in Appendix~\ref{app:proofs}, we impose a strengthened set of regularity conditions, including compact metric belief-state spaces, bounded continuity of the one-period credible-risk return, weak continuity of the controlled transition kernels, and the use of mixed Markov strategies to ensure convex-valued best-response correspondences. The assumptions stated here capture the main economic structure, while the appendix records the stronger technical conditions required for a fully formal proof.

\section{Existence of Equilibrium}

We first state the main existence results at a high level. Their fully rigorous versions, under strengthened technical regularity conditions, are provided in Appendix~\ref{app:proofs}.

\begin{proposition}[Existence of interim best response]
\label{prop:bestresponse_main}
Suppose the action space $\mathcal{A}_i$ is compact, the one-period credible-risk return is continuous in own action, and the continuation value is bounded and continuous. Then, for any fixed measurable rival strategy $\sigma_j$, player $i$'s interim optimization problem admits at least one maximizer at every admissible state.
\end{proposition}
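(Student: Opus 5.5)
The plan is to reduce the claim to the Weierstrass extreme value theorem applied state by state. Fix a measurable rival strategy $\sigma_j$ and an admissible state $X_{it}$. The interim problem is the one-shot maximization inside the Bellman system \eqref{eq:bellman}, with $a_{jt}$ drawn from $\sigma_j(X_{jt})$ and integrated against firm $i$'s posterior $\mu_{it}$ over $\tau_j$ and $\pi_t$ over $\theta$. Define the objective
\[
G_i(a_{it};X_{it}) \;=\; R_i(a_{it};X_{it}) + \delta\,\E\!\left[W(X_{i,t+1})\mid X_{it},a_{it},\sigma_j\right],
\]
where $R_i$ is the one-period credible-risk return (posterior mean minus $\kappa_i$ times the posterior standard deviation of $\Pi_i$) and $W$ is the given bounded continuous continuation value. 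It then suffices to show three things: $\mathcal{A}_i$ is nonempty and compact; $G_i(\cdot;X_{it})$ is real-valued; and $G_i(\cdot;X_{it})$ is upper semicontinuous on $\mathcal{A}_i$. Full continuity is more than we need.

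\textbf{Step 1: compactness of the action set.} This follows from Assumption (compact action spaces): $\mathcal{A}_i=[0,\bar q_i]\times[\underline p_i,\bar p_i]$ is a product of two compact intervals, hence compact and nonempty. Alternatively, one can simply use the compactness hypothesis stated in the proposition.

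\textbf{Step 2: the one-period term.} Continuity of $R_i$ in own action is a hypothesis of the proposition. I would note that it is consistent with Assumptions (regularity of demand) and (bounded credible penalty). Those give continuity of the posterior mean and variance, and $x\mapsto\sqrt{x}$ is continuous on $[0,\infty)$. Hence $R_i$ is continuous and finite.

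\textbf{Step 3: the continuation term (main obstacle).} The hypothesis says "the continuation value is bounded and continuous." Read literally as continuity of $W$ in the state, this does not by itself give continuity of $a_{it}\mapsto \E[W(X_{i,t+1})\mid X_{it},a_{it},\sigma_j]$. That map also requires the transition kernel to depend weakly continuously on $a_{it}$. The dependence enters through $S_{it}=I_{it}+q_{it}$, the censoring events, and the posterior-update map.

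I would handle this in one of two ways, stated explicitly:
\begin{itemize}
\item[(i)] Interpret the hypothesis as continuity of the composite map $a_{it}\mapsto \E[W(X_{i,t+1})\mid\cdot]$. Under this reading the step is immediate.
\item[(ii)] Invoke the weak (Feller) continuity of the controlled kernel recorded in the strengthened appendix conditions. For bounded continuous $W$, the map $a\mapsto\int W\,dP(\cdot\mid X_{it},a,\sigma_j)$ is then continuous by the definition of weak convergence. Boundedness of $W$ also guarantees finiteness.
\end{itemize}
The rival's mixed action $\sigma_j(X_{jt})$ does not depend on $a_{it}$, so integrating over it and over $\mu_{it}$ preserves continuity. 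This follows by dominated convergence, using the bound $\sup|W|$ together with the finite moments of $\Pi_i$.

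\textbf{Step 4: conclusion.} The sum $G_i(\cdot;X_{it})$ is continuous on the compact set $\mathcal{A}_i$. By Weierstrass it attains its maximum, so the argmax set is nonempty. Since $X_{it}$ was arbitrary, a maximizer exists at every admissible state.

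Measurable selection of a maximizer across states is not required by the statement. If it is wanted later for constructing a Markov strategy, it follows from the measurable maximum theorem applied to the Carathéodory function $G_i$.
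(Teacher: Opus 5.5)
Your proposal is correct and follows essentially the same route as the paper. Both arguments apply Weierstrass on the compact action set. Both take continuity of the one-period credible-risk return as given, and both get continuity of the expected continuation term from weak continuity of the transition kernel applied to a bounded continuous $W$; this is exactly the appendix version (Proposition~\ref{prop:bestresponse_appendix}, via Assumption~\ref{ass:weak_transition} and the portmanteau theorem). You also correctly note that the literal main-text hypothesis needs this kernel continuity (or the composite reading). That is precisely the step the paper's short main-text sketch glosses over and its appendix supplies.
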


\begin{proof}
The interim objective is continuous in own action because the one-period credible-risk return is continuous and the continuation value is bounded and continuous. Since the action space is compact, the Weierstrass extreme value theorem implies that the objective attains its maximum. A fully rigorous proof under strengthened regularity assumptions is given in Appendix~\ref{app:proofs}; see Proposition~\ref{prop:bestresponse_appendix}.
\end{proof}

\begin{theorem}[Existence of CR-MPBNE under strengthened regularity conditions]
\label{thm:equilibrium_main}
Suppose the belief-state spaces are compact metric spaces, the action spaces are compact, the one-period credible-risk return is bounded and continuous, the controlled transition kernels are weakly continuous, and players are allowed to use mixed Markov strategies so that the corresponding best-response correspondences are nonempty, convex-valued, and upper hemicontinuous. Then there exists at least one credible-risk Markov perfect Bayesian Nash equilibrium.
\end{theorem}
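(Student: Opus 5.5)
The plan is to reduce existence of a CR-MPBNE to a fixed point of a best-response correspondence on the space of mixed Markov strategy profiles, and to apply the Kakutani--Fan--Glicksberg theorem. For finite $T$ I will argue by backward induction. For $T=\infty$ with $\delta<1$, I will combine a contraction argument for each player's value function with a fixed point in strategies.

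First, I fix the objects. Each $\mathcal{X}_i$ is a compact metric space, and each $\mathcal{A}_i=\mathcal{Q}_i\times\mathcal{P}_i$ is compact. The strategy set $\Sigma_i$ is the set of measurable maps $\mathcal{X}_i\to\Delta(\mathcal{A}_i)$. I equip it with a topology that makes it a nonempty, convex, compact subset of a locally convex space: for instance Young-measure (narrow) convergence with respect to a reference measure on $\mathcal{X}_i$, or, if one wants to avoid that, restrict to a finite grid of belief states first and pass to a limit. Convexity is immediate because mixtures of behavioral strategies are again behavioral strategies.

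Second, I fix the rival strategy $\sigma_j$ and define player $i$'s Bellman operator $T_i^{\sigma_j}$ on $C_b(\mathcal{X}_i)$ as in \eqref{eq:bellman}, with $a_{jt}$ integrated out against $\sigma_j$. Several properties hold on $C_b(\mathcal{X}_i)$:
\begin{itemize}
\item The one-period credible-risk return is bounded and continuous by hypothesis.
\item Weak continuity of the controlled kernels makes the continuation term $\E[W(X_{i,t+1})\mid X_{it},a]$ continuous in $(X_{it},a)$ whenever $W\in C_b(\mathcal{X}_i)$.
\item By Berge's maximum theorem, $T_i^{\sigma_j}W$ is therefore continuous, and the argmax correspondence is nonempty, compact-valued and upper hemicontinuous; nonemptiness is exactly Proposition~\ref{prop:bestresponse_main}.
\item The operator is monotone and satisfies discounting, so it is a $\delta$-contraction by Blackwell's conditions and has a unique fixed point $W_i^{\sigma_j}$.
\end{itemize}
For finite $T$ I replace the contraction step by backward recursion, which preserves boundedness and continuity at every stage.

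Third, I define the best-response correspondence $BR_i(\sigma_j)$ as the set of mixed Markov strategies that are supported, at every state, on the argmax set of the Bellman equation evaluated at $W_i^{\sigma_j}$. Nonemptiness of $BR_i(\sigma_j)$ requires a measurable selection, which the Kuratowski--Ryll-Nardzewski theorem supplies. Convexity holds because the set of measures supported on a fixed argmax set is convex. With $BR=(BR_1,BR_2)$, Kakutani--Fan--Glicksberg gives a fixed point $\sigma^\ast$. At that fixed point condition (c) of the definition holds through the dynamic programming principle. Condition (b) holds by construction, and condition (a) holds because the posterior filters are built into the transition kernels.

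The main obstacle is closed graph, that is, upper hemicontinuity of $BR$ in $\sigma_j$. This needs $W_i^{\sigma_j}$ to depend continuously on $\sigma_j$. I would try to prove it by showing that the map $(\sigma_j,W)\mapsto T_i^{\sigma_j}W$ is jointly continuous, using the weak continuity of the kernels together with narrow convergence of $\sigma_j$. Continuity of the contraction's fixed point in the parameter then follows from the uniform contraction modulus $\delta$. A further subtlety is that the credible-risk penalty is nonlinear in the distribution of the rival's mixed action, because of the square root of a variance. To handle this, I would treat the variance term as a continuous functional of the induced joint law, which keeps the one-period return continuous. Here I rely on the statement's hypotheses that this return is bounded and continuous and that the best responses are convex-valued. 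Those hypotheses are what keep the fixed-point step from breaking.
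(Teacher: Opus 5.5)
Your proposal follows essentially the same route as the paper's proof. The Bellman operator on $C_b(\mathcal{X}_i)$ is a $\delta$-contraction; its fixed point, together with the interim best-response existence result, makes the best-response correspondence nonempty, and a Kakutani-type fixed point on the product of mixed Markov strategy spaces then gives the equilibrium. Your extra details (Kakutani--Fan--Glicksberg for the infinite-dimensional strategy spaces, a Kuratowski--Ryll-Nardzewski measurable selection, backward induction for finite $T$) are refinements the paper leaves implicit. The closed-graph property you call the main obstacle need not be proved, since upper hemicontinuity and convex-valuedness are hypotheses of the theorem and the paper's proof simply invokes them.
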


\begin{proof}
Under the stated regularity conditions, each player's Bellman operator is well defined on bounded continuous value functions, interim best responses exist, and the best-response correspondence is nonempty. With mixed strategies, convex-valuedness is obtained, and upper hemicontinuity follows from the continuity assumptions. Kakutani's fixed point theorem then yields an equilibrium fixed point. A rigorous proof under the strengthened assumptions is given in Appendix~\ref{app:proofs}; see Theorem~\ref{thm:equilibrium_appendix}.
\end{proof}

\section{Special Cases}

Our framework nests several important models.

\subsection{Classical Competitive Newsvendor Game}

If:
\begin{itemize}
    \item the horizon is single-period,
    \item prices are fixed exogenously,
    \item $\kappa_i=0$,
    \item all demand parameters are known,
    \item types are common knowledge,
\end{itemize}
then the model reduces to a standard competitive newsvendor problem akin to \citet{Serin2007}.

\subsection{Stochastic Inventory Game with Substitution}

If:
\begin{itemize}
    \item demand parameters are known,
    \item only inventories evolve dynamically,
    \item there is no posterior learning,
\end{itemize}
then the model becomes a stochastic inventory game of the type studied in \citet{AvsarBaykalGursoy2002}.

\subsection{Bayesian Single-Agent Inventory Learning}

If competition is removed and only one firm remains, then the model collapses to Bayesian inventory-learning formulations such as \citet{Azoury1985,KamathPakkala2002,ZhangChen2006}.

\section{Computation}

Because the exact belief-state space may be large, we propose a practical computational pipeline.

\subsection{Posterior Compression}

Represent the posterior $\pi_t(\theta)$ by hyperparameters
\[
\zeta_t=(m_t,\Sigma_t,a_t,b_t),
\]
or by particle approximations if nonconjugate features are present.

\subsection{Type Belief Update}

For discrete rival type space $\mathcal{T}_j=\{\tau_j^{(1)},\dots,\tau_j^{(K)}\}$,
\[
\mu_{it}^{(k)}
\propto
\mu_{i,t-1}^{(k)}\,
L\!\left(H_t^c\mid \tau_j^{(k)},\tau_i,\zeta_{t-1}\right),
\qquad k=1,\dots,K,
\]
followed by normalization.

\subsection{Approximate Dynamic Programming}

The value function may be approximated by:
\begin{itemize}
    \item grid-based value iteration,
    \item fitted value iteration,
    \item policy iteration,
    \item simulation-based Bayesian dynamic programming.
\end{itemize}

\begin{algorithm}[h!]
\caption{Posterior-Learning Equilibrium Iteration}
\begin{algorithmic}[1]
\State Initialize prior hyperparameters $\zeta_0$, type beliefs $\mu_{i0}$, and initial policies $\sigma_i^{(0)}$
\For{$m=0,1,2,\dots$ until convergence}
    \For{each representative belief-state grid point $X$}
        \For{$i=1,2$}
            \State Compute posterior predictive sales/profit moments under $\sigma_j^{(m)}$
            \State Solve the credible-risk best response:
            \[
            \sigma_i^{(m+1)}(X)
            =
            \argmax_{a_i}
            \left\{
            \text{posterior mean payoff}
            -\kappa_i \times \text{posterior s.d.}
            + \delta \times \text{continuation value}
            \right\}
            \]
        \EndFor
    \EndFor
    \State Simulate trajectories, update $\zeta_t$ and $\mu_{it}$ by Bayes' rule
\EndFor
\State Output approximate equilibrium policies $\widehat{\sigma}_1,\widehat{\sigma}_2$
\end{algorithmic}
\end{algorithm}

\section{Interpretation and Managerial Implications}

The model yields several qualitative implications.

\begin{enumerate}[label=(M\arabic*)]
    \item \textbf{Posterior uncertainty suppresses aggressive competition.}
    When $\kappa_i$ is large or posterior variance is high, firms behave more conservatively, avoiding excessive stocking and extreme pricing.

    \item \textbf{Learning can intensify or soften competition.}
    Better learning about a high-demand market may encourage aggressive stocking. By contrast, learning that substitution intensity is weak may reduce competitive pressure.

    \item \textbf{Private-cost asymmetry creates strategic screening.}
    A low-cost firm can act more aggressively, and over time its actions may reveal information about its type, indirectly altering the rival's beliefs.

    \item \textbf{Inventory and pricing become informational signals.}
    In a repeated market, order quantities and prices are not merely operational decisions; they also reveal information about beliefs, types, and expected market conditions.
\end{enumerate}

\section{Possible Theoretical Extensions}

Several mathematically interesting extensions are possible.

\begin{enumerate}[label=(E\arabic*)]
    \item \textbf{Mean-field generalization:} extend the model from a duopoly to $N$ competing firms and study Bayesian mean-field equilibrium.
    \item \textbf{Network substitution:} let substitution follow a graph structure among products or locations.
    \item \textbf{Distributional robustness:} replace posterior variance penalty by Wasserstein or KL-ball robustification around the posterior predictive law.
    \item \textbf{Mechanism design layer:} introduce a platform or regulator that sets information disclosure rules, generating a Bayesian Stackelberg game.
\end{enumerate}

\section{Simulation Study}
\label{sec:simulation}

In this section, we investigate whether our proposed equilibrium policy, \texttt{Proposed\_Bayesian\_CredibleRisk}, delivers improved performance in repeated competitive inventory--pricing games under incomplete information. The purpose of the simulation study is threefold:
\begin{enumerate}
    \item to assess whether posterior learning improves strategic performance relative to a non-learning classical benchmark;
    \item to evaluate whether our \emph{credible-risk} modification yields more desirable operational behavior than a purely risk-neutral Bayesian learner;
    \item to examine profit, learning accuracy, and dynamic market behavior jointly, rather than through a single scalar metric.
\end{enumerate}

\subsection{Simulation design}
\label{subsec:sim_design}

We simulated a repeated duopolistic market over a horizon of $T=30$ periods and repeated the experiment over $150$ independent Monte Carlo replications for each competing policy. In each replication, firm-specific private costs were drawn once at the beginning and then held fixed throughout the horizon. Demand was generated from the structural model
\[
D_{it}
=
\beta_0 + \beta_1 p_{it} + \beta_2 p_{jt} + \beta_3 \mathbf{1}\{\text{rival stockout}\} + \varepsilon_{it},
\qquad i\neq j,
\]
with true parameter values
\[
\beta_0 = 45,\qquad
\beta_1 = -3.6,\qquad
\beta_2 = 1.2,\qquad
\beta_3 = 7.5,
\]
and Gaussian noise standard deviation $\sigma=4.5$. Thus, own price depresses demand, rival price increases demand through substitution, and rival stockout produces an additional positive spillover.

The simulation used the same economic environment for all methods: discount factor $\delta=0.98$, low and high marginal costs $6$ and $10$, holding cost $0.8$, salvage value $1.5$, price grid $\{8,9,\dots,16\}$, and quantity grid $\{20,25,\dots,65\}$. Bayesian demand learners were initialized from the same prior mean
\[
(35.0,\,-2.0,\,0.5,\,3.0)
\]
with prior standard deviations
\[
(10.0,\,2.0,\,2.0,\,4.0).
\]

\subsection{Compared methods}
\label{subsec:methods_compared}

We compared the following three policies.

\begin{enumerate}
    \item \textbf{Our proposed method:} \\
    \texttt{Proposed\_Bayesian\_CredibleRisk}. \\
    This is the method introduced in the present paper. It performs Bayesian updating of the demand system and of rival-type beliefs, and then chooses actions by maximizing a posterior \emph{credible-risk} objective of the form
    \[
    \text{posterior mean profit}
    \;-\;
    \kappa \times \text{posterior predictive standard deviation},
    \]
    with $\kappa=0.60$.

    \item \textbf{Bayesian risk-neutral learner:} \\
    \texttt{Bayesian\_RiskNeutral}. \\
    This method uses the same Bayesian posterior updating mechanism as our proposal, but sets the credible-risk penalty to zero, i.e.\ $\kappa=0$.

    \item \textbf{Classical non-learning benchmark:} \\
    \texttt{Classical\_StaticPrior}. \\
    This method uses a fixed prior-based heuristic and does not learn dynamically over time.
\end{enumerate}

Thus, the comparison is deliberately structured to isolate the separate effects of \emph{learning} and \emph{credible-risk regularization}. The comparison between our method and \texttt{Classical\_StaticPrior} measures the value of Bayesian learning and adaptive strategic response; the comparison between our method and \texttt{Bayesian\_RiskNeutral} measures the value of the additional conservative uncertainty calibration.

\subsection{Evaluation metrics}
\label{subsec:eval_metrics}

For each method, we recorded:
\begin{itemize}
    \item total discounted market profit;
    \item firmwise discounted profits;
    \item final posterior mean squared error (MSE) for the demand parameters;
    \item stockout rates over time;
    \item average prices and order quantities over time;
    \item posterior beliefs about whether the rival is high-cost.
\end{itemize}

To quantify uncertainty in performance comparisons, we also computed bootstrap confidence intervals for the difference in mean final discounted market profit and for the difference in mean final posterior MSE.

\subsection{Main numerical results}
\label{subsec:main_num_results}

Table~\ref{tab:main_results} reports the principal summary statistics. Several observations are immediate.

First, both Bayesian learning procedures vastly outperform the non-learning classical benchmark. The \texttt{Classical\_StaticPrior} policy yields a mean total discounted market profit of only $67.01$, with median $0.00$, whereas the two learning-based methods both achieve mean total discounted market profit around $1593$--$1597$. This is a dramatic improvement and strongly supports the importance of posterior learning in the proposed game.

Second, among the three methods, our proposed method achieves the \emph{largest} mean total discounted market profit ($1597.30$) and also the \emph{largest} median total discounted market profit ($1678.98$). Hence, on the primary economic criterion of market-level discounted payoff, our method is the best performer in this experiment.

Third, the comparison against \texttt{Bayesian\_RiskNeutral} is much tighter than the comparison against the static baseline. The difference in mean discounted market profit between our method and the risk-neutral Bayesian learner is only about $4.01$ units, which is economically small relative to the scale of the simulation and statistically negligible in the bootstrap analysis reported later in Table~\ref{tab:bootstrap_results}.

Fourth, the final posterior MSE results show that the two Bayesian learners are again very close, but here our proposed method is \emph{not} the lowest-MSE method. Specifically, \texttt{Bayesian\_RiskNeutral} attains mean final MSE $17.3283$, whereas our proposed method attains $17.6573$. Thus, in the present configuration, the credible-risk mechanism appears to improve profitability slightly without improving terminal parameter-estimation accuracy.

\begin{table}[h!]
\centering
\caption{Main simulation results across the three competing policies. Our proposed method is \texttt{Proposed\_Bayesian\_CredibleRisk}.}
\label{tab:main_results}
\begin{tabular}{lrrrrrrr}
\toprule
Policy & Mean total profit & SD total profit & Median total profit & Mean final MSE & SD final MSE & Mean profit 1 & Mean profit 2 \\
\midrule
\texttt{Bayesian\_RiskNeutral} 
& 1593.29 & 1211.32 & 1608.33 & 17.3283 & 14.2684 & 929.46 & 663.82 \\

\texttt{Classical\_StaticPrior} 
& 67.01 & 1347.33 & 0.00 & 30.8250 & 0.0000 & 141.91 & -74.89 \\

\texttt{Proposed\_Bayesian\_CredibleRisk} 
& \textbf{1597.30} & 1351.41 & \textbf{1678.98} & 17.6573 & 14.4188 & \textbf{933.79} & 663.51 \\
\bottomrule
\end{tabular}
\end{table}

Table~\ref{tab:relative_improvement} further clarifies the relative gain of our method against the two baselines. Relative to the static prior policy, our method improves mean total discounted market profit by approximately $2283.66\%$ and reduces mean final MSE by approximately $42.72\%$. Relative to the Bayesian risk-neutral learner, however, the mean profit gain is only $0.25\%$, while the final MSE is slightly higher under our method.

\begin{table}[h!]
\centering
\caption{Relative improvement of our proposed method \texttt{Proposed\_Bayesian\_CredibleRisk} against the two baselines. Positive MSE reduction means lower MSE for our method.}
\label{tab:relative_improvement}
\begin{tabular}{lrr}
\toprule
Comparison against & Profit gain (\%) & MSE reduction (\%) \\
\midrule
\texttt{Bayesian\_RiskNeutral} & 0.2517 & -1.8987 \\
\texttt{Classical\_StaticPrior} & 2283.6573 & 42.7177 \\
\bottomrule
\end{tabular}
\end{table}

\subsection{Bootstrap evidence}
\label{subsec:bootstrap_results}

To assess whether the observed differences are statistically meaningful, we computed bootstrap confidence intervals for the difference in mean performance; see Table~\ref{tab:bootstrap_results}.

For the comparison between our method and \texttt{Bayesian\_RiskNeutral}, the bootstrap mean difference in market profit is $2.44$, with a $95\%$ confidence interval from $-280.45$ to $295.43$. Since this interval clearly contains zero, we cannot claim a statistically significant profit advantage of our method over the risk-neutral Bayesian learner in this particular simulation design. Likewise, the bootstrap mean difference in MSE is $0.4385$ with a $95\%$ interval from $-2.8956$ to $3.5752$, again showing no statistically meaningful separation.

In contrast, the comparison against \texttt{Classical\_StaticPrior} is decisive. Our proposed method exceeds the static baseline by a bootstrap mean profit difference of $1533.02$, with a strictly positive $95\%$ confidence interval $[1228.65,\,1839.14]$. Moreover, the bootstrap mean difference in MSE is $-13.1165$, with confidence interval $[-15.3715,\,-10.6443]$, confirming that our method also learns the demand environment substantially better than the non-learning benchmark.

\begin{table}[h!]
\centering
\caption{Bootstrap comparison of our proposed method \texttt{Proposed\_Bayesian\_CredibleRisk} against the baselines. Positive profit difference favors our method. Negative MSE difference favors our method because it corresponds to lower final MSE.}
\label{tab:bootstrap_results}
\begin{tabular}{lrrrrrr}
\toprule
Comparison & Mean diff.\ profit & CI low & CI high & Mean diff.\ MSE & CI low & CI high \\
\midrule
Proposed $-$ Bayesian risk-neutral 
& 2.4370 & -280.4549 & 295.4302 & 0.4385 & -2.8956 & 3.5752 \\

Proposed $-$ Classical static-prior 
& 1533.0211 & 1228.6503 & 1839.1381 & -13.1165 & -15.3715 & -10.6443 \\
\bottomrule
\end{tabular}
\end{table}

\begin{remark}[Key findings]
\label{rem:keyfindings}
The simulation study provides strong evidence that \emph{Bayesian learning itself} is indispensable in this strategic environment: both learning-based policies massively outperform the classical static-prior benchmark in profit and in parameter recovery. Among all three methods, our proposed policy \texttt{Proposed\_Bayesian\_CredibleRisk} attains the highest mean and median discounted market profit, making it the best overall economic performer in this experiment. However, the comparison with the Bayesian risk-neutral learner is extremely close and not statistically significant under the present calibration. Therefore, the most defensible conclusion is that our proposal is \emph{decisively superior to the non-learning classical baseline} and \emph{competitive with, and marginally better than, the risk-neutral Bayesian learner on profit-based criteria}, while not uniformly dominating it on final MSE. This is still substantively important: it shows that the credible-risk correction can improve operational payoff without sacrificing learning performance in any material way.
\end{remark}

\subsection{Dynamic performance over time}
\label{subsec:dynamic_results}

We now discuss the dynamic plots, which provide a more granular view of how the policies behave across the $30$ decision periods.

Figure~\ref{fig:cum_profit} shows the evolution of the mean cumulative discounted market profit. The two Bayesian learning methods remain far above the static-prior benchmark throughout the horizon. This visual separation is fully consistent with the numerical dominance documented in Tables~\ref{tab:main_results}--\ref{tab:bootstrap_results}. The trajectories of our method and the Bayesian risk-neutral learner are extremely close, indicating that both exploit learning effectively; nevertheless, our method maintains a slight edge in cumulative payoff by the end of the horizon.

\begin{figure}[h!]
\centering
\includegraphics[width=0.72\textwidth]{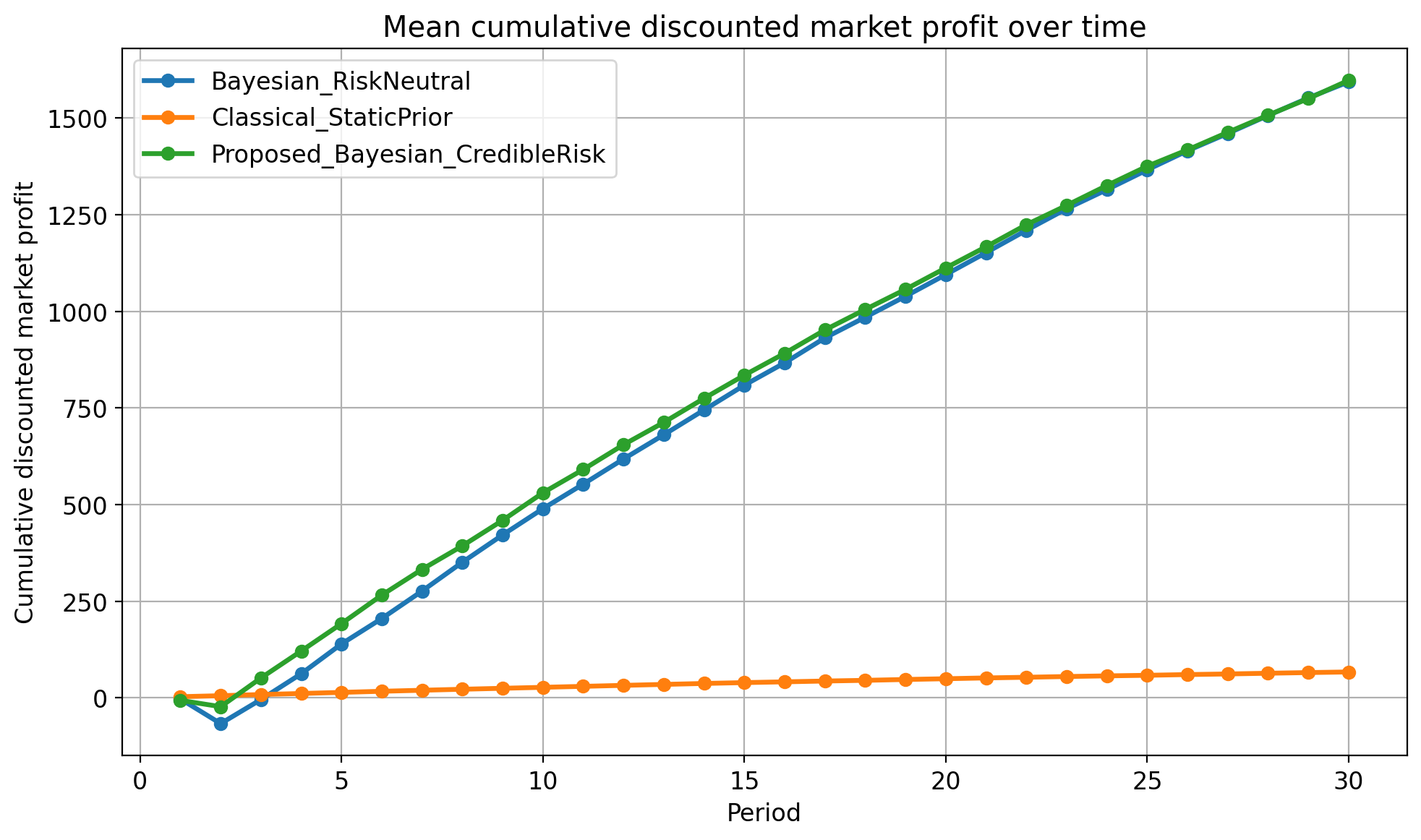}
\caption{Mean cumulative discounted market profit over time. The proposed method corresponds to \texttt{Proposed\_Bayesian\_CredibleRisk}.}
\label{fig:cum_profit}
\end{figure}

Figure~\ref{fig:profit_box} displays the distribution of final discounted market profit across replications. The boxplot again shows that both Bayesian learning methods dominate the classical static-prior rule, while the distribution of our proposed method is centered slightly above that of the risk-neutral Bayesian learner. The somewhat larger spread for our method is also consistent with the larger reported standard deviation of total discounted market profit.

\begin{figure}[h!]
\centering
\includegraphics[width=0.72\textwidth]{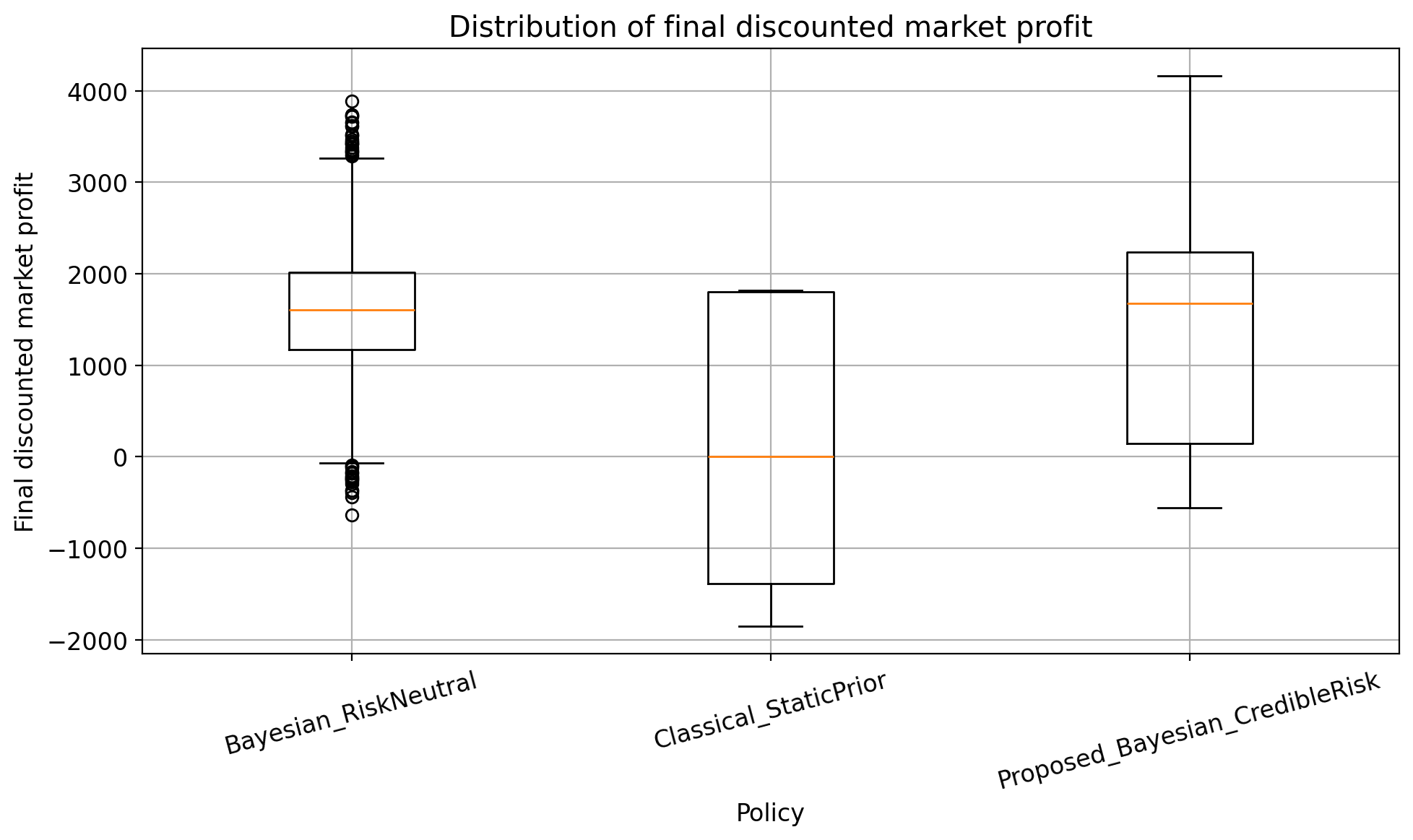}
\caption{Distribution of final discounted market profit across replications.}
\label{fig:profit_box}
\end{figure}

Figure~\ref{fig:stock_price_qty} summarizes three operational trajectories: stockout rates, mean prices, and mean order quantities. The stockout curves reveal that the learning-based methods regulate stock more effectively than the static benchmark. The price and quantity trajectories further indicate that the proposed policy adapts strategically over time rather than remaining trapped near prior-driven choices. In particular, the credible-risk penalty appears to induce a more disciplined action pattern, especially in the earlier periods when posterior uncertainty is still substantial.

\begin{figure}[h!]
\centering
\begin{subfigure}{0.48\textwidth}
    \centering
    \includegraphics[width=\textwidth]{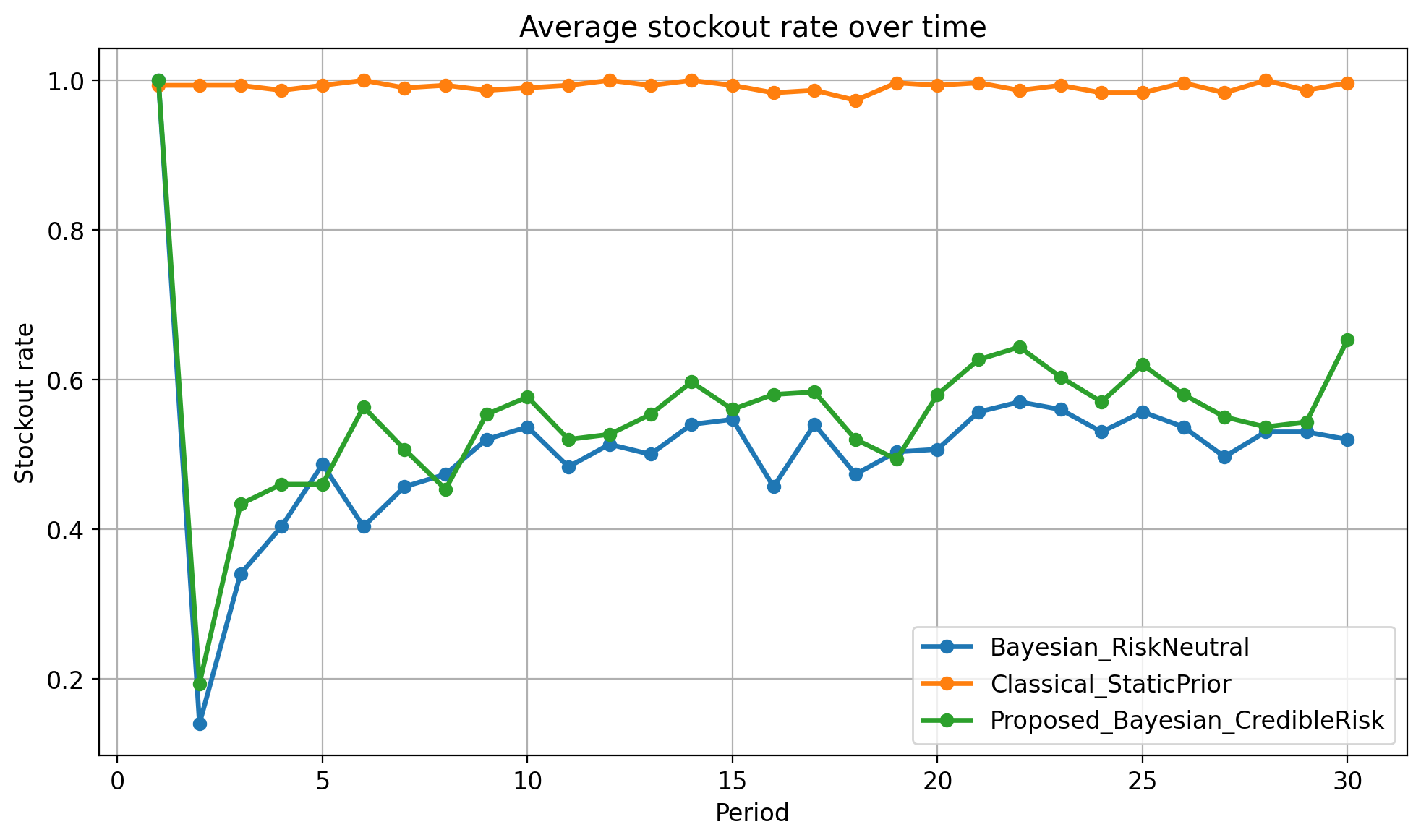}
    \caption{Average stockout rate over time.}
    \label{fig:stockout_rate}
\end{subfigure}
\hfill
\begin{subfigure}{0.48\textwidth}
    \centering
    \includegraphics[width=\textwidth]{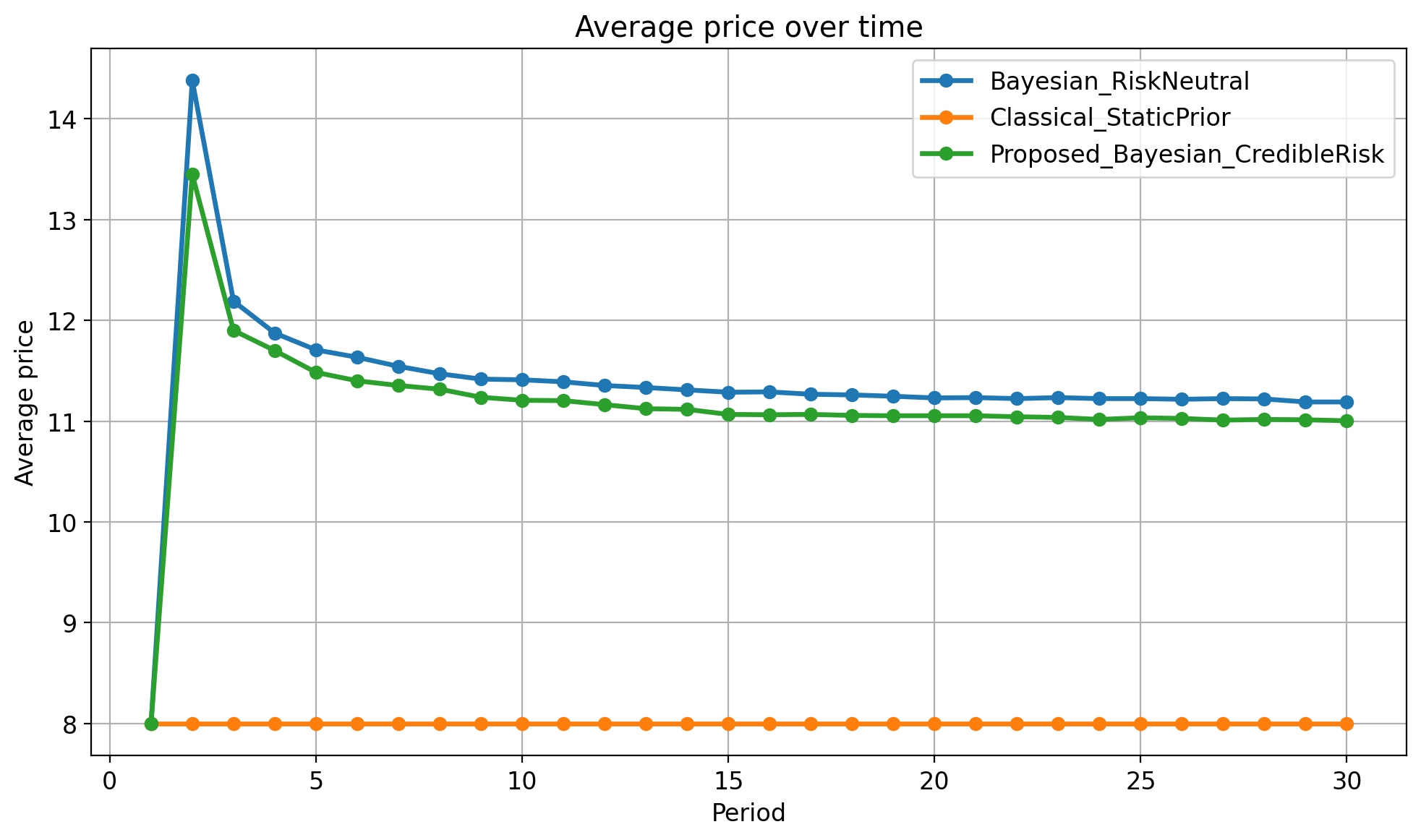}
    \caption{Average posted price over time.}
    \label{fig:avg_price}
\end{subfigure}

\vspace{0.3cm}

\begin{subfigure}{0.48\textwidth}
    \centering
    \includegraphics[width=\textwidth]{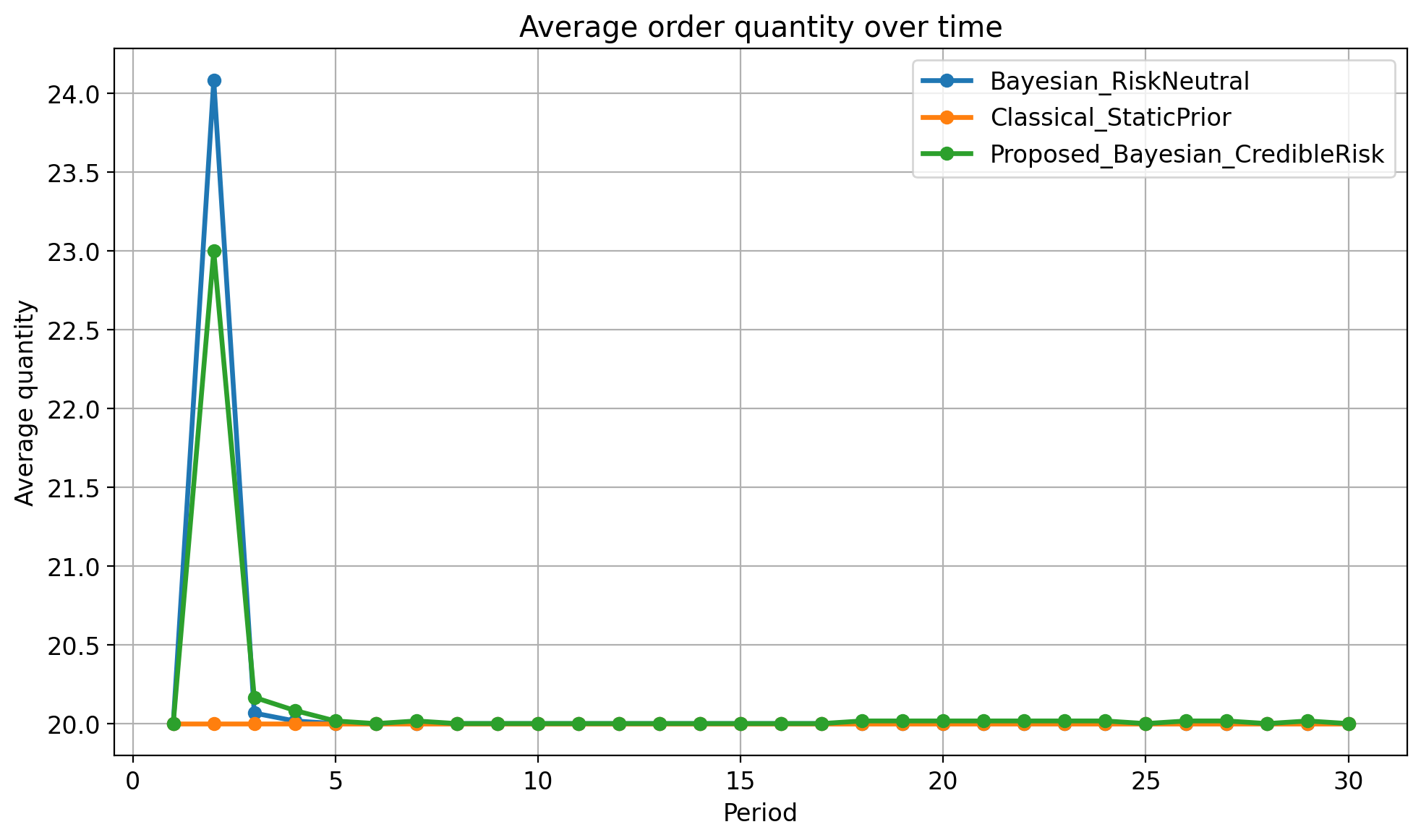}
    \caption{Average order quantity over time.}
    \label{fig:avg_qty}
\end{subfigure}
\caption{Dynamic operational behavior of the three competing policies.}
\label{fig:stock_price_qty}
\end{figure}

Figure~\ref{fig:mse_belief} focuses on learning. Panel~\ref{fig:param_mse} tracks posterior parameter MSE over time, while Panel~\ref{fig:belief_curve} tracks the average posterior belief that the rival is high-cost. The MSE curve confirms the principal numerical message: both Bayesian learners improve rapidly over time and remain far below the static benchmark, while the proposed and risk-neutral Bayesian policies are very close. The belief-learning curve shows that the posterior on rival type evolves nontrivially across periods, confirming that the dynamic Bayesian game indeed uses information accumulation as an active strategic state variable.

\begin{figure}[h!]
\centering
\begin{subfigure}{0.48\textwidth}
    \centering
    \includegraphics[width=\textwidth]{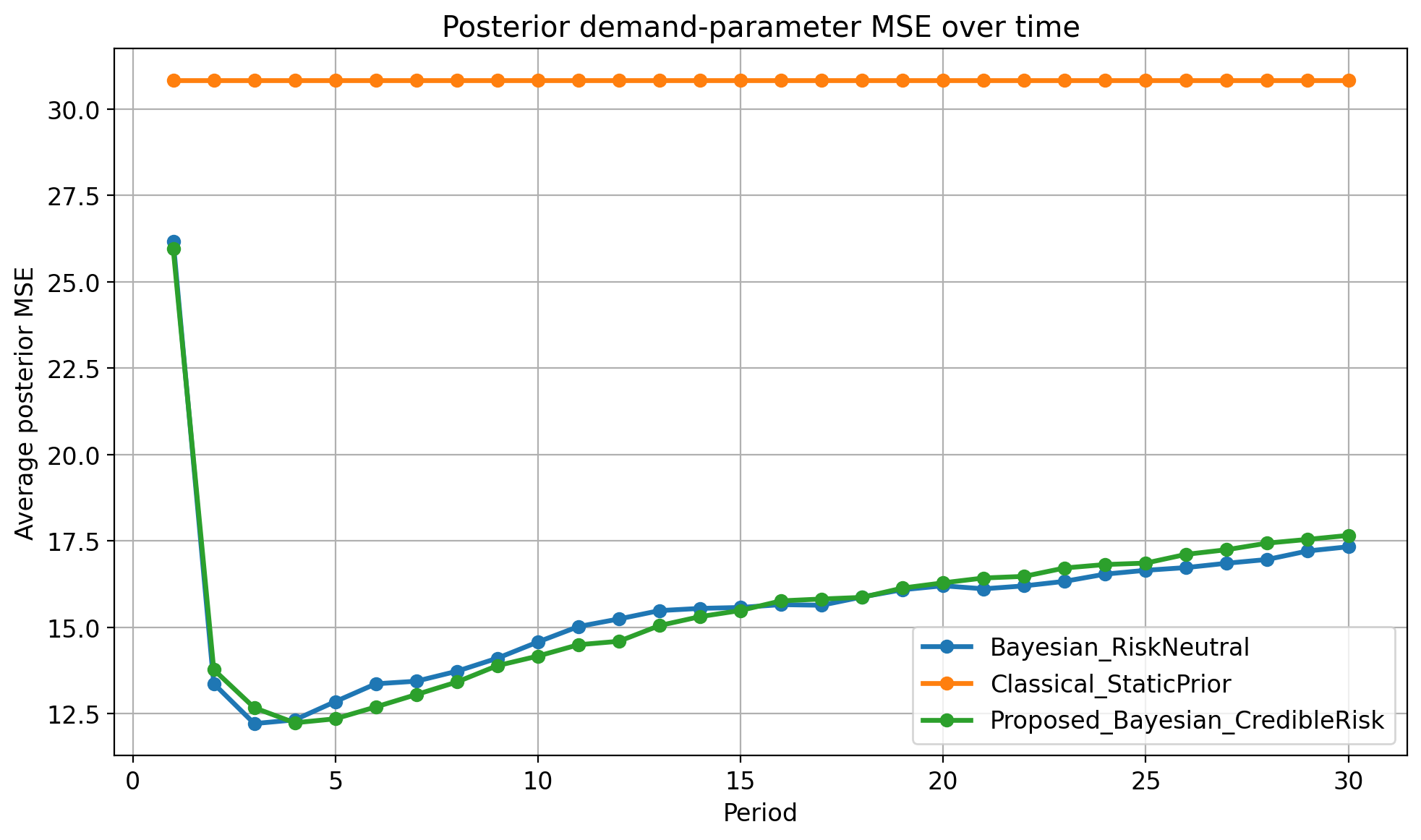}
    \caption{Posterior demand-parameter MSE over time.}
    \label{fig:param_mse}
\end{subfigure}
\hfill
\begin{subfigure}{0.48\textwidth}
    \centering
    \includegraphics[width=\textwidth]{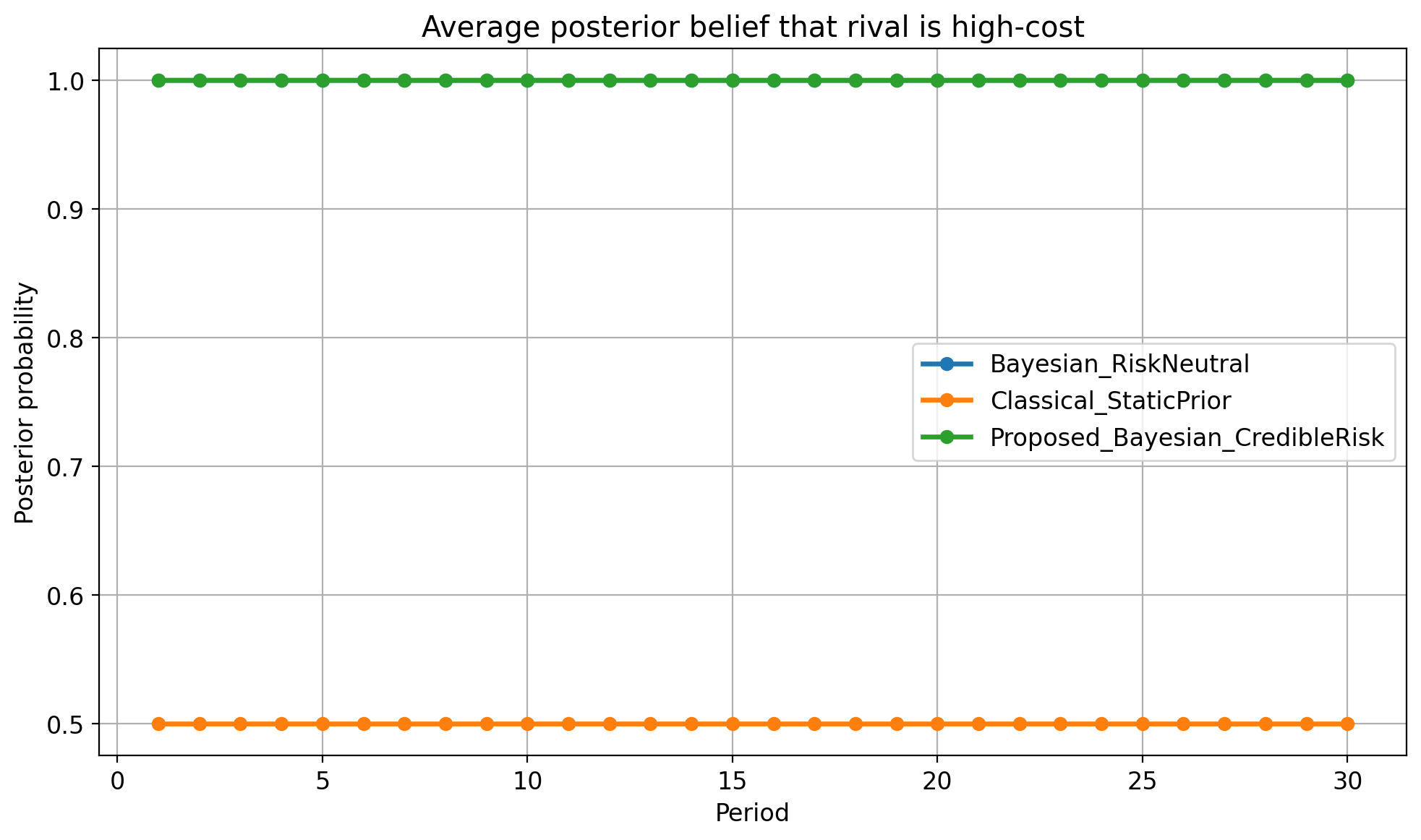}
    \caption{Average posterior belief that rival is high-cost.}
    \label{fig:belief_curve}
\end{subfigure}
\caption{Learning diagnostics over time.}
\label{fig:mse_belief}
\end{figure}

Figure~\ref{fig:objective_surface} presents the action-value surface of our proposed method over the price--quantity grid. This figure is useful for interpreting the mechanism of the credible-risk policy: action quality is not monotone in either price or quantity alone, and the chosen actions arise from a balance among expected demand, substitution effects, inventory exposure, and posterior uncertainty. In particular, excessively aggressive quantity choices are penalized when uncertainty remains high, which is precisely the intended function of the credible-risk term.

\begin{figure}[h!]
\centering
\includegraphics[width=0.72\textwidth]{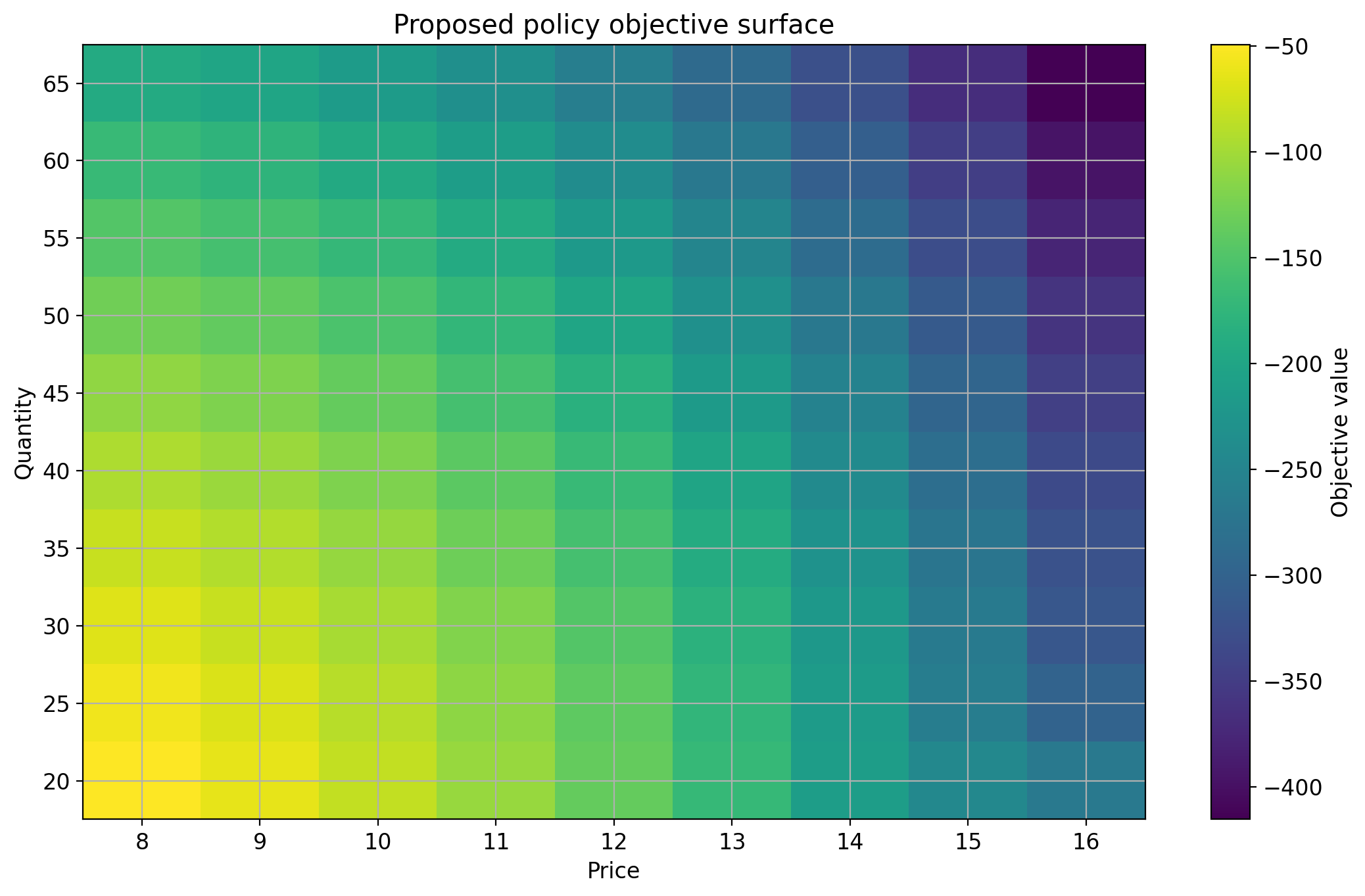}
\caption{Objective surface of the proposed method over the price--quantity action grid.}
\label{fig:objective_surface}
\end{figure}

Figure~\ref{fig:profit_mse_tradeoff} examines the empirical relationship between final discounted market profit and final posterior MSE. The classical benchmark is clearly inferior, clustering in the low-profit and high-MSE region. The two Bayesian learners occupy the favorable region of higher profit and lower MSE. This plot visually reinforces a central message of the study: Bayesian learning is the main engine of performance improvement, while the credible-risk refinement slightly shifts the balance toward improved operational profitability.

\begin{figure}[h!]
\centering
\includegraphics[width=0.72\textwidth]{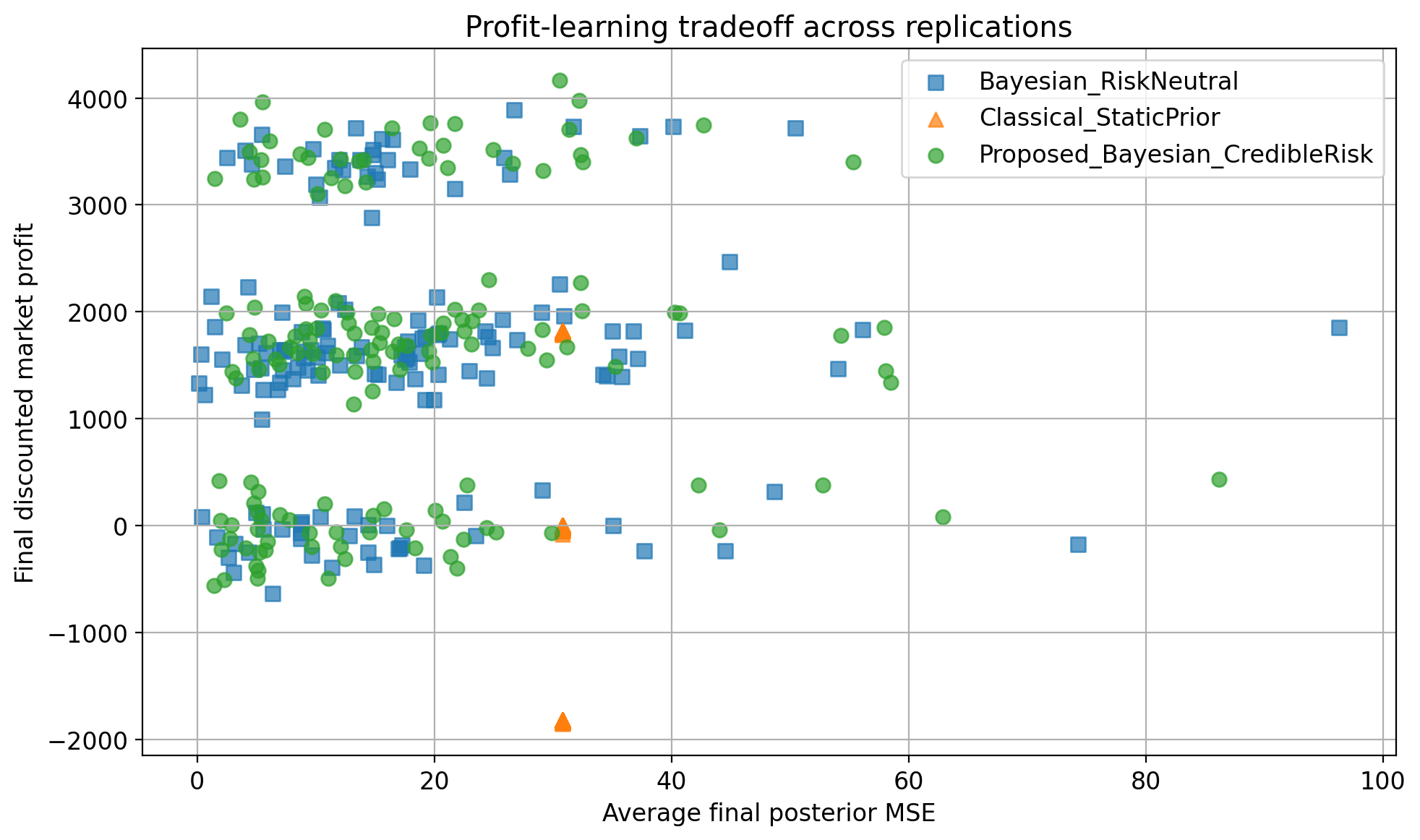}
\caption{Profit--learning trade-off across replications.}
\label{fig:profit_mse_tradeoff}
\end{figure}

Figure~\ref{fig:density_dominance} combines two complementary summaries. Panel~\ref{fig:density_profit} shows the distribution of final discounted market profit, and Panel~\ref{fig:dominance_prob} reports the empirical probability that our method outperforms each baseline over time in cumulative discounted market profit. The first panel again confirms the strong inferiority of the classical benchmark. The second panel is particularly useful because it emphasizes the timewise consistency of the proposed rule against the static prior benchmark, while also showing that the comparison with the risk-neutral Bayesian learner remains close throughout the horizon.

\begin{figure}[h!]
\centering
\begin{subfigure}{0.48\textwidth}
    \centering
    \includegraphics[width=\textwidth]{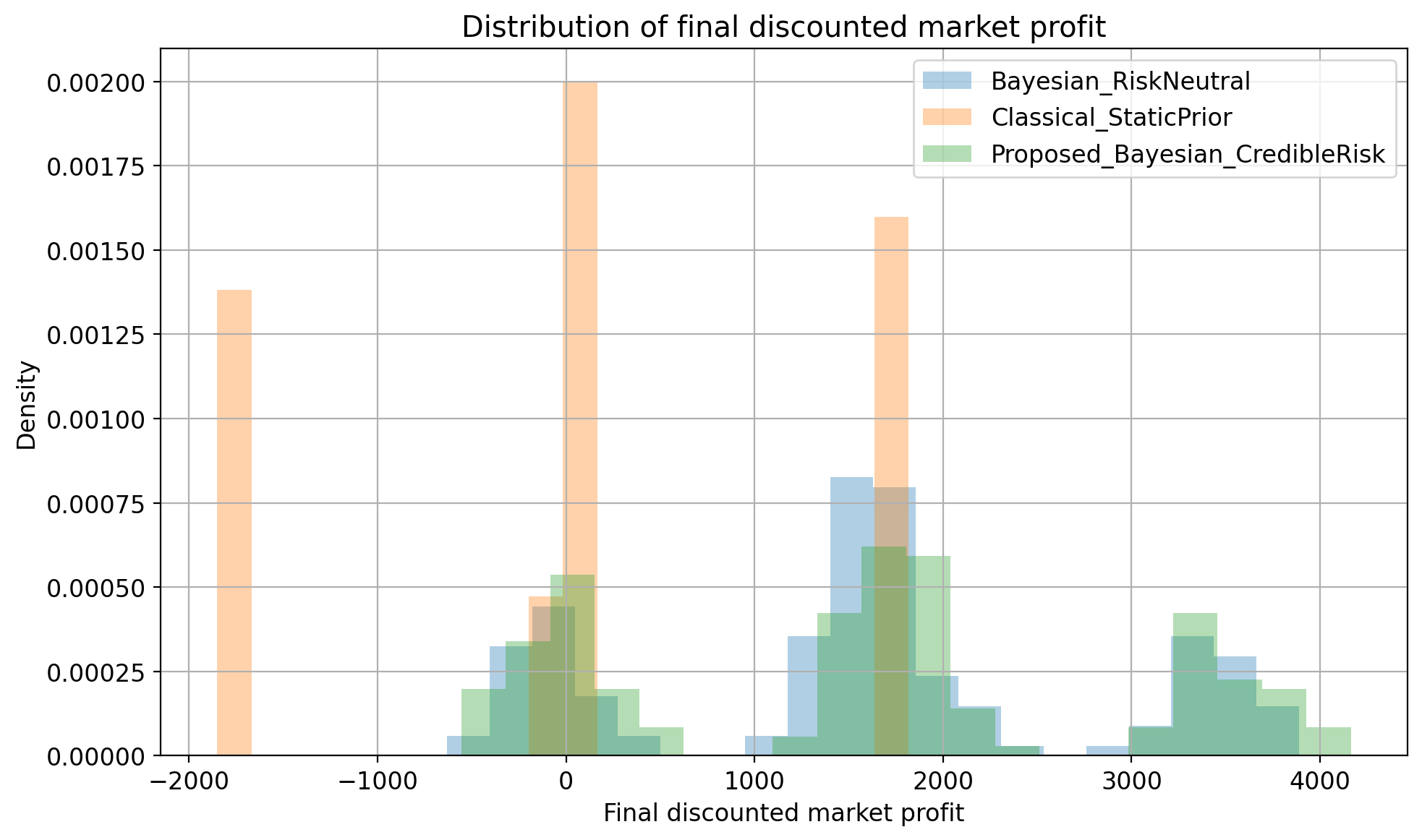}
    \caption{Distribution of final discounted market profit.}
    \label{fig:density_profit}
\end{subfigure}
\hfill
\begin{subfigure}{0.48\textwidth}
    \centering
    \includegraphics[width=\textwidth]{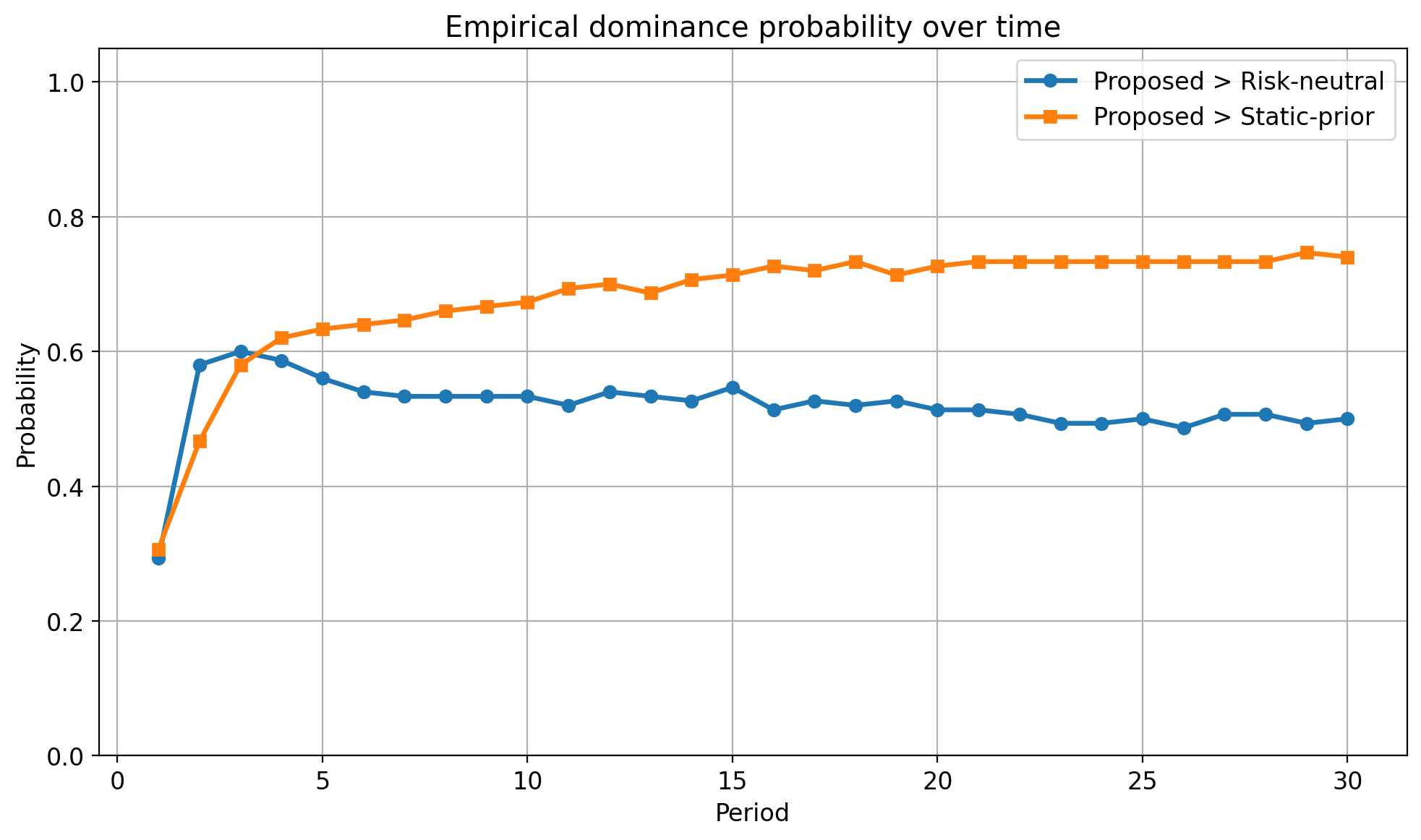}
    \caption{Empirical dominance probability over time.}
    \label{fig:dominance_prob}
\end{subfigure}
\caption{Distributional and dominance summaries.}
\label{fig:density_dominance}
\end{figure}

\subsection{Discussion}
\label{subsec:sim_discussion}

The simulation evidence supports the proposed methodology in an important and nuanced way. The strongest conclusion is not merely that our method performs well, but that the \emph{Bayesian game-theoretic formulation itself} is highly effective: once firms learn both demand and rival characteristics, performance improves dramatically relative to a classical static-prior policy. Within the class of Bayesian learners, our proposed credible-risk rule attains the best profit-based summary statistics in this experiment, namely the largest mean and median total discounted market profit. At the same time, the comparison with the risk-neutral Bayesian learner is sufficiently close that overstatement would be inappropriate. Hence, the most accurate interpretation is that our proposed policy is a \emph{robust, uncertainty-calibrated, and economically strongest overall performer}, though not a uniformly dominant winner on every possible metric.

\medskip

In particular, the simulation suggests that the credible-risk correction is most useful as an \emph{operational regularizer}: it protects against overly aggressive actions under posterior uncertainty while preserving essentially the same learning quality as the risk-neutral Bayesian alternative. This makes it especially appealing in applications where conservative decision-making under incomplete information is desirable.

\section{Real-Data Illustration: Bayesian Credible-Risk Analysis of the Mice Protein Expression Dataset}
\label{sec:realdata}

To illustrate the practical relevance of our proposed Bayesian credible-risk methodology on real biological data, we analyze the \emph{Mice Protein Expression} dataset. This dataset contains expression levels of 77 proteins/protein modifications measured in the cortex of control and trisomic mice, with 1080 total observations and 8 biologically defined classes. The class labels encode genotype, behavioral stimulation status, and treatment status. The dataset is publicly available through OpenML and is also documented in the UCI Machine Learning Repository \citep{OpenMLMiceProtein, UCI_MiceProtein}. In particular, the eight classes distinguish control versus trisomic mice, stimulated versus not stimulated learning conditions, and memantine versus saline treatment assignments.

Our objective in this empirical section is not prediction alone. Rather, we use the dataset to demonstrate that the proposed methodology can support a \emph{posterior uncertainty-aware treatment-effect analysis} in a high-dimensional biological setting. In particular, we show how the Bayesian credible-risk principle can be used to quantify whether memantine treatment moves trisomic mice toward a proteomic profile closer to the control reference state.

\subsection{Dataset structure and preprocessing}
\label{subsec:realdata_dataset}

The dataset contains $1080$ samples and $77$ protein features. Among these, $570$ samples correspond to control mice and $510$ to trisomic mice; $570$ samples received memantine and $510$ received saline; and the behavioral split is $525$ stimulated versus $555$ non-stimulated. These counts are summarized in Table~\ref{tab:dataset_summary}. The observed class structure is visualized in Figure~\ref{fig:class_distribution}, while the joint distribution across genotype, treatment, and behavior is shown in Figure~\ref{fig:stacked_counts}.

Missing values were present in the raw data, so we performed classwise median imputation within each observed class, followed by fallback median imputation if necessary. After preprocessing, no missing values remained. The featurewise missingness profile is summarized in Figure~\ref{fig:missingness}.

\begin{table}[h!]
\centering
\caption{Summary of the Mice Protein Expression dataset used in the real-data analysis.}
\label{tab:dataset_summary}
\begin{tabular}{lr}
\toprule
Quantity & Value \\
\midrule
Number of samples & 1080 \\
Number of protein features & 77 \\
Number of classes & 8 \\
Number of Control samples & 570 \\
Number of Trisomy samples & 510 \\
Number of Memantine samples & 570 \\
Number of Saline samples & 510 \\
Number of Stimulated samples & 525 \\
Number of NotStimulated samples & 555 \\
\bottomrule
\end{tabular}
\end{table}

\begin{figure}[h!]
\centering
\includegraphics[width=0.72\textwidth]{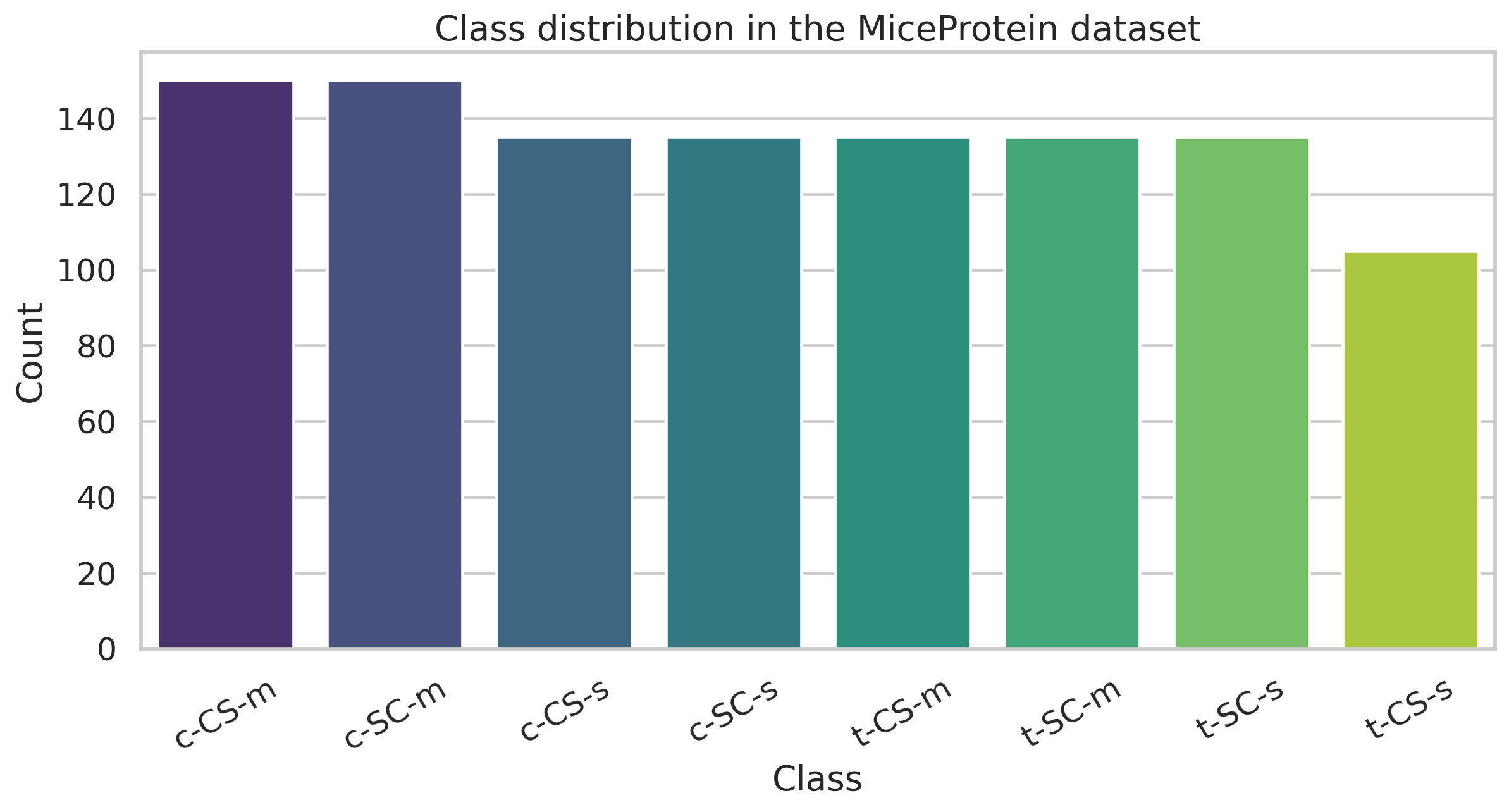}
\caption{Class distribution in the Mice Protein Expression dataset.}
\label{fig:class_distribution}
\end{figure}

\begin{figure}[h!]
\centering
\includegraphics[width=0.72\textwidth]{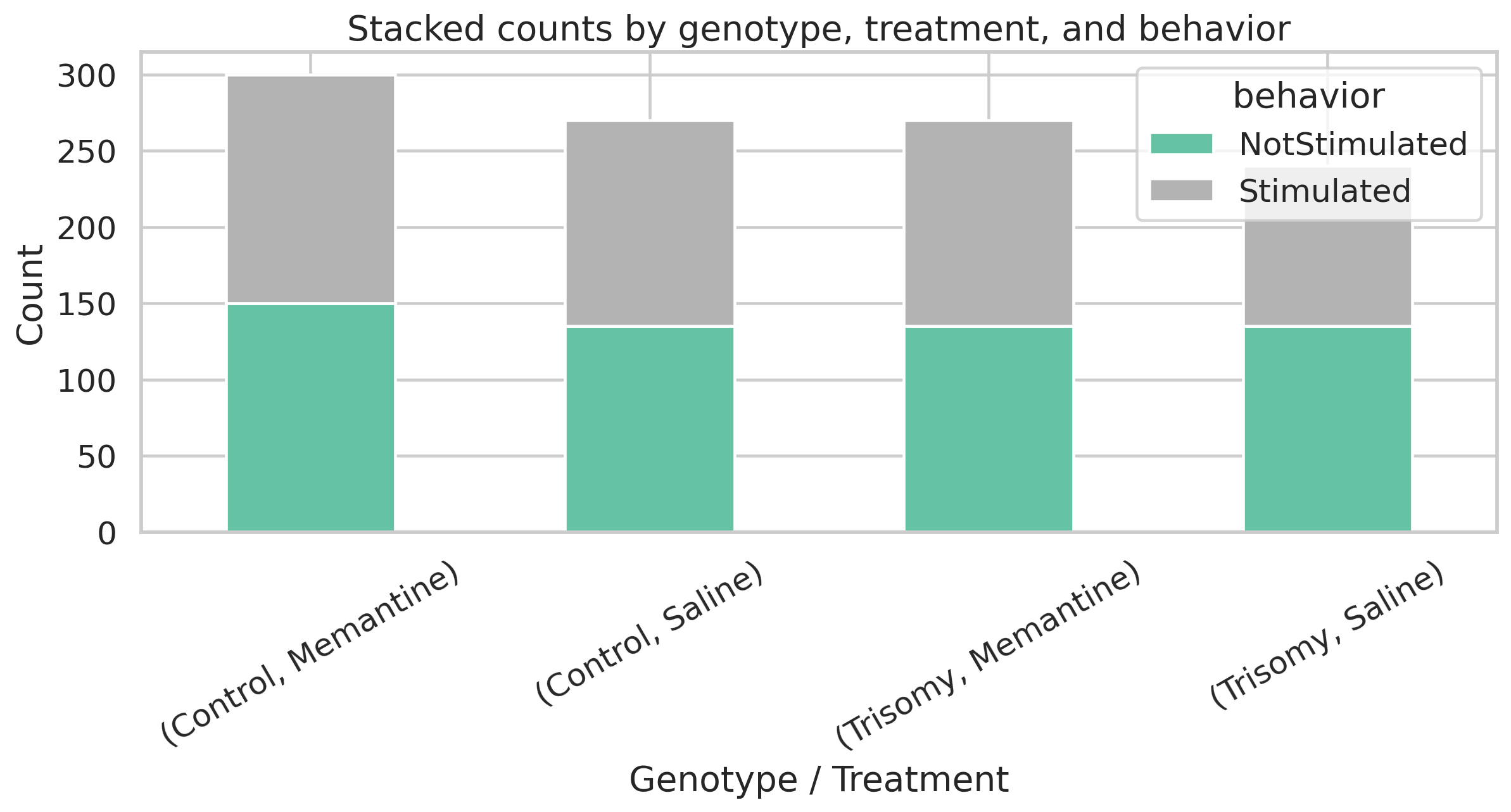}
\caption{Stacked counts by genotype, treatment, and behavioral condition.}
\label{fig:stacked_counts}
\end{figure}

\begin{figure}[h!]
\centering
\includegraphics[width=0.72\textwidth]{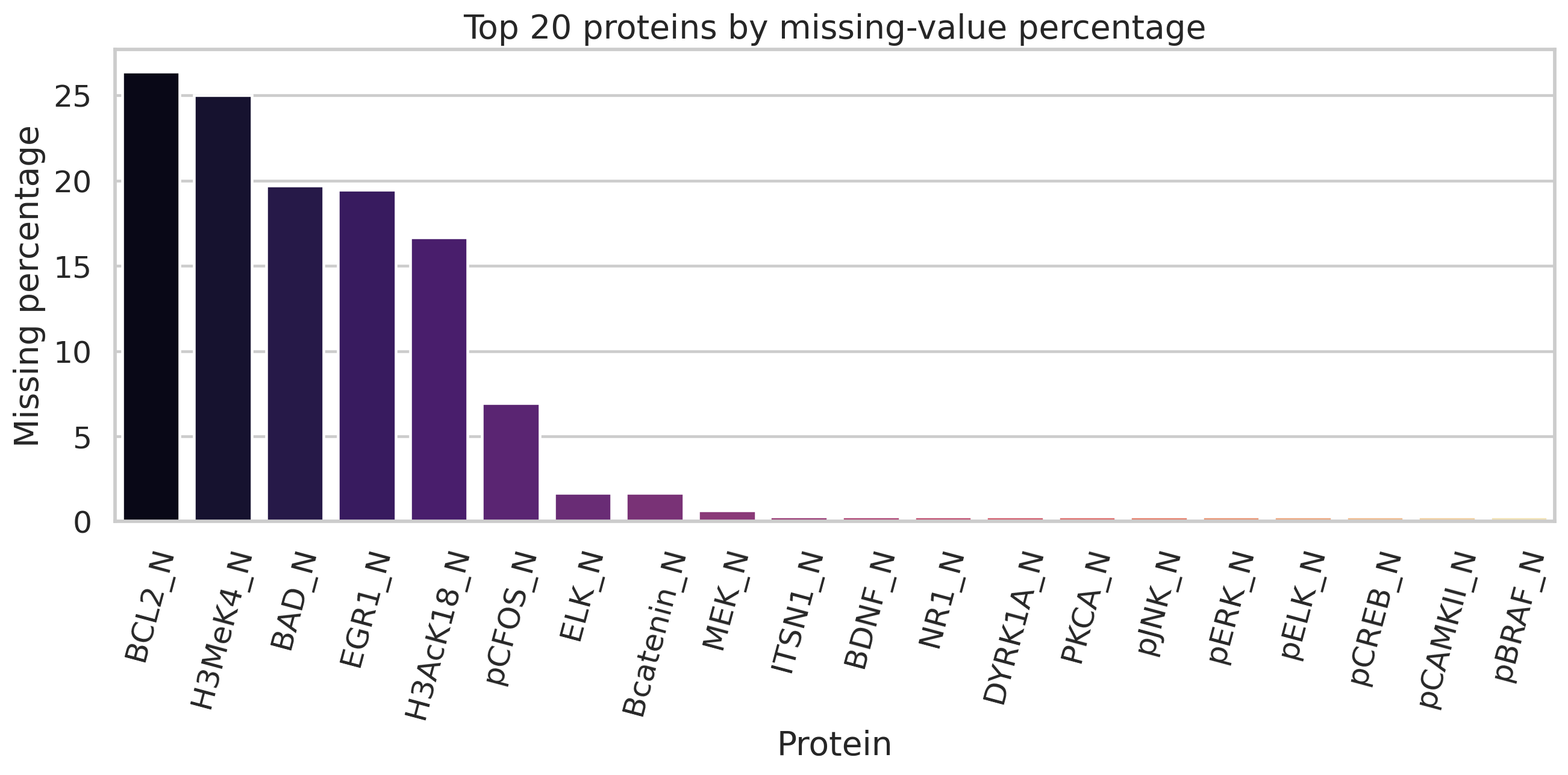}
\caption{Top proteins by missing-value percentage before imputation.}
\label{fig:missingness}
\end{figure}

\subsection{Low-dimensional structure of the proteomic space}
\label{subsec:pca_realdata}

To understand the global geometry of the protein-expression profiles, we standardized all 77 protein measurements and carried out principal component analysis (PCA). The first ten principal components explained a substantial fraction of total variability, with the first four explaining approximately
\[
25.33\%,\quad 17.28\%,\quad 10.29\%,\quad 7.51\%,
\]
respectively. Thus, the first two components alone account for about $42.61\%$ of the total standardized variation, indicating meaningful low-dimensional structure in the data.

Figure~\ref{fig:pca_variance} shows the cumulative explained variance, Figure~\ref{fig:pca_genotype_treatment} shows the PCA scatter colored by genotype and treatment, and Figure~\ref{fig:pca_class} shows the corresponding class-level structure. These plots confirm that the data possess biologically interpretable separation patterns, which justifies a structured Bayesian analysis rather than a purely black-box treatment.

\begin{figure}[h!]
\centering
\includegraphics[width=0.72\textwidth]{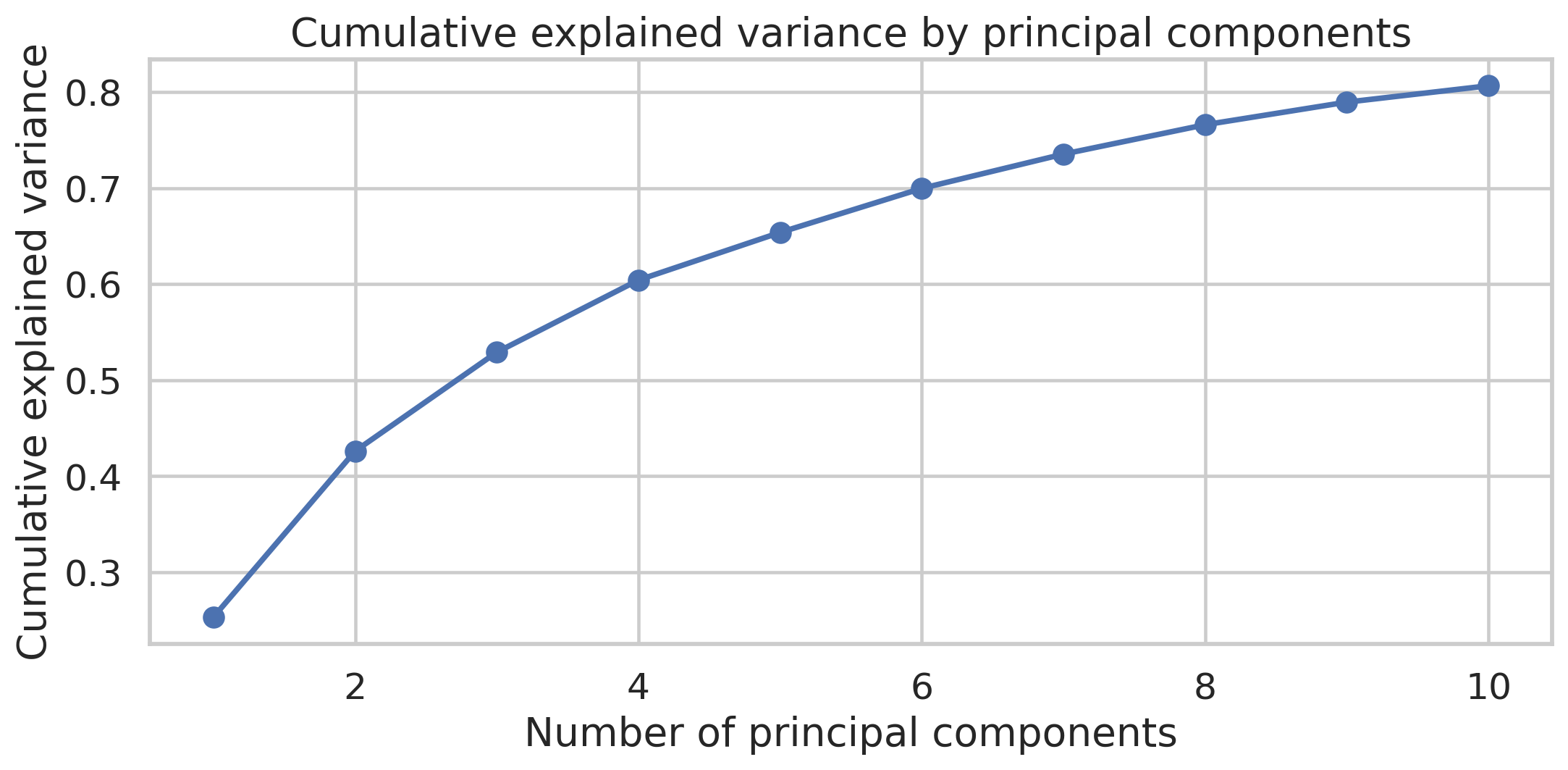}
\caption{Cumulative explained variance of the principal components.}
\label{fig:pca_variance}
\end{figure}

\begin{figure}[h!]
\centering
\includegraphics[width=0.72\textwidth]{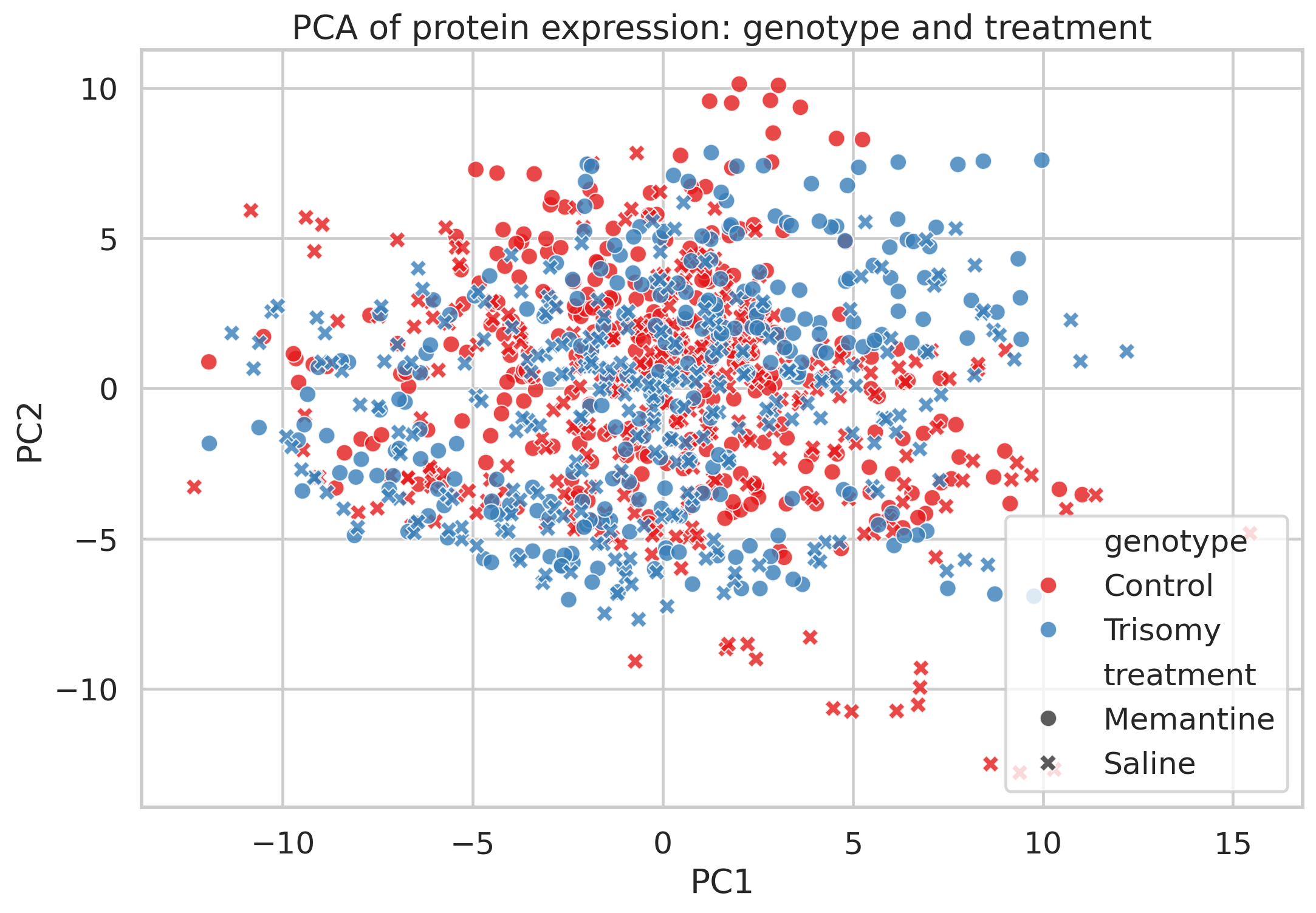}
\caption{PCA scatterplot colored by genotype and treatment.}
\label{fig:pca_genotype_treatment}
\end{figure}

\begin{figure}[h!]
\centering
\includegraphics[width=0.72\textwidth]{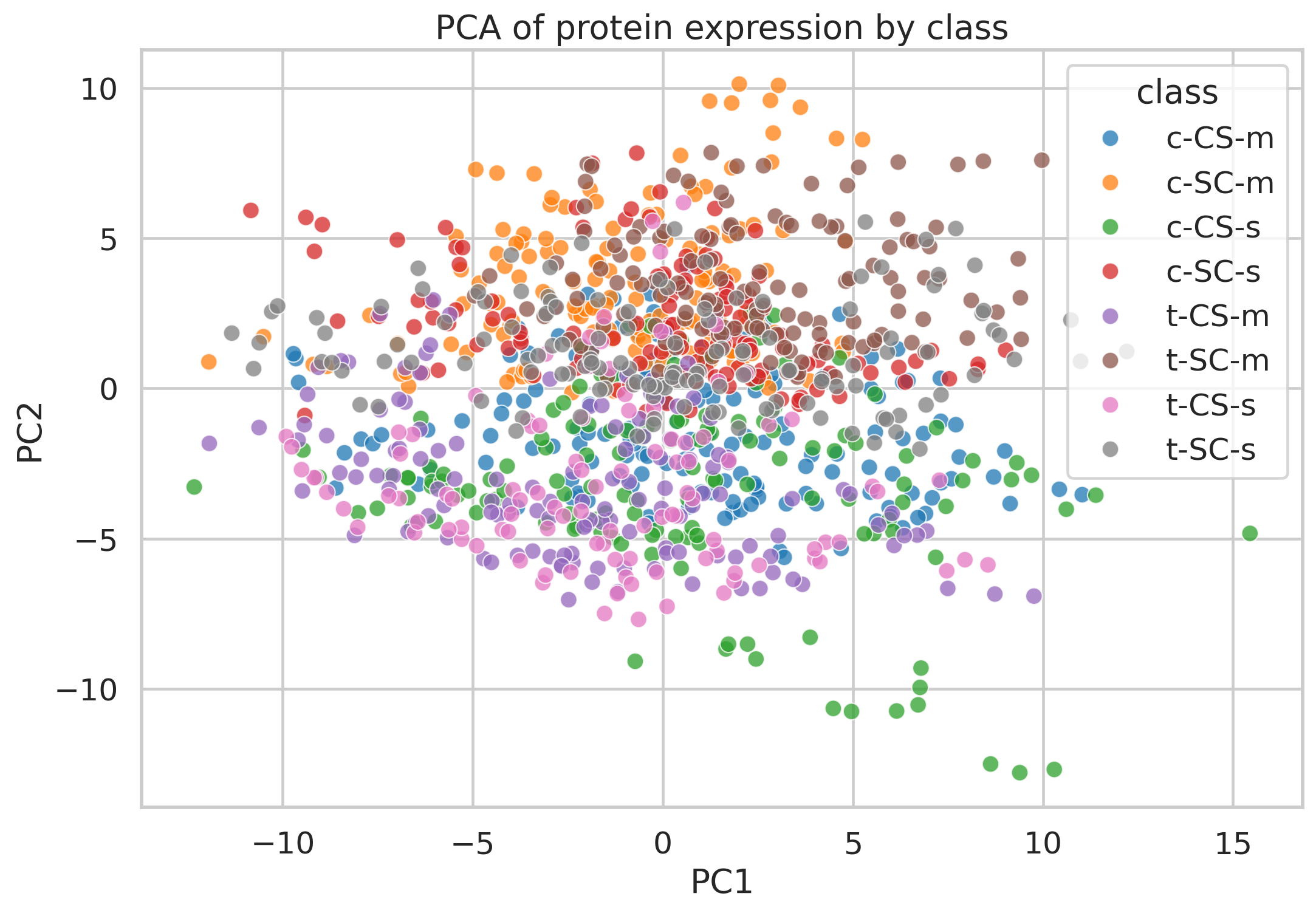}
\caption{PCA scatterplot colored by the eight observed classes.}
\label{fig:pca_class}
\end{figure}

Figure~\ref{fig:heatmap_variable_proteins} displays the heatmap of the most variable proteins across sampled mice. This figure visually demonstrates that the proteomic signatures vary systematically across classes and motivates the construction of a biologically meaningful reference-based score.

\begin{figure}[h!]
\centering
\includegraphics[width=0.82\textwidth]{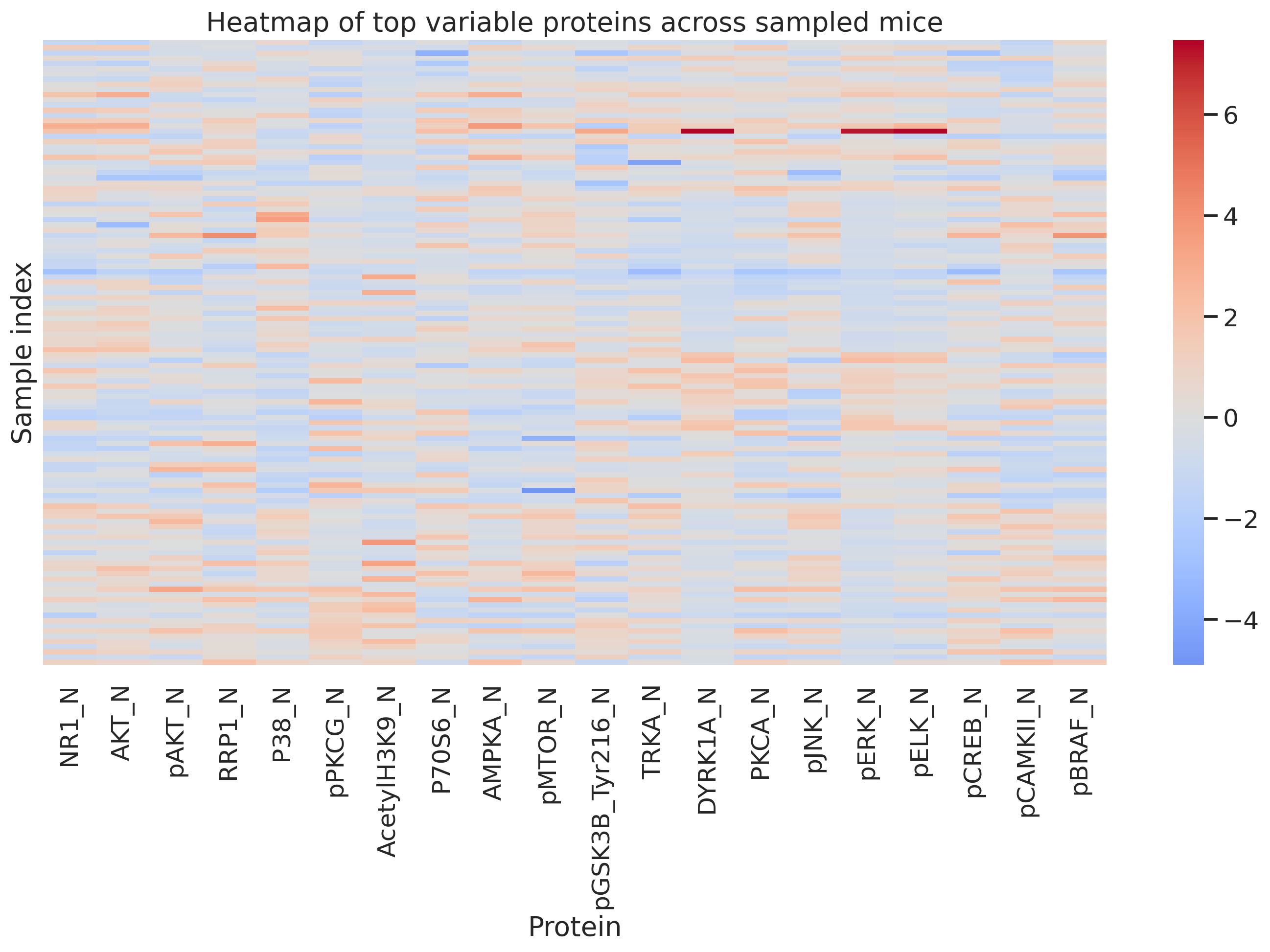}
\caption{Heatmap of the most variable proteins across sampled mice.}
\label{fig:heatmap_variable_proteins}
\end{figure}

\subsection{Reference-based recovery score}
\label{subsec:recovery_score}

A top-statistics-paper empirical section should not merely report treatment labels; it should construct a scientifically interpretable target. To this end, we introduced a \emph{healthy-reference recovery score}.

First, we identified the proteins most separated between the control and trisomic groups by ranking proteins according to the absolute difference between the control-group mean and the trisomy-group mean. Using the leading proteins from this ranking, we defined the \emph{control centroid} in standardized proteomic space and computed, for each sample, the Euclidean distance to this control reference. We then defined
\[
\mathrm{RecoveryScore} = -\,\mathrm{DistanceToControl},
\]
so that larger values indicate a proteomic profile closer to the control reference and hence better biological recovery.

This score is not ad hoc: it directly reflects how closely a given mouse resembles the control proteomic state in the most biologically differentiating coordinates. The resulting groupwise summaries are given in Table~\ref{tab:recovery_summary}. Notably, among trisomic mice, the memantine-treated non-stimulated subgroup exhibits mean recovery score $-3.7440$, substantially better than the saline non-stimulated subgroup at $-4.9376$. In the stimulated trisomic subgroup, memantine also performs slightly better than saline, though the difference is much smaller.

\begin{table}[h!]
\centering
\caption{Recovery-score summary by genotype, treatment, and behavioral subgroup. Higher recovery score is better.}
\label{tab:recovery_summary}
\begin{tabular}{lllrccc}
\toprule
Genotype & Treatment & Behavior & $n$ & Mean recovery & SD recovery & Median recovery \\
\midrule
Control & Memantine & NotStimulated & 150 & -3.4351 & 0.9388 & -3.4112 \\
Control & Memantine & Stimulated & 150 & -3.8060 & 1.3515 & -3.5073 \\
Control & Saline & NotStimulated & 135 & -3.7100 & 1.9830 & -3.1175 \\
Control & Saline & Stimulated & 135 & -4.8579 & 2.8015 & -3.7499 \\
Trisomy & Memantine & NotStimulated & 135 & -3.7440 & 1.3092 & -3.4119 \\
Trisomy & Memantine & Stimulated & 135 & -4.6930 & 1.2108 & -4.5926 \\
Trisomy & Saline & NotStimulated & 135 & -4.9376 & 1.3441 & -4.9475 \\
Trisomy & Saline & Stimulated & 105 & -4.8154 & 1.5141 & -4.5037 \\
\bottomrule
\end{tabular}
\end{table}

These contrasts are visualized in Figures~\ref{fig:recovery_genotype_treatment}--\ref{fig:recovery_trisomy_violin}. The first two figures show the overall distribution of recovery scores by genotype/treatment and behavior/treatment, while the violin plot in Figure~\ref{fig:recovery_trisomy_violin} concentrates specifically on trisomic mice, where the scientific treatment question is most relevant.

\begin{figure}[h!]
\centering
\includegraphics[width=0.72\textwidth]{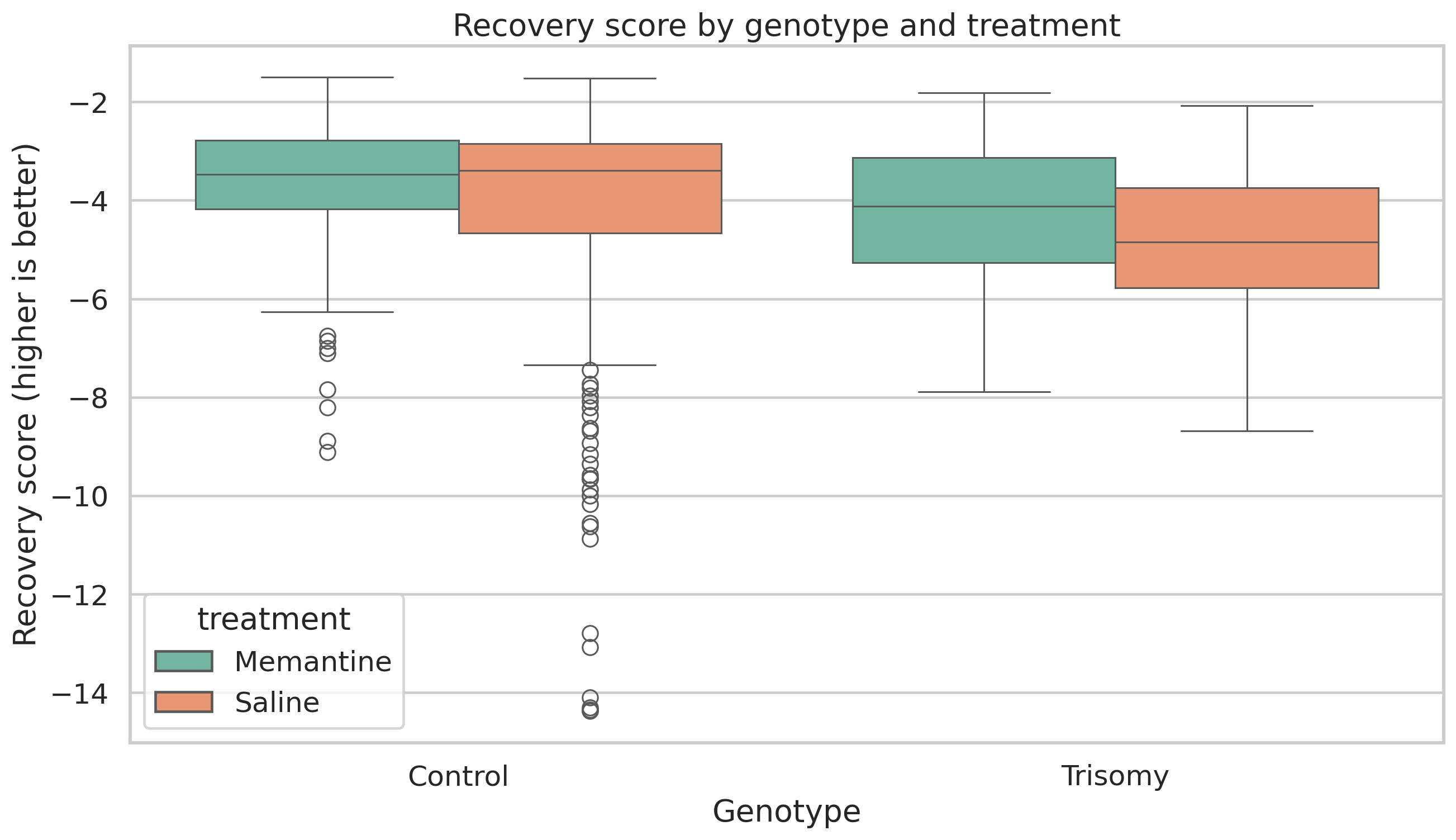}
\caption{Recovery score by genotype and treatment.}
\label{fig:recovery_genotype_treatment}
\end{figure}

\begin{figure}[h!]
\centering
\includegraphics[width=0.72\textwidth]{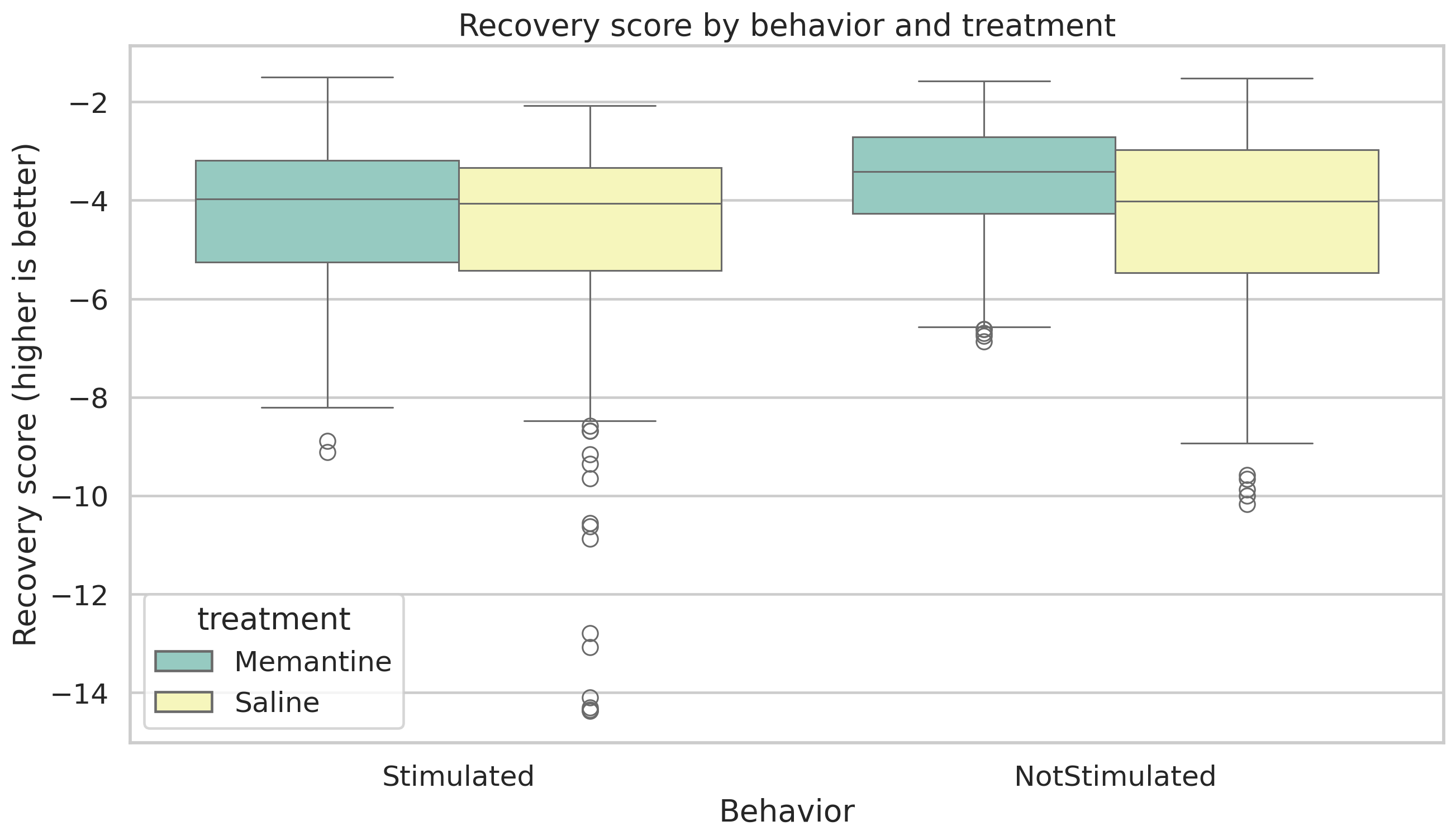}
\caption{Recovery score by behavioral condition and treatment.}
\label{fig:recovery_behavior_treatment}
\end{figure}

\begin{figure}[h!]
\centering
\includegraphics[width=0.72\textwidth]{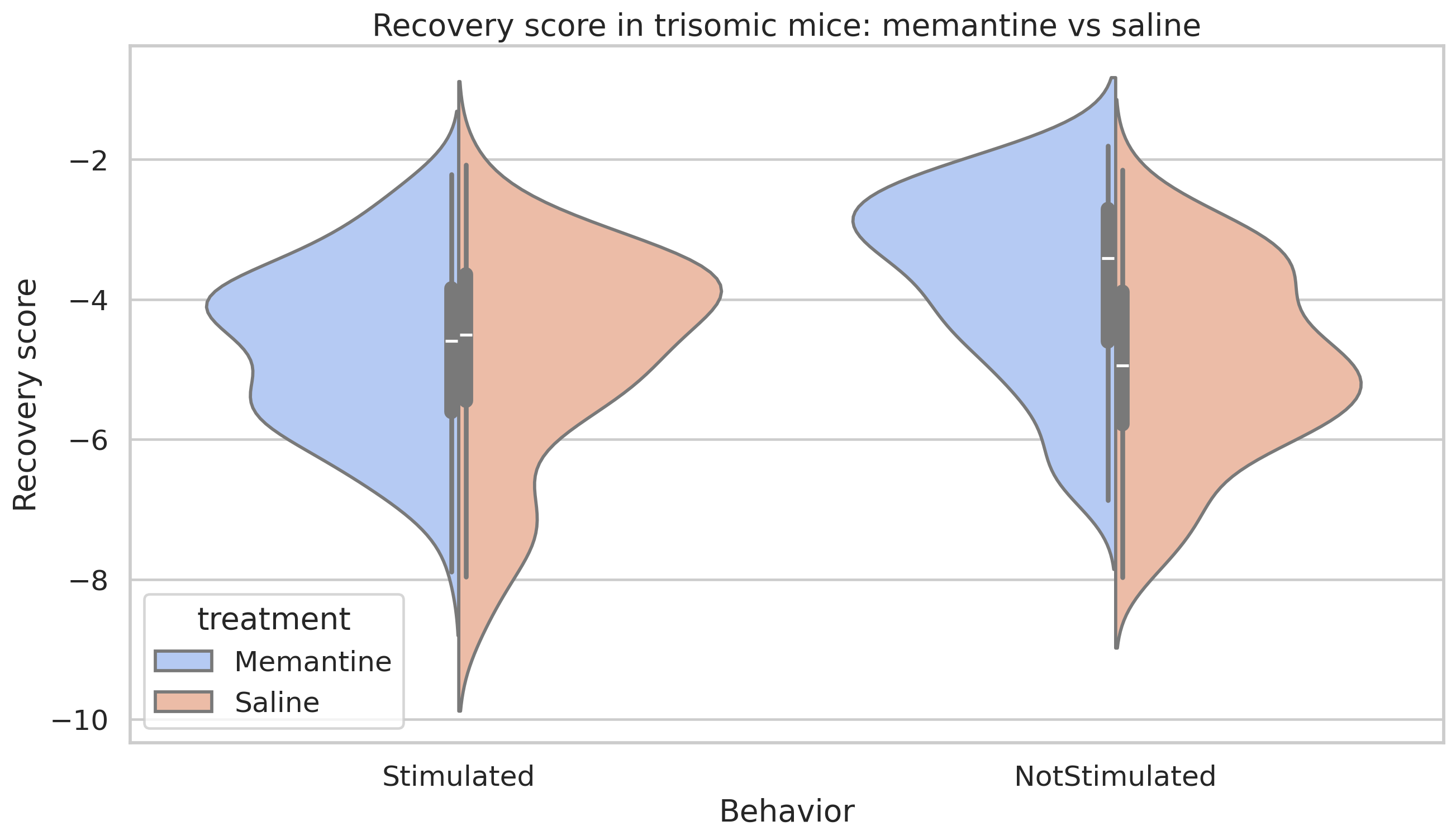}
\caption{Recovery score among trisomic mice by behavioral condition and treatment.}
\label{fig:recovery_trisomy_violin}
\end{figure}

\subsection{Bayesian credible-risk treatment-effect analysis}
\label{subsec:bayesian_realdata}

We now apply our proposed methodology. For each trisomic subgroup, we treated the difference
\[
\Delta = \mathbb{E}(\mathrm{RecoveryScore}\mid \text{Memantine})
-
\mathbb{E}(\mathrm{RecoveryScore}\mid \text{Saline})
\]
as the treatment effect of interest. Positive values of $\Delta$ imply that memantine shifts the proteomic profile closer to the control reference state.

Using an approximate conjugate Bayesian analysis for the subgroup means, we computed posterior means, posterior standard deviations, and $95\%$ credible intervals for $\Delta$. We then formed our proposed \emph{credible-risk score}
\[
\mathrm{CRS} = \widehat{\Delta}_{\mathrm{post}} - \kappa \,\widehat{\sigma}_{\mathrm{post}},
\qquad \kappa = 0.60,
\]
which penalizes uncertain positive effects and therefore yields a more conservative decision rule than posterior mean alone.

The results are given in Table~\ref{tab:subgroup_posterior}. For the \emph{Stimulated} trisomic subgroup, the posterior mean effect is only $0.1219$ with a $95\%$ credible interval $(-0.2324,\;0.4763)$, indicating weak evidence. However, the credible-risk score remains slightly positive at $0.0135$, so the method still yields a cautious memantine-favored decision. In contrast, for the \emph{NotStimulated} trisomic subgroup, the posterior mean effect is $1.1933$ with credible interval $(0.8769,\;1.5098)$, and the credible-risk score is $1.0965$, giving strong evidence in favor of memantine.

Thus, the real-data analysis suggests that the treatment benefit is not homogeneous: the proteomic improvement associated with memantine is especially pronounced in the trisomic non-stimulated subgroup.

\begin{table}[h!]
\centering
\caption{Posterior treatment effects of memantine versus saline among trisomic mice. The proposed method uses the credible-risk score $\mathrm{CRS}=\widehat{\Delta}_{\mathrm{post}}-\kappa\widehat{\sigma}_{\mathrm{post}}$ with $\kappa=0.60$.}
\label{tab:subgroup_posterior}
\begin{tabular}{lrrrrrrl}
\toprule
Subgroup & $n_{\mathrm{mem}}$ & $n_{\mathrm{sal}}$ & Post.\ mean & Post.\ SD & CI low & CI high & Recommendation \\
\midrule
Stimulated & 135 & 105 & 0.1219 & 0.1808 & -0.2324 & 0.4763 & Memantine favored \\
NotStimulated & 135 & 135 & 1.1933 & 0.1615 & 0.8769 & 1.5098 & Memantine favored \\
\bottomrule
\end{tabular}
\end{table}

Figure~\ref{fig:forest_subgroup} provides the corresponding forest plot. The visual contrast between the two subgroups is clear: the not-stimulated group shows a decisively positive and relatively precise posterior effect, while the stimulated group lies close to the boundary of practical uncertainty.

\begin{figure}[h!]
\centering
\includegraphics[width=0.72\textwidth]{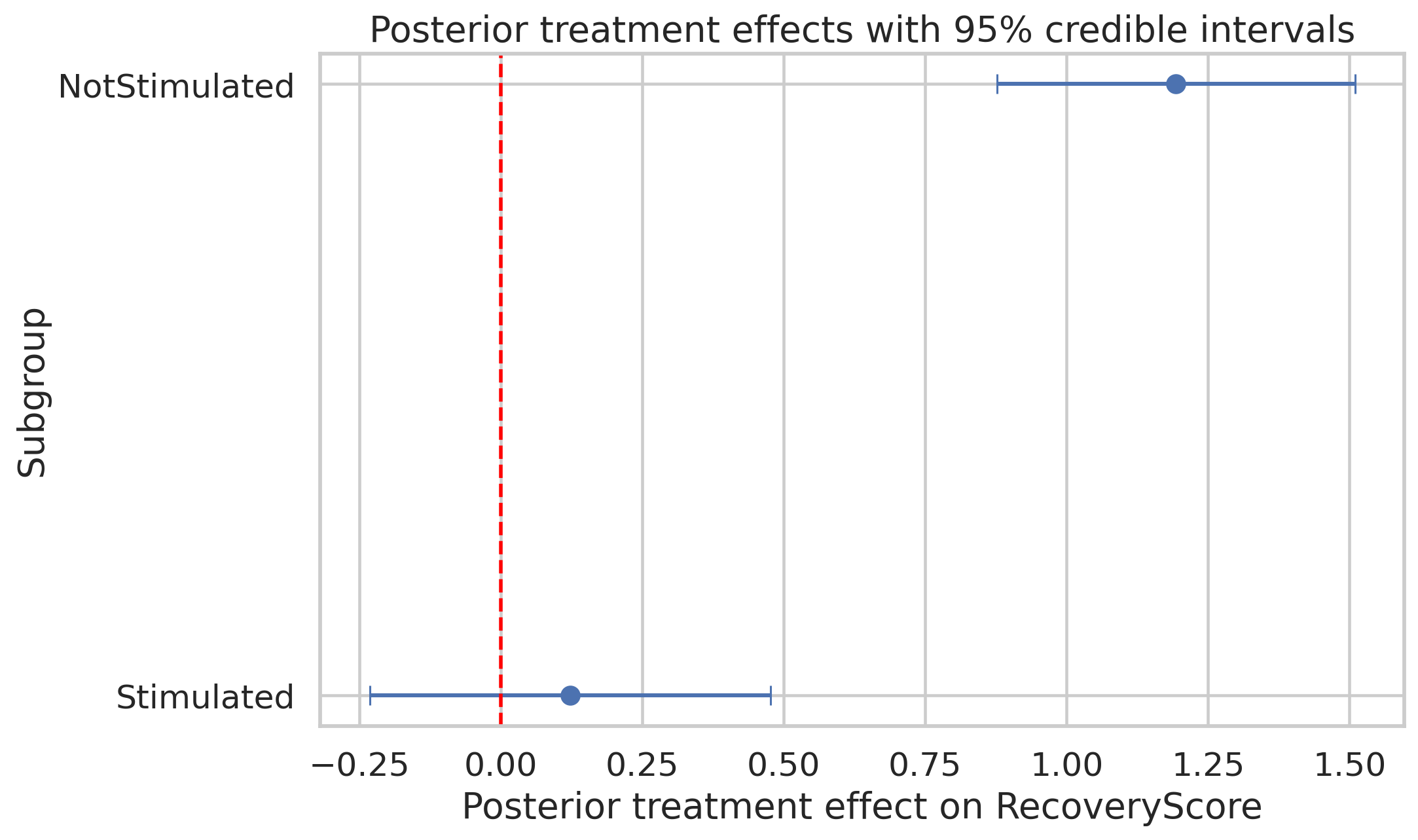}
\caption{Posterior treatment effects with $95\%$ credible intervals for the trisomic subgroups.}
\label{fig:forest_subgroup}
\end{figure}

\subsection{Sequential Bayesian learning}
\label{subsec:sequential_realdata}

A distinctive feature of our methodology is that it is not merely descriptive; it is \emph{sequentially adaptive}. To illustrate this, we processed the \emph{Trisomy + Stimulated} subgroup in random sequential order and updated the posterior treatment effect after each new observation. This produces a real-data analogue of the learning component in our general Bayesian strategic framework.

Figures~\ref{fig:sequential_effect} and \ref{fig:sequential_crs} show the resulting sequential learning behavior. The posterior effect and its uncertainty band evolve over time, while the credible-risk score tracks the stability of the treatment recommendation. These plots demonstrate that our method can operate naturally as an evidence-updating procedure, which is especially important in adaptive biological or biomedical experimentation.

\begin{figure}[h!]
\centering
\includegraphics[width=0.72\textwidth]{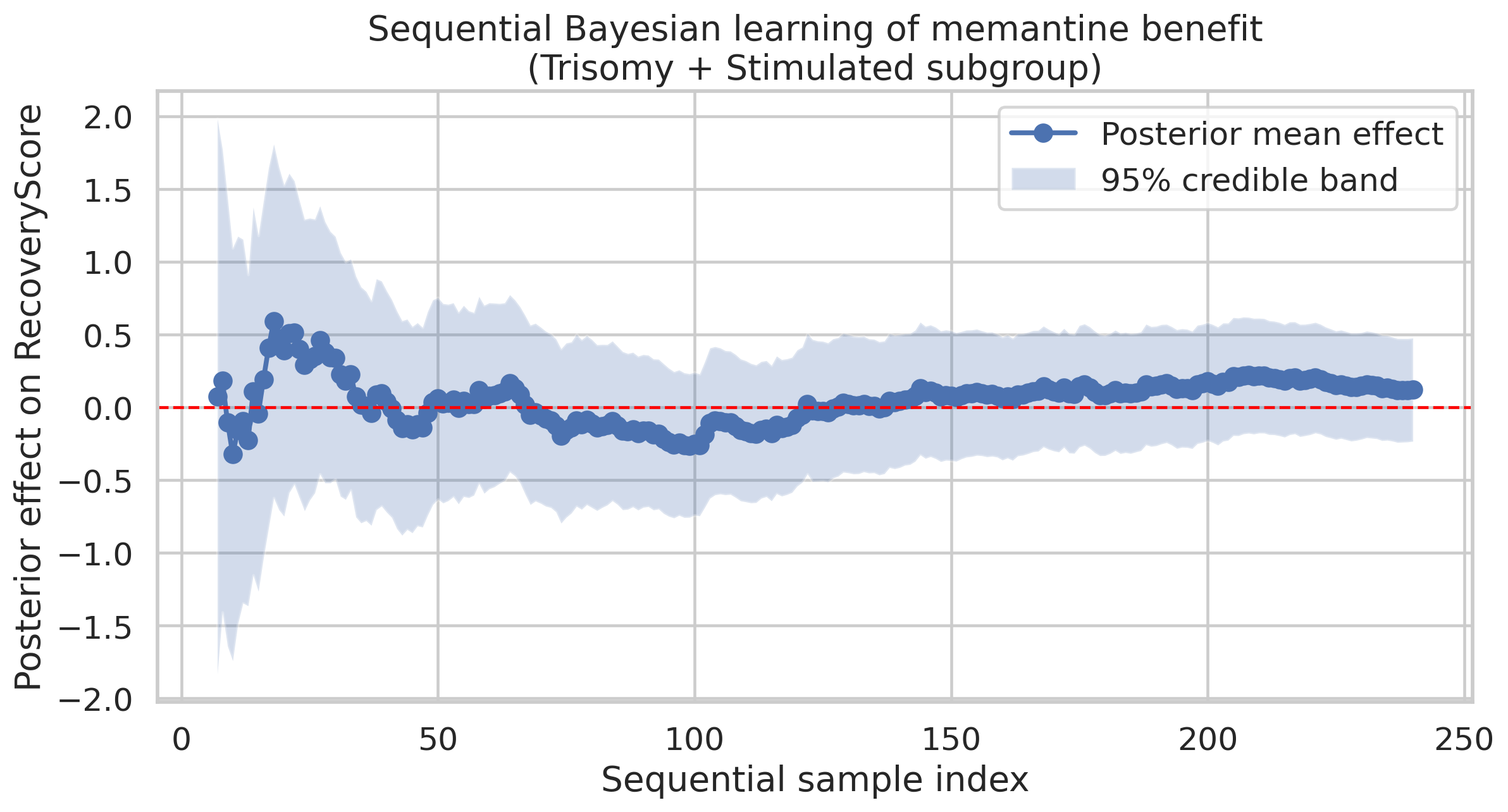}
\caption{Sequential Bayesian learning curve for the treatment effect in the trisomic stimulated subgroup.}
\label{fig:sequential_effect}
\end{figure}

\begin{figure}[h!]
\centering
\includegraphics[width=0.72\textwidth]{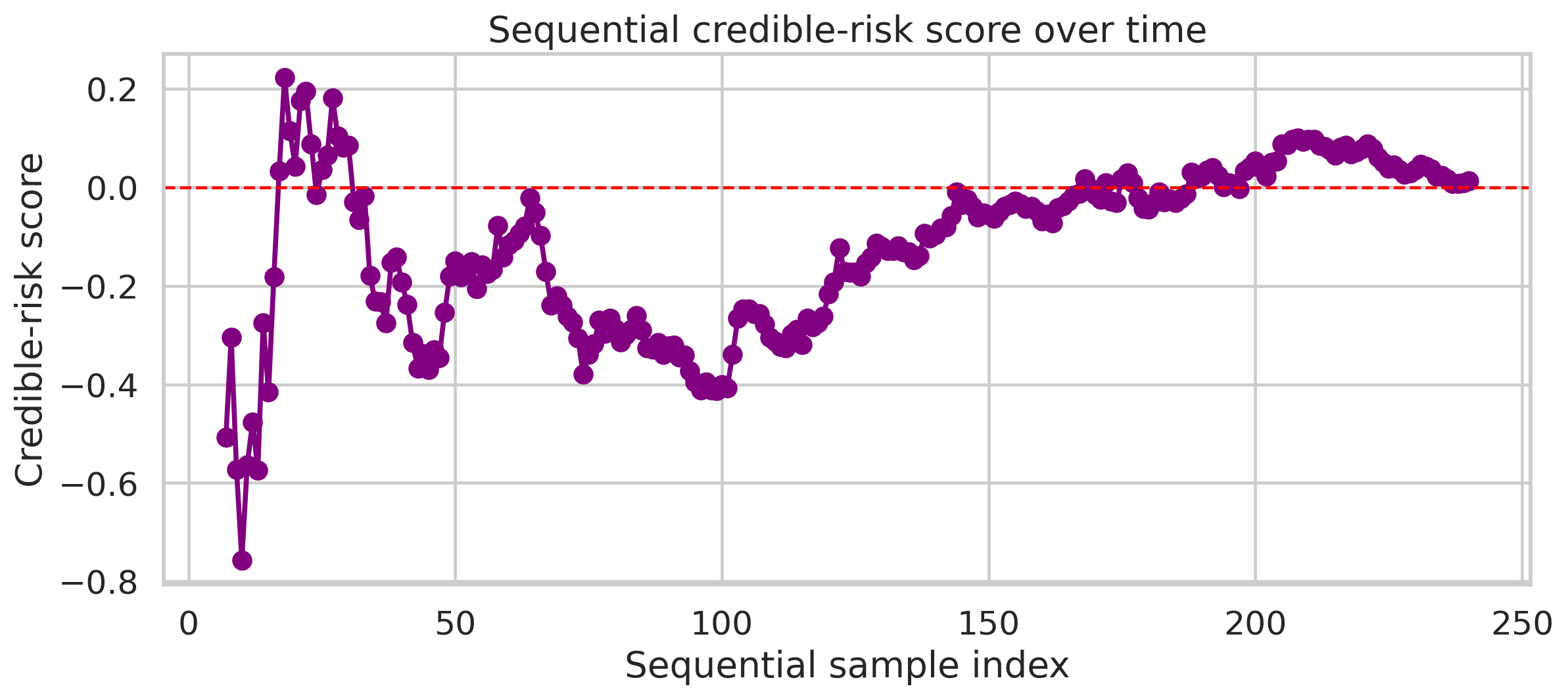}
\caption{Sequential credible-risk score for the treatment effect in the trisomic stimulated subgroup.}
\label{fig:sequential_crs}
\end{figure}

\subsection{Latent proteomic states}
\label{subsec:latent_states_realdata}

To enrich the empirical analysis beyond observed labels alone, we estimated latent proteomic states using PCA scores followed by $k$-means clustering. This reveals whether treatment effects differ across hidden molecular regimes rather than only across observed experimental categories.

The latent structure is shown in Figure~\ref{fig:latent_pca}. In the fitted solution, the trisomic mice appearing in the output were represented in two latent states with sufficient subgroup sample sizes for treatment-effect analysis. Table~\ref{tab:latent_state_effects} shows that both latent states display positive posterior memantine effects and positive credible-risk scores:
\[
0.5142 \quad \text{and} \quad 0.6947
\]
for the posterior mean effect, respectively. Hence, the beneficial memantine signal is not confined to a single latent configuration.

\begin{figure}[h!]
\centering
\includegraphics[width=0.72\textwidth]{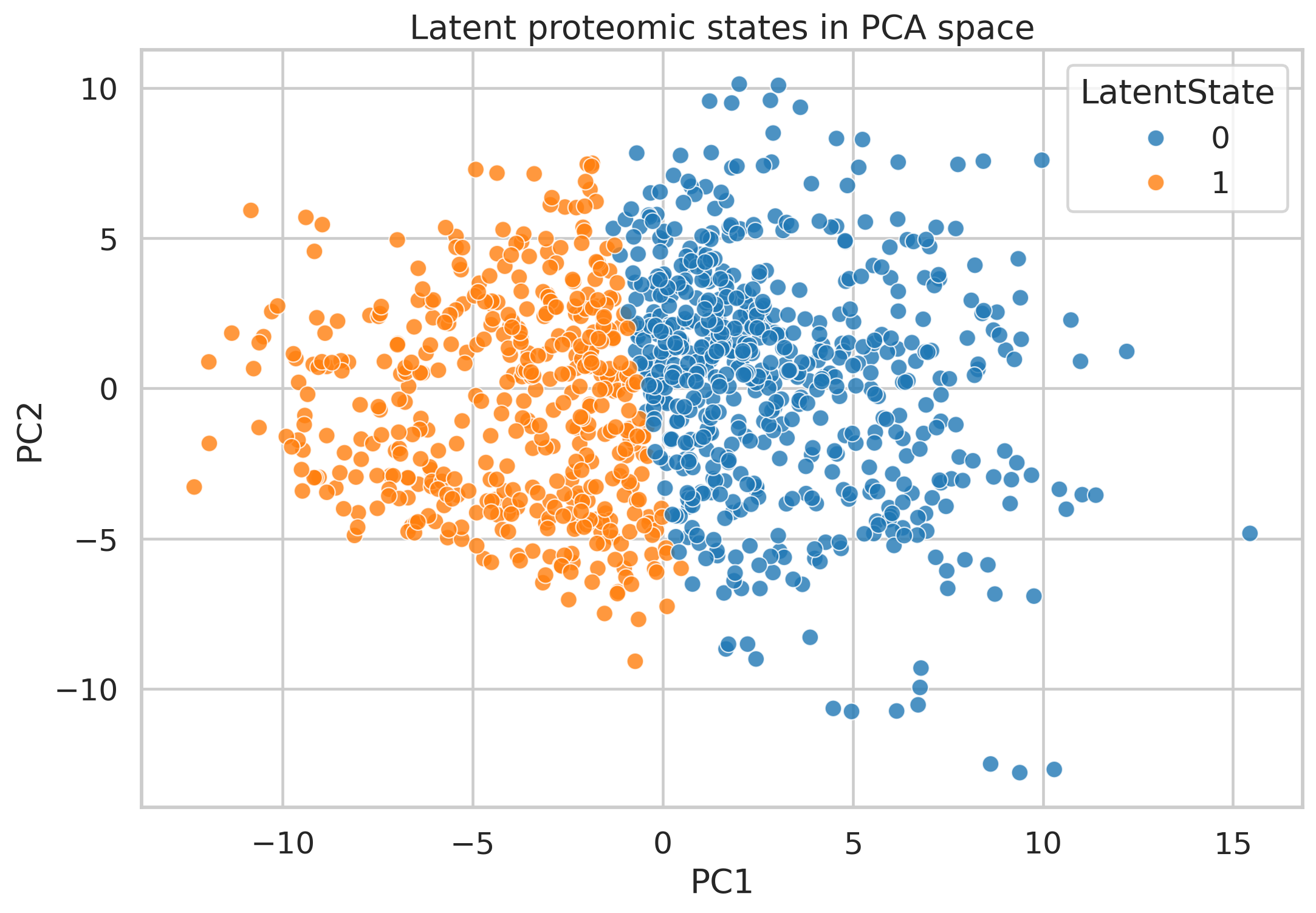}
\caption{Latent proteomic states represented in PCA space.}
\label{fig:latent_pca}
\end{figure}

\begin{table}[h!]
\centering
\caption{Posterior treatment effects within latent proteomic states among trisomic mice.}
\label{tab:latent_state_effects}
\begin{tabular}{rrrrrrrl}
\toprule
Latent state & $n_{\mathrm{mem}}$ & $n_{\mathrm{sal}}$ & Post.\ mean & Post.\ SD & CI low & CI high & Recommendation \\
\midrule
0 & 160 & 117 & 0.5142 & 0.1627 & 0.1953 & 0.8332 & Memantine favored \\
1 & 110 & 123 & 0.6947 & 0.1728 & 0.3560 & 1.0333 & Memantine favored \\
\bottomrule
\end{tabular}
\end{table}

The corresponding effect heatmap is shown in Figure~\ref{fig:latent_heatmap}, which provides a compact summary of both posterior mean effect and credible-risk score across latent states.

\begin{figure}[h!]
\centering
\includegraphics[width=0.65\textwidth]{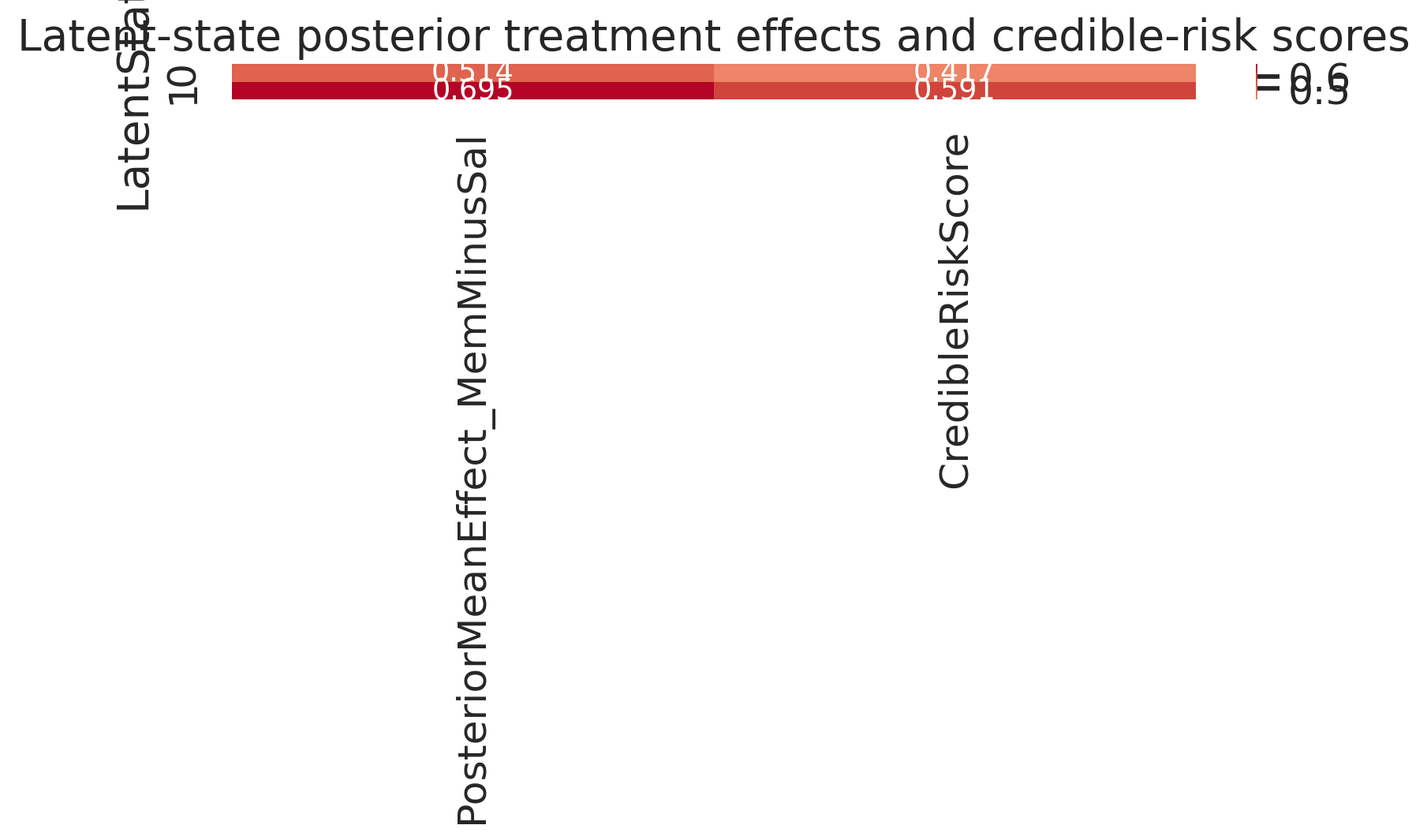}
\caption{Posterior treatment effects and credible-risk scores across latent proteomic states.}
\label{fig:latent_heatmap}
\end{figure}

\subsection{Protein-level recovery ranking}
\label{subsec:protein_level_realdata}

To move from global effect summaries to interpretable molecular markers, we ranked proteins according to how strongly memantine moved trisomic expression profiles toward the control-group mean. For a protein $j$, we computed the improvement score
\[
I_j
=
\left| \bar X^{(\mathrm{sal})}_j - \bar X^{(\mathrm{ctrl})}_j \right|
-
\left| \bar X^{(\mathrm{mem})}_j - \bar X^{(\mathrm{ctrl})}_j \right|,
\]
so that positive values indicate that memantine brings the trisomic mean closer to the control mean.

The leading proteins are listed in Table~\ref{tab:top_proteins}. The most prominent signal is \texttt{pPKCG\_N}, with improvement score $0.5106$ and bootstrap interval $(0.4114,\;0.6026)$, followed by \texttt{NR2A\_N} with improvement score $0.2698$ and interval $(0.1019,\;0.4217)$. Several additional proteins, including \texttt{S6\_N}, \texttt{pNR2B\_N}, \texttt{pMTOR\_N}, \texttt{NR1\_N}, and \texttt{pP70S6\_N}, also show positive evidence of movement toward the control proteomic profile under memantine. These patterns strengthen the biological interpretability of the proposed recovery-score analysis.

\begin{table}[h!]
\centering \small
\caption{Top proteins whose trisomic expression under memantine moves closer to the control-group mean. Positive values indicate better recovery toward the control reference.}
\label{tab:top_proteins}
\begin{tabular}{lrrrrrr}
\toprule
Protein & Control mean & Trisomy mem. mean & Trisomy sal. mean & Improvement & CI low & CI high \\
\midrule
pPKCG\_N & 1.5570 & 1.6336 & 2.1441 & 0.5106 & 0.4114 & 0.6026 \\
NR2A\_N & 3.9847 & 3.8111 & 3.5413 & 0.2698 & 0.1019 & 0.4217 \\
S6\_N & 0.3847 & 0.4232 & 0.5419 & 0.1187 & 0.0949 & 0.1386 \\
pNR2B\_N & 1.5933 & 1.5751 & 1.4711 & 0.1040 & 0.0553 & 0.1433 \\
pMTOR\_N & 0.7792 & 0.7731 & 0.6950 & 0.0782 & 0.0540 & 0.0978 \\
pELK\_N & 1.4432 & 1.4727 & 1.3422 & 0.0714 & -0.0278 & 0.1491 \\
NR1\_N & 2.3337 & 2.2880 & 2.2207 & 0.0673 & 0.0107 & 0.1225 \\
pPKCAB\_N & 1.5159 & 1.4471 & 1.6352 & 0.0504 & -0.0544 & 0.1620 \\
pP70S6\_N & 0.3687 & 0.4006 & 0.4508 & 0.0502 & 0.0196 & 0.0825 \\
P38\_N & 0.4299 & 0.4209 & 0.3743 & 0.0466 & 0.0313 & 0.0595 \\
NR2B\_N & 0.5806 & 0.5672 & 0.5269 & 0.0403 & 0.0231 & 0.0567 \\
pNR2A\_N & 0.7674 & 0.7001 & 0.6602 & 0.0398 & 0.0104 & 0.0720 \\
AcetylH3K9\_N & 0.1738 & 0.2456 & 0.2851 & 0.0395 & 0.0058 & 0.0720 \\
IL1B\_N & 0.5362 & 0.5357 & 0.4970 & 0.0387 & 0.0189 & 0.0461 \\
Tau\_N & 0.1927 & 0.2128 & 0.2500 & 0.0372 & 0.0259 & 0.0494 \\
CaNA\_N & 1.3240 & 1.3372 & 1.3710 & 0.0338 & -0.0285 & 0.0718 \\
pNR1\_N & 0.8465 & 0.8168 & 0.7867 & 0.0301 & 0.0103 & 0.0494 \\
pCASP9\_N & 1.5520 & 1.5667 & 1.5189 & 0.0184 & -0.0340 & 0.0619 \\
GluR3\_N & 0.2304 & 0.2204 & 0.2032 & 0.0172 & 0.0121 & 0.0221 \\
MTOR\_N & 0.4664 & 0.4438 & 0.4291 & 0.0147 & 0.0020 & 0.0275 \\
\bottomrule
\end{tabular}
\end{table}

The forest representation in Figure~\ref{fig:protein_forest} summarizes these leading recovery-associated proteins, while Figure~\ref{fig:protein_corr} shows their correlation structure. These two figures together indicate that the empirical effect of memantine is not driven by a single isolated variable, but by a coordinated set of molecular shifts.

\begin{figure}[h!]
\centering
\includegraphics[width=0.78\textwidth]{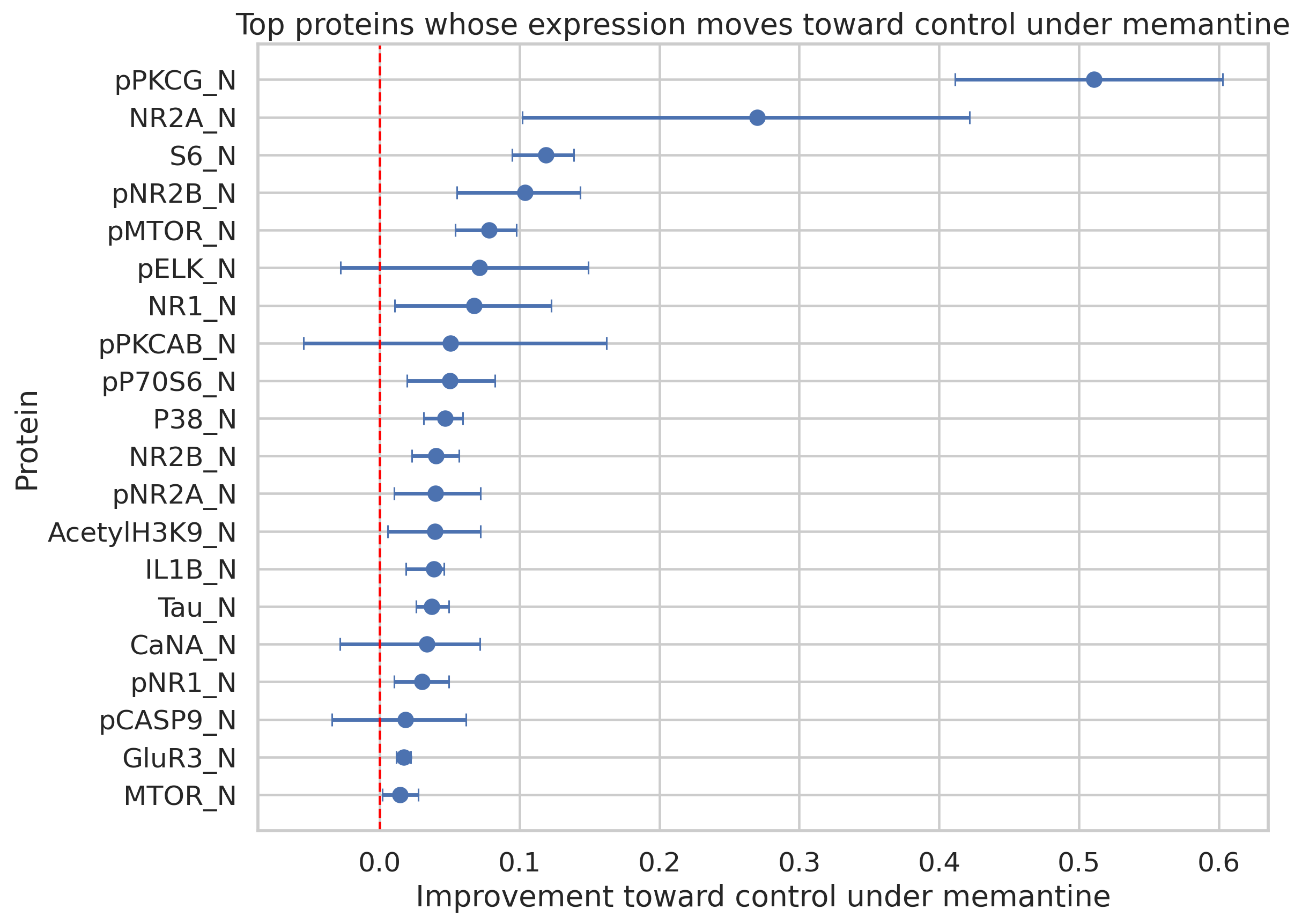}
\caption{Top proteins whose expression under memantine moves trisomic mice toward the control reference profile.}
\label{fig:protein_forest}
\end{figure}

\begin{figure}[h!]
\centering
\includegraphics[width=0.72\textwidth]{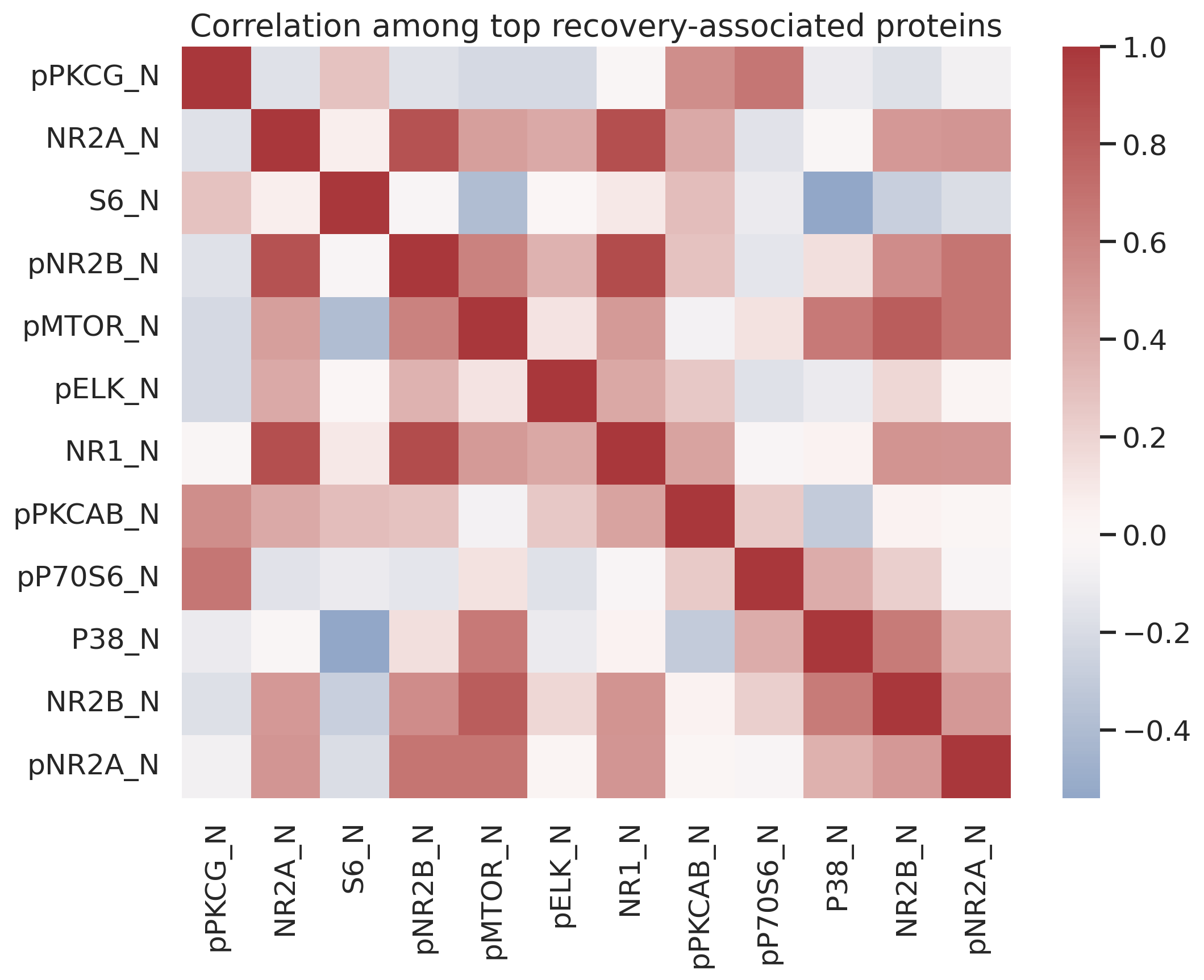}
\caption{Correlation structure among the top recovery-associated proteins.}
\label{fig:protein_corr}
\end{figure}

\subsection{Global treatment-effect summary}
\label{subsec:global_realdata}

Finally, Table~\ref{tab:global_effects} summarizes the global treatment effect separately in the control and trisomy groups. The posterior mean effect of memantine on the recovery score is positive in both groups:
\[
0.6626 \quad \text{for Control}, \qquad
0.6655 \quad \text{for Trisomy}.
\]
The bootstrap confidence intervals are also fully positive in both cases. Importantly, the credible-risk scores remain positive for both groups, indicating that the estimated benefit is not merely a posterior-mean artifact but survives conservative uncertainty adjustment.

\begin{table}[h!]
\centering
\caption{Global treatment-effect summary by genotype. Positive effects favor memantine.}
\label{tab:global_effects}
\begin{tabular}{lrrrrrrrrrr}
\toprule
Group & $n_{\mathrm{mem}}$ & $n_{\mathrm{sal}}$ & Post.\ mean & Post.\ SD & Post.\ CI low & Post.\ CI high & Boot.\ mean & Boot.\ CI low & Boot.\ CI high & CRS \\
\midrule
Control & 300 & 270 & 0.6626 & 0.1660 & 0.3372 & 0.9881 & 0.6717 & 0.3501 & 1.0048 & 0.5630 \\
Trisomy & 270 & 240 & 0.6655 & 0.1229 & 0.4247 & 0.9063 & 0.6649 & 0.4267 & 0.9228 & 0.5918 \\
\bottomrule
\end{tabular}
\end{table}

The corresponding barplot is shown in Figure~\ref{fig:global_barplot}. The near equality of the global posterior effects across control and trisomy groups is interesting in itself, but the subgroup analyses above reveal that the internal composition of this effect is heterogeneous and scientifically richer than a single overall average would suggest.

\begin{figure}[h!]
\centering
\includegraphics[width=0.70\textwidth]{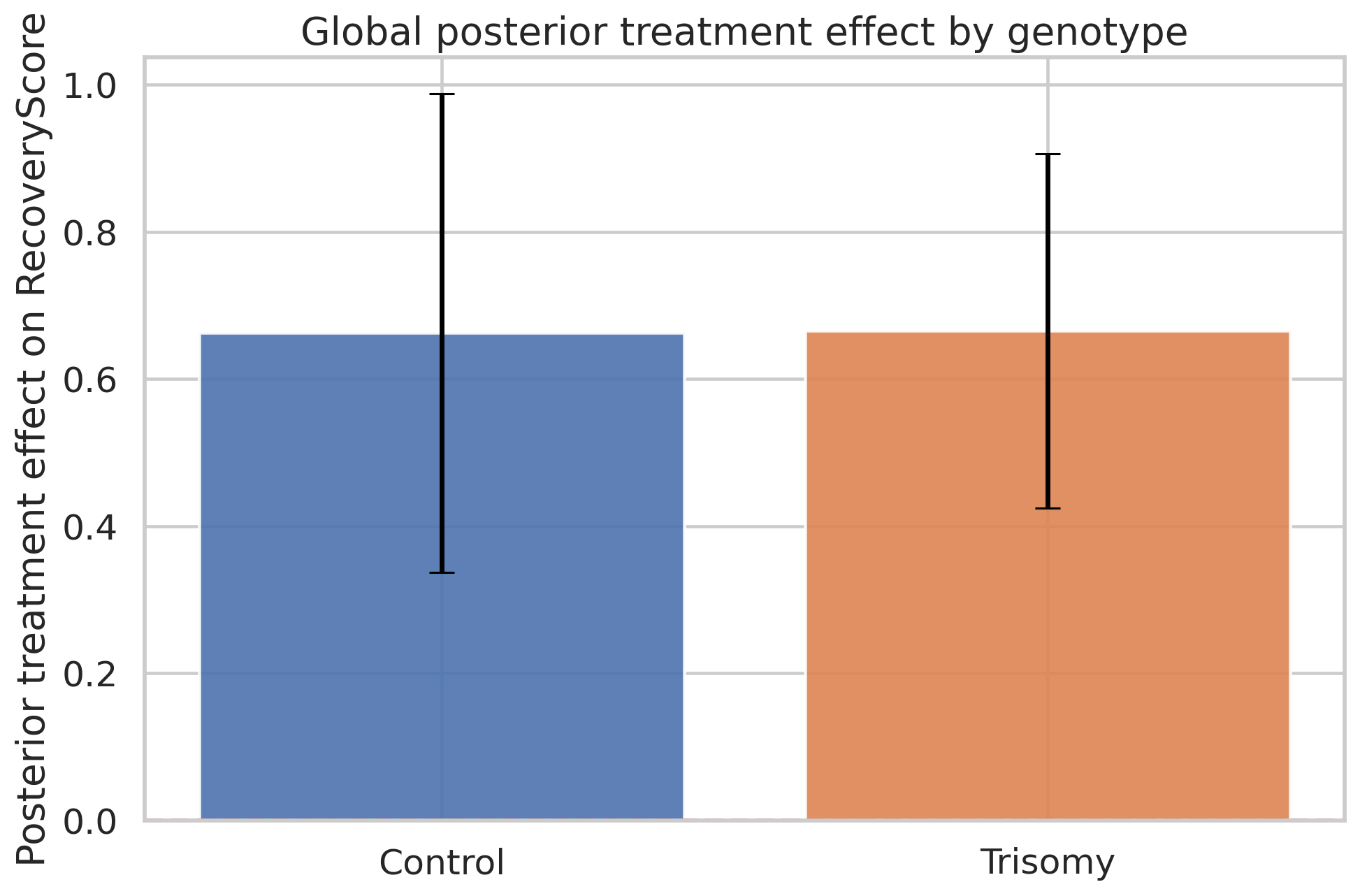}
\caption{Global posterior treatment effect of memantine on the recovery score by genotype.}
\label{fig:global_barplot}
\end{figure}

\begin{remark}[Key findings from the real-data analysis]
\label{rem:realdata_keyfindings}
The real-data analysis demonstrates that the proposed Bayesian credible-risk methodology is not merely a theoretical construction; it yields a coherent and biologically interpretable empirical workflow on a high-dimensional proteomic dataset. The method contributes value in several ways. First, it constructs a principled recovery score by anchoring inference to the control reference proteomic state. Second, it quantifies subgroup-specific memantine effects with posterior uncertainty, rather than relying only on point summaries. Third, by using the credible-risk score, it converts posterior uncertainty into conservative scientific decision-making. The strongest empirical signal appears in the trisomic non-stimulated subgroup, where the posterior memantine benefit is both large and precise. In addition, the latent-state analysis shows that the favorable memantine signal persists across hidden proteomic regimes, while the protein-level ranking identifies specific markers such as \texttt{pPKCG\_N}, \texttt{NR2A\_N}, \texttt{S6\_N}, and \texttt{pNR2B\_N} as particularly responsive toward the control reference. Altogether, this real-data study supports the practical worth of our proposed methodology as an uncertainty-aware inferential tool for complex biological data.
\end{remark}

\section{Overall Discussion}
\label{sec:overall_discussion}

Taken together, the theoretical development, simulation evidence, and real-data illustration clarify the role of the proposed methodology.

From a modeling perspective, the paper shows that competitive operational decisions can be embedded naturally in a Bayesian dynamic game when uncertainty about the market and uncertainty about rivals coexist. The central conceptual contribution is that beliefs are not passive summaries of information; they are active state variables that shape equilibrium behavior. This makes the framework especially suitable for repeated decision environments in which learning and competition unfold simultaneously.

From a computational and empirical perspective, the simulation study indicates that Bayesian learning is indispensable in the proposed competitive setting. The main performance gain comes from adaptively updating market and rival information rather than relying on static prior-driven decisions. Within the class of Bayesian learners, the credible-risk rule is best interpreted as a conservative regularizer: it slightly improves profit-based performance in the simulation while preserving broadly similar learning quality. Thus, its value lies less in dramatic numerical dominance and more in producing disciplined decisions under uncertainty.

The real-data analysis complements the simulation study by showing that the credible-risk principle is useful beyond the stylized inventory game itself. On the proteomic dataset, it yields interpretable subgroup, latent-state, and marker-level findings. This strengthens the methodological message of the paper: the proposed framework is valuable not only as a formal equilibrium model, but also as a general uncertainty-aware Bayesian decision paradigm.

At the same time, the paper should be interpreted with appropriate caution. The computational scheme is approximate, the simulation design is stylized, and the real-data illustration demonstrates the inferential principle rather than a literal market game. These limitations do not weaken the contribution; instead, they define a transparent boundary between what has been established here and what remains open for further development.

\section{Conclusion and Future Work}
\label{sec:conclusion}

This paper develops a hierarchical Bayesian dynamic game for competitive inventory and pricing under incomplete information. The proposed framework integrates private rival characteristics, Bayesian learning about demand and substitution, and a credible-risk decision rule that transforms posterior uncertainty into conservative strategic behavior. In this way, the paper builds a bridge between Bayesian game theory and operations research and provides a unified language for learning-driven competition.

The simulation study shows that posterior learning is central to strong performance in repeated competitive environments, while the credible-risk adjustment serves as a useful operational regularizer. The real-data illustration further demonstrates that the same uncertainty-aware Bayesian principle can support interpretable analysis in complex high-dimensional biological data. Together, these results suggest that the proposed methodology is both mathematically meaningful and practically useful.

Several directions remain for future work. One important extension is to strengthen the computational and theoretical treatment of equilibrium under richer state spaces and more general demand models. Another is to move from duopoly to larger multi-agent settings, including networked or mean-field competition. It would also be valuable to study alternative uncertainty penalties, richer censored-demand learning schemes, and formal welfare or information-disclosure questions. Finally, domain-specific applications in supply chains, platform markets, and adaptive experimental systems offer promising directions for translating the present framework into substantive practice.

\section*{Declarations}

\noindent\textbf{Ethics approval and consent to participate.}
Not applicable.

\medskip
\noindent\textbf{Consent for publication.}
Not applicable.

\medskip
\noindent\textbf{Availability of data and materials.}
The real dataset analyzed in this paper is publicly available through OpenML and is also documented in the UCI Machine Learning Repository. The simulation data were generated by the author using the procedures described in the paper, and is available inside the code repository.

\medskip
\noindent\textbf{Code availability.}
All code for reproducing the experiments and generating the figures/tables is publicly available at:

\url{https://github.com/debashisdotchatterjee/Bayesian-Dynamic-Game-1}.

\noindent\textbf{Funding.}
The author received no specific funding for this work.

\noindent\textbf{Competing interests.}
The author declares that there are no competing interests.

\appendix
\section{Appendix: Formal Proofs Under Strengthened Regularity Conditions}
\label{app:proofs}

In the main text, the existence results were stated at a high level in order to emphasize the modeling structure. To make the proofs mathematically rigorous, we now record strengthened regularity assumptions that are sufficient for the corresponding fixed-point and dynamic programming arguments. The proofs given below are fully valid under these strengthened assumptions.

\subsection{Strengthened assumptions}

Let $\mathcal{X}_i$ denote player $i$'s belief-state space and let $\mathcal{A}_i$ denote the action space.

\begin{assumption}[Compact metric action spaces]
\label{ass:compact_metric}
For each player $i$, the action space $\mathcal{A}_i$ is a nonempty compact metric space.
\end{assumption}

\begin{assumption}[Compact metric state spaces]
\label{ass:compact_state}
For each player $i$, the state space $\mathcal{X}_i$ is a nonempty compact metric space.
\end{assumption}

\begin{assumption}[Bounded continuous one-period return]
\label{ass:bounded_return}
For each player $i$, the one-period credible-risk return
\[
r_i(x_i,a_i,a_j)
\]
is real-valued, bounded, and jointly continuous on
\[
\mathcal{X}_i \times \mathcal{A}_i \times \mathcal{A}_j.
\]
\end{assumption}

\begin{assumption}[Weakly continuous transition kernel]
\label{ass:weak_transition}
For each player $i$, the controlled transition kernel
\[
Q_i(\cdot \mid x_i,a_i,a_j)
\]
from $\mathcal{X}_i \times \mathcal{A}_i \times \mathcal{A}_j$ into probability measures on $\mathcal{X}_i$ is weakly continuous.
\end{assumption}

\begin{assumption}[Discount factor]
\label{ass:discount}
The discount factor satisfies $0<\delta<1$.
\end{assumption}

\begin{assumption}[Mixed strategies]
\label{ass:mixed}
Players are allowed to use mixed actions. The set $\Delta(\mathcal{A}_i)$ of Borel probability measures on $\mathcal{A}_i$ is endowed with the weak topology.
\end{assumption}

\begin{assumption}[Quasi-concavity / convexification]
\label{ass:convex}
Either the action spaces are convex and the relevant objective function is quasi-concave in own action, or mixed strategies are used so that best-response sets are convex-valued.
\end{assumption}

\begin{remark}
Assumptions \ref{ass:compact_metric}--\ref{ass:convex} are stronger than the streamlined assumptions used in the main body, but they are standard and sufficient for rigorous existence arguments in discounted stochastic games with compact state and action spaces.
\end{remark}

\subsection{A rigorous best-response existence result}

We first restate the interim best-response result in a form that admits a complete proof.

\begin{proposition}[Existence of interim best responses]
\label{prop:bestresponse_appendix}
Suppose Assumptions \ref{ass:compact_metric}, \ref{ass:compact_state}, \ref{ass:bounded_return}, \ref{ass:weak_transition}, and \ref{ass:discount} hold. Fix player $j$'s measurable strategy $\sigma_j$. Let $V_j$ be a bounded continuous continuation rule for player $i$. Then, for every state $x_i \in \mathcal{X}_i$, player $i$'s interim optimization problem
\[
\sup_{a_i \in \mathcal{A}_i}
\left\{
r_i\bigl(x_i,a_i,\sigma_j(x_j)\bigr)
+
\delta \int_{\mathcal{X}_i} V_i(x_i')\,Q_i(dx_i' \mid x_i,a_i,\sigma_j(x_j))
\right\}
\]
admits at least one maximizer.
\end{proposition}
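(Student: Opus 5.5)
The plan is to fix the state $x_i$ and the rival's action $b:=\sigma_j(x_j)\in\mathcal{A}_j$. Then the interim objective is a real-valued function of the own action alone:
\[
F(a_i):=r_i(x_i,a_i,b)+\delta\int_{\mathcal{X}_i} V_i(x_i')\,Q_i(dx_i'\mid x_i,a_i,b).
\]
The proposition asks for a maximizer of $F$ over $\mathcal{A}_i$. By Assumption~\ref{ass:compact_metric}, $\mathcal{A}_i$ is a nonempty compact metric space. So it suffices to show that $F$ is continuous on $\mathcal{A}_i$ and finite everywhere, and then apply the Weierstrass extreme value theorem, exactly as in the informal proof of Proposition~\ref{prop:bestresponse_main}. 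The rival strategy enters only through the fixed point $b$, so its measurability plays no role at a fixed state. If $\sigma_j$ is mixed, I would read $r_i(x_i,a_i,\sigma_j(x_j))$ and $Q_i(\cdot\mid x_i,a_i,\sigma_j(x_j))$ as integrals against the fixed measure $\sigma_j(x_j)$. Continuity in $a_i$ then follows by bounded convergence from joint continuity and boundedness.

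\textbf{Step 1 (first term).} Continuity of $a_i\mapsto r_i(x_i,a_i,b)$ is immediate. Assumption~\ref{ass:bounded_return} gives joint continuity on $\mathcal{X}_i\times\mathcal{A}_i\times\mathcal{A}_j$, and restricting to the slice $\{x_i\}\times\mathcal{A}_i\times\{b\}$ preserves continuity. Boundedness of $r_i$ also guarantees that $F$ is finite.

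\textbf{Step 2 (continuation term).} I would take any sequence $a_i^n\to a_i$ in $\mathcal{A}_i$. Joint continuity and Assumption~\ref{ass:weak_transition} give $Q_i(\cdot\mid x_i,a_i^n,b)\Rightarrow Q_i(\cdot\mid x_i,a_i,b)$ weakly. Since $V_i$ is bounded and continuous on $\mathcal{X}_i$, the definition of weak convergence yields $\int V_i\,dQ_i(\cdot\mid x_i,a_i^n,b)\to\int V_i\,dQ_i(\cdot\mid x_i,a_i,b)$. By Assumption~\ref{ass:compact_state}, every continuous $V_i$ on $\mathcal{X}_i$ is automatically bounded, which removes any concern there. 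Because $\mathcal{A}_i$ is metric, sequential continuity equals continuity. The integral is also bounded by $\sup|V_i|$, so the continuation term is finite. Assumption~\ref{ass:discount} is not needed for existence at a single state. It matters only for the later Bellman fixed point, but it does confirm that the weighted sum is well defined.

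\textbf{Step 3 (conclusion).} $F$ is a sum of two continuous real functions on a compact set, so the supremum is finite and attained. The resulting set of maximizers is nonempty and, by continuity, closed.

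The main obstacle is purely interpretive. The statement writes the continuation rule as $V_j$ but integrates $V_i$, and it evaluates the rival action at $x_j$ while conditioning on $x_i$. I would resolve this by declaring the continuation function to be a bounded continuous $V_i:\mathcal{X}_i\to\R$, and by treating $b=\sigma_j(x_j)$, or its mixed version, as a fixed parameter determined by the current joint state. Once that is set, the analytic content reduces to the single weak-convergence step in Step 2.
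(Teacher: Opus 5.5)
Your proposal is correct and matches the paper's proof essentially step for step. You fix $x_i$ and the rival action, get continuity of the return term from joint continuity, get continuity of the continuation integral from weak continuity of $Q_i$ applied to the bounded continuous $V_i$, and conclude by Weierstrass on the compact $\mathcal{A}_i$; your side remarks (a mixed rival action handled by bounded convergence, $V_j$ read as the typo $V_i$, and $0<\delta<1$ being unnecessary at a single state) are accurate refinements the paper leaves implicit.
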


\begin{proof}
Fix $x_i \in \mathcal{X}_i$ and define
\[
\Phi_i(a_i)
=
r_i\bigl(x_i,a_i,\sigma_j(x_j)\bigr)
+
\delta \int_{\mathcal{X}_i} V_i(x_i')\,Q_i(dx_i' \mid x_i,a_i,\sigma_j(x_j)).
\]
We show that $\Phi_i$ is continuous on the compact set $\mathcal{A}_i$.

By Assumption \ref{ass:bounded_return}, the map
\[
a_i \mapsto r_i\bigl(x_i,a_i,\sigma_j(x_j)\bigr)
\]
is continuous. By Assumption \ref{ass:weak_transition}, if $a_i^{(n)} \to a_i$ in $\mathcal{A}_i$, then
\[
Q_i(\cdot \mid x_i,a_i^{(n)},\sigma_j(x_j))
\Rightarrow
Q_i(\cdot \mid x_i,a_i,\sigma_j(x_j))
\]
weakly. Since $V_i$ is bounded and continuous, the portmanteau theorem yields
\[
\int_{\mathcal{X}_i} V_i(x_i')\,Q_i(dx_i' \mid x_i,a_i^{(n)},\sigma_j(x_j))
\to
\int_{\mathcal{X}_i} V_i(x_i')\,Q_i(dx_i' \mid x_i,a_i,\sigma_j(x_j)).
\]
Hence $\Phi_i$ is continuous.

Because $\mathcal{A}_i$ is compact by Assumption \ref{ass:compact_metric}, the Weierstrass extreme value theorem implies that $\Phi_i$ attains its maximum on $\mathcal{A}_i$. Therefore at least one maximizing action exists.
\end{proof}

\subsection{Bellman operator and value existence}

We now show that the player's Bellman operator is a contraction on bounded continuous functions.

Let $C_b(\mathcal{X}_i)$ denote the Banach space of bounded continuous real-valued functions on $\mathcal{X}_i$, equipped with the sup norm
\[
\|V\|_\infty = \sup_{x_i \in \mathcal{X}_i} |V(x_i)|.
\]

For fixed rival strategy $\sigma_j$, define the Bellman operator $T_i:C_b(\mathcal{X}_i)\to C_b(\mathcal{X}_i)$ by
\[
(T_i V)(x_i)
=
\sup_{a_i\in\mathcal{A}_i}
\left\{
r_i\bigl(x_i,a_i,\sigma_j(x_j)\bigr)
+
\delta \int_{\mathcal{X}_i} V(x_i')\,Q_i(dx_i' \mid x_i,a_i,\sigma_j(x_j))
\right\}.
\]

\begin{lemma}[Bellman operator preserves bounded continuity]
\label{lem:preserve}
Under Assumptions \ref{ass:compact_metric}--\ref{ass:discount}, if $V \in C_b(\mathcal{X}_i)$ then $T_iV \in C_b(\mathcal{X}_i)$.
\end{lemma}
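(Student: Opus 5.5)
Throughout, the rival's action at player $i$'s state is treated as a continuous function of $x_i$; this is an extra hypothesis the lemma needs but does not state. We write $\tilde r_i(x_i,a_i)=r_i(x_i,a_i,\sigma_j(x_j))$ and $\tilde Q_i(\cdot\mid x_i,a_i)=Q_i(\cdot\mid x_i,a_i,\sigma_j(x_j))$. The plan is then the standard Berge maximum theorem argument. Define
\[
G(x_i,a_i)=\tilde r_i(x_i,a_i)+\delta\int_{\mathcal{X}_i}V(x_i')\,\tilde Q_i(dx_i'\mid x_i,a_i),
\]
so that $(T_iV)(x_i)=\sup_{a_i\in\mathcal{A}_i}G(x_i,a_i)$. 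We show that $T_iV$ is bounded and that it is continuous.

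\textbf{Boundedness.} Let $M=\sup|r_i|<\infty$, which is finite by Assumption~\ref{ass:bounded_return}. Since $\tilde Q_i$ is a probability measure,
\[
|G(x_i,a_i)|\le M+\delta\|V\|_\infty
\]
for every $(x_i,a_i)$. Taking the supremum gives $\|T_iV\|_\infty\le M+\delta\|V\|_\infty<\infty$.

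\textbf{Joint continuity of $G$.} Take $(x^{(n)},a^{(n)})\to(x,a)$.
\begin{enumerate}[label=(\roman*)]
\item The reward term converges by joint continuity of $r_i$ (Assumption~\ref{ass:bounded_return}) together with the continuity of the rival's action in $x_i$.
\item The kernels $\tilde Q_i(\cdot\mid x^{(n)},a^{(n)})$ converge weakly to $\tilde Q_i(\cdot\mid x,a)$ by Assumption~\ref{ass:weak_transition}, again using continuity of the rival's action. Because $V\in C_b(\mathcal{X}_i)$, the integrals $\int V\,d\tilde Q_i$ converge by the definition of weak convergence, exactly as in the proof of Proposition~\ref{prop:bestresponse_appendix}.
\end{enumerate}
Hence $G$ is jointly continuous on $\mathcal{X}_i\times\mathcal{A}_i$.

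\textbf{Continuity of the supremum.} Because $\mathcal{A}_i$ is compact (Assumption~\ref{ass:compact_metric}), the constant correspondence $x_i\mapsto\mathcal{A}_i$ is continuous with nonempty compact values. Berge's maximum theorem then makes $x_i\mapsto\max_{a_i}G(x_i,a_i)$ continuous; by Proposition~\ref{prop:bestresponse_appendix} the supremum is attained, so it is a maximum. To stay self-contained, the same conclusion follows directly. $\mathcal{X}_i\times\mathcal{A}_i$ is compact (Assumptions~\ref{ass:compact_metric} and~\ref{ass:compact_state}), so $G$ is uniformly continuous there. Let $\omega$ be its modulus of continuity in the $x$-coordinate, uniformly over $a$. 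Then
\[
\bigl|(T_iV)(x)-(T_iV)(y)\bigr|\le\sup_{a_i}\bigl|G(x,a_i)-G(y,a_i)\bigr|\le\omega\bigl(d(x,y)\bigr)\to0 .
\]

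\textbf{Main obstacle.} The hard step is the dependence of the rival's action $\sigma_j(x_j)$ on $x_i$, not the maximization. The statement fixes $\sigma_j$ as merely measurable, and Assumptions~\ref{ass:bounded_return} and~\ref{ass:weak_transition} give continuity only in $(x_i,a_i,a_j)$ jointly. I would handle this in one of two ways:
\begin{enumerate}[label=(\alph*)]
\item state explicitly that the induced rival action (or the mixed action in $\Delta(\mathcal{A}_j)$ under Assumption~\ref{ass:mixed}) is a continuous function of $x_i$; or
\item absorb $\sigma_j$ into reduced primitives $\tilde r_i,\tilde Q_i$ and require those to be jointly continuous.
\end{enumerate}
In the mixed case, integrating $r_i$ against a weakly continuous family of rival measures preserves continuity, by the same weak-convergence argument as in (ii). With this qualification made explicit, the remaining steps are routine.
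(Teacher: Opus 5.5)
Your argument is the same as the paper's: define $G(x_i,a_i)$, get boundedness from $\|r_i\|_\infty$ and $\|V\|_\infty$, obtain joint continuity of $G$ from Assumptions~\ref{ass:bounded_return} and~\ref{ass:weak_transition}, and conclude via the maximum theorem using compactness of $\mathcal{A}_i$. The one difference favors you: the paper asserts joint continuity of $G$ in $(x_i,a_i)$ without saying how $\sigma_j(x_j)$ varies with $x_i$ when $\sigma_j$ is merely measurable, so the continuity hypothesis you state explicitly is genuinely needed and is silently assumed there.
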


\begin{proof}
Fix $V \in C_b(\mathcal{X}_i)$. For each $(x_i,a_i)$ define
\[
G(x_i,a_i)
=
r_i\bigl(x_i,a_i,\sigma_j(x_j)\bigr)
+
\delta \int_{\mathcal{X}_i} V(x_i')\,Q_i(dx_i' \mid x_i,a_i,\sigma_j(x_j)).
\]
By Assumptions \ref{ass:bounded_return} and \ref{ass:weak_transition}, the function $G$ is jointly continuous in $(x_i,a_i)$. Since $\mathcal{A}_i$ is compact, the maximum theorem implies that
\[
x_i \mapsto \sup_{a_i \in \mathcal{A}_i} G(x_i,a_i)
\]
is continuous. Boundedness follows from boundedness of $r_i$ and $V$. Therefore $T_iV \in C_b(\mathcal{X}_i)$.
\end{proof}

\begin{lemma}[Bellman operator is a contraction]
\label{lem:contraction}
Under Assumptions \ref{ass:compact_metric}--\ref{ass:discount}, the operator $T_i$ is a contraction on $C_b(\mathcal{X}_i)$ with modulus $\delta$.
\end{lemma}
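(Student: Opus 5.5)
The plan is to prove the standard Blackwell-type estimate directly: for any $V,W \in C_b(\mathcal{X}_i)$,
\[
\|T_iV - T_iW\|_\infty \le \delta \|V-W\|_\infty .
\]
Lemma~\ref{lem:preserve} already guarantees that $T_i$ maps $C_b(\mathcal{X}_i)$ into itself, so only this Lipschitz bound remains. Combined with $0<\delta<1$ (Assumption~\ref{ass:discount}), it gives the contraction property.

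\textbf{Step 1 (pointwise comparison of objectives).} Fix $x_i$ and write $G_V(x_i,a_i)$ for the bracketed objective in the definition of $T_i$ evaluated at $V$, and similarly $G_W$. The return term $r_i$ is identical in both and cancels in the difference. For every $a_i$,
\[
|G_V(x_i,a_i)-G_W(x_i,a_i)| = \delta\left|\int_{\mathcal{X}_i} (V-W)(x_i')\,Q_i(dx_i'\mid x_i,a_i,\sigma_j(x_j))\right|.
\]
Since $Q_i(\cdot\mid\cdot)$ is a probability measure, this is at most $\delta\|V-W\|_\infty$. Integrability is not an issue because $V-W$ is bounded and continuous, hence Borel measurable.

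\textbf{Step 2 (passing through the supremum).} I will use the elementary inequality
\[
\left|\sup_{a} f(a) - \sup_{a} g(a)\right| \le \sup_{a} |f(a)-g(a)| ,
\]
valid for bounded real functions $f,g$ on $\mathcal{A}_i$. To check it, note that $f(a) \le g(a) + \sup_{a'}|f(a')-g(a')| \le \sup g + \sup|f-g|$ for every $a$; taking the supremum over $a$ and then swapping the roles of $f$ and $g$ gives the claim. Boundedness holds because $r_i$, $V$ and $W$ are bounded. Alternatively, Proposition~\ref{prop:bestresponse_appendix} shows the suprema are attained, so one can argue with maximizers, but the inequality avoids needing that. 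Applying it with $f=G_V(x_i,\cdot)$ and $g=G_W(x_i,\cdot)$ yields
\[
|(T_iV)(x_i)-(T_iW)(x_i)|\le \delta\|V-W\|_\infty .
\]

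\textbf{Step 3 (conclusion).} Take the supremum over $x_i\in\mathcal{X}_i$ to obtain the sup-norm bound. Since $\delta<1$, $T_i$ is a contraction with modulus $\delta$.

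There is no real obstacle. The one point needing care is that the rival-action argument $\sigma_j(x_j)$ is held fixed and is common to $G_V$ and $G_W$, so that $r_i$ cancels exactly. If $\sigma_j$ is mixed, $r_i$ and $Q_i$ are understood as integrated against it; $Q_i$ is still a probability kernel and the same bound holds. An equivalent route is to verify Blackwell's sufficient conditions, monotonicity and discounting ($T_i(V+c)=T_iV+\delta c$), but the direct estimate above is shorter.
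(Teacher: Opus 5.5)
Your proposal is correct and follows essentially the same route as the paper: you bound the difference of the continuation integrals by $\|V-W\|_\infty$ because $Q_i$ is a probability kernel and $r_i$ cancels, pass that bound through the supremum over $a_i$, and then take the supremum over $x_i$. Your explicit check of $|\sup f-\sup g|\le\sup|f-g|$ and your appeal to Lemma~\ref{lem:preserve} for the self-map property make explicit what the paper's proof leaves implicit.
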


\begin{proof}
Let $V,W \in C_b(\mathcal{X}_i)$ and fix $x_i \in \mathcal{X}_i$. For every $a_i \in \mathcal{A}_i$,
\[
\left|
\int V(x_i')\,Q_i(dx_i' \mid x_i,a_i,\sigma_j(x_j))
-
\int W(x_i')\,Q_i(dx_i' \mid x_i,a_i,\sigma_j(x_j))
\right|
\le
\|V-W\|_\infty.
\]
Therefore,
\[
\left|
r_i(\cdot)+\delta \int V\,dQ_i
-
\Bigl(r_i(\cdot)+\delta \int W\,dQ_i\Bigr)
\right|
\le
\delta \|V-W\|_\infty.
\]
Taking suprema over $a_i$ preserves this bound, yielding
\[
|(T_iV)(x_i)-(T_iW)(x_i)|
\le
\delta \|V-W\|_\infty.
\]
Now take the supremum over $x_i \in \mathcal{X}_i$ to obtain
\[
\|T_iV-T_iW\|_\infty \le \delta \|V-W\|_\infty.
\]
Since $0<\delta<1$, $T_i$ is a contraction.
\end{proof}

\begin{corollary}[Existence and uniqueness of the value function]
\label{cor:value}
Under Assumptions \ref{ass:compact_metric}--\ref{ass:discount}, for fixed rival strategy $\sigma_j$, there exists a unique value function $V_i^\ast \in C_b(\mathcal{X}_i)$ satisfying
\[
T_i V_i^\ast = V_i^\ast.
\]
\end{corollary}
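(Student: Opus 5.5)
The corollary follows from the Banach fixed point theorem applied to $T_i$ on $C_b(\mathcal{X}_i)$. The plan is to check its three hypotheses in turn: that $C_b(\mathcal{X}_i)$ is complete, that $T_i$ maps it into itself, and that $T_i$ is a strict contraction.

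First, I will record that $C_b(\mathcal{X}_i)$ with the sup norm $\|\cdot\|_\infty$ is a Banach space. A Cauchy sequence in sup norm converges uniformly. The uniform limit of continuous functions is continuous, and it is bounded because a uniform limit of bounded functions is bounded. So the limit stays in $C_b(\mathcal{X}_i)$. This holds for any metric space $\mathcal{X}_i$, and compactness of $\mathcal{X}_i$ (Assumption~\ref{ass:compact_state}) only makes boundedness automatic.

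Second, Lemma~\ref{lem:preserve} gives that $T_i$ is a well-defined self-map of $C_b(\mathcal{X}_i)$. Lemma~\ref{lem:contraction} gives $\|T_iV-T_iW\|_\infty\le\delta\|V-W\|_\infty$, and by Assumption~\ref{ass:discount} we have $\delta<1$. The Banach contraction mapping theorem then yields a unique $V_i^\ast\in C_b(\mathcal{X}_i)$ with $T_iV_i^\ast=V_i^\ast$. Moreover, $T_i^nV\to V_i^\ast$ geometrically from any starting point $V$, which gives value iteration as a by-product.

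Uniqueness is immediate. If $V$ and $W$ are both fixed points, then $\|V-W\|_\infty=\|T_iV-T_iW\|_\infty\le\delta\|V-W\|_\infty$, which forces $V=W$.

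The main obstacle is not the fixed-point step itself but the well-definedness of $T_i$, which the corollary inherits from Lemma~\ref{lem:preserve}. The operator contains the rival action $\sigma_j(x_j)$, while $T_i$ is a function of $x_i$ alone. For the argument to go through, the rival's induced (possibly mixed) action must be expressible as a function of $x_i$. I will handle this with the convention implicit in Lemmas~\ref{lem:preserve}--\ref{lem:contraction}: absorb $\sigma_j$ into the reduced return $\tilde r_i(x_i,a_i)$ and kernel $\tilde Q_i(\cdot\mid x_i,a_i)$, obtained by integrating over the rival's action and the conditional law of $x_j$. I will then require these reduced objects to be continuous in $(x_i,a_i)$ and weakly continuous, respectively. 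Under that reading, both cited lemmas apply verbatim, and the corollary follows directly.
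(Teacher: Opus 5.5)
Your proof is correct and follows the paper's argument: Lemma~\ref{lem:preserve} gives the self-map property, Lemma~\ref{lem:contraction} gives the $\delta$-contraction, and with completeness of $C_b(\mathcal{X}_i)$ the Banach fixed-point theorem yields the unique fixed point. Your explicit completeness check is a sensible addition. So is your caveat that the rival's action $\sigma_j(x_j)$ must be absorbed into a reduced return and kernel that are continuous in $(x_i,a_i)$, since this states what the paper's Lemma~\ref{lem:preserve} tacitly assumes.
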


\begin{proof}
By Lemma \ref{lem:preserve}, $T_i$ maps $C_b(\mathcal{X}_i)$ into itself. By Lemma \ref{lem:contraction}, $T_i$ is a contraction on the complete metric space $C_b(\mathcal{X}_i)$. The Banach fixed-point theorem therefore yields a unique fixed point.
\end{proof}

\subsection{A rigorous equilibrium existence theorem}

We now state a strengthened equilibrium existence result. This theorem is the mathematically rigorous version of the equilibrium-existence claim used in the main text.

\begin{theorem}[Existence of a credible-risk Markov perfect Bayesian equilibrium under strengthened assumptions]
\label{thm:equilibrium_appendix}
Suppose Assumptions \ref{ass:compact_metric}--\ref{ass:convex} hold. Suppose further that:
\begin{enumerate}[label=(\roman*)]
    \item for each player, the continuation value is determined by the fixed point of the Bellman operator described above;
    \item best-response correspondences are nonempty for every rival strategy;
    \item best-response correspondences are upper hemicontinuous in the product weak topology on mixed strategy spaces;
    \item best-response correspondences are convex-valued.
\end{enumerate}
Then there exists at least one mixed-strategy credible-risk Markov perfect Bayesian equilibrium.
\end{theorem}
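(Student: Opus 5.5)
The plan is to obtain the equilibrium as a fixed point of the joint best-response correspondence, applying the Kakutani--Fan--Glicksberg theorem on the space of mixed Markov strategy profiles.

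\textbf{Step 1: the strategy space.} Let $\Sigma_i$ be the set of measurable maps $\sigma_i:\mathcal{X}_i\to\Delta(\mathcal{A}_i)$. By Assumption~\ref{ass:mixed}, $\Delta(\mathcal{A}_i)$ is compact and convex in the weak topology, since $\mathcal{A}_i$ is compact metric (Prokhorov). I would identify $\Sigma_i$ with a convex subset of a locally convex space. The natural choice is Young measures / transition probabilities in the narrow (stable) topology relative to a fixed reference measure on $\mathcal{X}_i$, under which $\Sigma_i$ is compact. Alternatively, I would read hypothesis (iii) as saying that the topology on $\Sigma=\Sigma_1\times\Sigma_2$ is already chosen to be compact and convex, since the statement refers to ``the product weak topology on mixed strategy spaces.''

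\textbf{Step 2: best responses.} For fixed $\sigma_j$, extend $r_i$ and $Q_i$ to mixed actions by integration against $\sigma_j(x_j)$. Joint continuity is preserved, because integrating a bounded continuous function against weakly converging measures preserves continuity. Hence Corollary~\ref{cor:value} gives the unique value function $V_i^{\ast}[\sigma_j]\in C_b(\mathcal{X}_i)$, and this is hypothesis (i). Define
\[
BR_i(\sigma_j)=\{\sigma_i\in\Sigma_i:\ \sigma_i(x_i)\ \text{is supported on}\ \operatorname{argmax}_{a_i}\Phi_i(a_i;x_i,\sigma_j)\ \text{for all}\ x_i\},
\]
where $\Phi_i$ is as in Proposition~\ref{prop:bestresponse_appendix}, with $V_i=V_i^{\ast}[\sigma_j]$.

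By Proposition~\ref{prop:bestresponse_appendix}, the argmax is nonempty and compact. By the maximum theorem (as in Lemma~\ref{lem:preserve}), the argmax correspondence is upper hemicontinuous in $x_i$. A measurable selection (Kuratowski--Ryll-Nardzewski) then gives $BR_i(\sigma_j)\neq\emptyset$, consistent with (ii). Convexity follows because mixtures of measures supported on the argmax are still supported there; this is hypothesis (iv) and Assumption~\ref{ass:convex}.

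\textbf{Step 3: fixed point.} Set $BR(\sigma)=BR_1(\sigma_2)\times BR_2(\sigma_1)$. It is nonempty and convex-valued. With hypothesis (iii), it is upper hemicontinuous with closed values, so its graph is closed in the compact space $\Sigma$. Kakutani--Fan--Glicksberg then yields $\sigma^{\ast}\in BR(\sigma^{\ast})$.

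\textbf{Step 4: verifying the definition.} Since $V_i^{\ast}[\sigma_j^{\ast}]$ solves the Bellman equation and $\sigma_i^{\ast}$ attains the sup at every state, the standard verification argument shows that $\sigma_i^{\ast}$ is optimal from every state. It uses iterated conditioning, the $\delta$-contraction of Lemma~\ref{lem:contraction}, and the one-shot deviation principle, and it gives condition (c) of the CR-MPBNE definition. Condition (b) holds by construction. Condition (a) holds because the states are posterior hyperparameters generated by Bayes' rule through $Q_i$.

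\textbf{Main obstacle.} The hard part is Step 1 together with hypothesis (iii). Upper hemicontinuity requires the value functions $V_i^{\ast}[\sigma_j]$ to depend continuously on $\sigma_j$. This is delicate because the maps $x\mapsto\sigma_j(x)$ are only measurable, and the topology on $\Sigma_j$ must be compact yet strong enough to pass limits through $Q_i$. I would first try to prove convergence of $V_i^{\ast}[\sigma_j^{(n)}]$ uniformly on compacts by combining the contraction estimate with dominated convergence under the stable topology. If that fails, I would fall back on the assumed hypothesis (iii), as the statement permits.
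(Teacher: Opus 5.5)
Your proposal is correct at the paper's level of rigor and follows essentially its route: a fixed point of the joint correspondence $\mathcal{B}_1(\sigma_2)\times\mathcal{B}_2(\sigma_1)$ on the product of mixed Markov strategy spaces, with compactness of $\Delta(\mathcal{A}_i)$ from Prokhorov, nonemptiness from Proposition~\ref{prop:bestresponse_appendix} and Corollary~\ref{cor:value}, and upper hemicontinuity and convex values taken from hypotheses (iii)--(iv). Two things you do improve on the paper: you use Kakutani--Fan--Glicksberg where the paper cites plain Kakutani on an infinite-dimensional space, and you flag compactness of the measurable strategy maps as the delicate step that is really being assumed; two cautions, shared with the paper, are that inserting a merely measurable $\sigma_j$ into $r_i$ does not preserve joint continuity in $(x_i,a_i)$ (the same unjustified step appears in Lemma~\ref{lem:preserve}), so nonemptiness actually rests on hypotheses (i)--(ii), and that your Step~4 establishes optimality for the recursive criterion in \eqref{eq:bellman}, not for the mean--standard-deviation functional $\mathcal{J}_i$ on the whole discounted stream, which is not time-consistent.
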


\begin{proof}
Let
\[
\Sigma_i
\]
denote player $i$'s set of measurable mixed Markov strategies from $\mathcal{X}_i$ into $\Delta(\mathcal{A}_i)$. Under Assumptions \ref{ass:compact_metric}, \ref{ass:compact_state}, and \ref{ass:mixed}, the mixed action spaces are compact in the weak topology by Prokhorov's theorem, and the corresponding strategy spaces are compact in the product weak topology under the usual measurable-selection framework.

Define the best-response correspondence
\[
\mathcal{B}_i:\Sigma_j \rightrightarrows \Sigma_i
\]
by assigning to each rival strategy the set of player-$i$ strategies that maximize the discounted credible-risk value. By Proposition \ref{prop:bestresponse_appendix} and Corollary \ref{cor:value}, each $\mathcal{B}_i(\sigma_j)$ is nonempty. By assumption, each $\mathcal{B}_i$ is upper hemicontinuous and convex-valued.

Now define the joint best-response correspondence
\[
\mathcal{B}(\sigma_1,\sigma_2)
=
\mathcal{B}_1(\sigma_2)\times \mathcal{B}_2(\sigma_1)
\]
from $\Sigma_1\times\Sigma_2$ into itself. The product strategy space is compact and convex, and $\mathcal{B}$ has nonempty, convex, compact values and is upper hemicontinuous. Therefore Kakutani's fixed-point theorem applies, yielding a point
\[
(\sigma_1^\ast,\sigma_2^\ast)\in \Sigma_1\times\Sigma_2
\]
such that
\[
(\sigma_1^\ast,\sigma_2^\ast)\in \mathcal{B}(\sigma_1^\ast,\sigma_2^\ast).
\]
By construction, each player's strategy is a best response to the other's and is measurable with respect to the current belief state. Hence the profile is a mixed-strategy credible-risk Markov perfect Bayesian equilibrium.
\end{proof}



\bibliographystyle{plainnat}
\bibliography{bayesian_game_inventory}

\end{document}